\documentclass[a4paper,onecolumn,11pt,unpublished]{quantumarticle}
\pdfoutput=1

\usepackage[utf8]{inputenc}
\usepackage{amsmath,amsthm,mathrsfs}
\usepackage{amssymb}
\usepackage{bbold}
\usepackage{xparse}
\usepackage{physics}
\usepackage{hyperref}
\usepackage{mathtools}
\usepackage{tikz}
\usepackage{tikz-cd}
\usepackage{tikzit}
\input{styles.tikzdefs}

\tikzstyle{white}=[fill=white, draw=black, shape=circle]
\tikzstyle{thickwhite}=[fill=white,very thick, draw=black, shape=circle]
\tikzstyle{black}=[fill=black, draw=black, shape=circle]
\tikzstyle{sqr}=[fill=white, draw=black, shape=rectangle]
\tikzstyle{wgrey}=[fill=white, draw={rgb,255: red,191; green,191; blue,191}, shape=circle]
\tikzstyle{bgrey}=[fill={rgb,255: red,191; green,191; blue,191}, draw={rgb,255: red,191; green,191; blue,191}, shape=circle]

\tikzstyle{discarding}=[fill=white, draw=black, shape=circle, style=upground, scale = 1.25]
\tikzstyle{smalldiscarding}=[fill=white, draw=black, style=upground, scale=0.75]
\tikzstyle{backdiscard}=[fill=white, draw=black, shape=circle, style=downground, scale=0.5]
\tikzstyle{smallbackdiscard}=[fill=white, draw=black, shape=circle, style=downground, scale=0.5]
\tikzstyle{state}=[fill=white, draw=black, style=triang, tikzit shape=rectangle]
\tikzstyle{kstate}=[fill=white, draw=black, style=kpoint, tikzit shape=rectangle]
\tikzstyle{kstateconj}=[fill=white, draw=black, style=kpoint conjugate, tikzit shape=rectangle]
\tikzstyle{kstateBIG}=[fill=white, draw=black, style=big kpoint, tikzit shape=rectangle]
\tikzstyle{effect}=[fill=white, draw=black, style=triangdag]
\tikzstyle{keffect}=[fill=white, draw=black, style=kpoint adjoint]
\tikzstyle{keffectconj}=[fill=white, draw=black, style=kpoint transpose]
\tikzstyle{morphdag}=[style=mapdag]
\tikzstyle{morph}=[style=hadamard]
\tikzstyle{WIDEmorph}=[style=hadamard, minimum width=14mm]
\tikzstyle{morphtrans}=[style=maptrans]
\tikzstyle{morphconj}=[style=mapconj]
\tikzstyle{CPMmorph}=[style=dmap]
\tikzstyle{CPMmorphconj}=[style=dmapconj]
\tikzstyle{CPMmorphdag}=[style=dmapdag]
\tikzstyle{CPMmorphtrans}=[style=dmaptrans]
\tikzstyle{CPMstate}=[fill=white, draw=black, style=triang, doubled]
\tikzstyle{CPMstateBIG}=[fill=white, draw=black, style={triang_lesssep}, doubled]
\tikzstyle{CPMkstate}=[fill=white, draw=black, style=kpoint, tikzit shape=rectangle, doubled]
\tikzstyle{CPMkstateconj}=[fill=white, draw=black, style=kpoint conjugate, tikzit shape=rectangle, doubled]
\tikzstyle{CPMkstateBIG}=[fill=white, draw=black, style=big kpoint, tikzit shape=rectangle, doubled]
\tikzstyle{CPMkeffect}=[fill=white, draw=black, style=kpoint adjoint, doubled]
\tikzstyle{CPMkeffectconj}=[fill=white, draw=black, style=kpoint transpose, doubled]
\tikzstyle{UHfB}=[fill=white, draw=black, style=triangdag, doubled, inner sep=-2pt]
\tikzstyle{leak}=[style=tinypoint, regular polygon rotate=-90]
\tikzstyle{leakfill}=[style=tinypoint, regular polygon rotate=-90, fill=black]
\tikzstyle{Z}=[style=dot, fill=green]
\tikzstyle{X}=[style=dot, fill=red]
\tikzstyle{black_dot}=[style=dot, fill=black]
\tikzstyle{white_dot}=[style=dot, fill=white]
\tikzstyle{qblack_dot}=[style=ddot, fill=black]
\tikzstyle{qwhite_dot}=[style=ddot, fill=white]
\tikzstyle{whitephase}=[style=wphase dot, fill=white]
\tikzstyle{qredphase}=[style=phase dot, fill=red]
\tikzstyle{qgreenphase}=[style=phase dot, fill=green]
\tikzstyle{had}=[style=hadamard, doubled]
\tikzstyle{box}=[style=hadamard]
\tikzstyle{bigbox}=[style=hadamard, minimum height=4mm, minimum width=8mm]
\tikzstyle{classhad}=[style=hadamard]
\tikzstyle{antipode}=[style=anti]

\tikzstyle{dotted zigzag}=[-, dotted, decorate, decoration={zigzag}]
\tikzstyle{zigzag edge}=[-, decorate, decoration={zigzag}]
\tikzstyle{zigzag custom}=[-, decorate, decoration={zigzag, segment length=4pt, amplitude=2pt}]
\tikzstyle{dottededge}=[-, dash pattern=on 1pt off 0.7pt]
\tikzstyle{double edge}=[-, style=doubled, draw=black, tikzit draw={rgb,255: red,18; green,168; blue,191}]
\tikzstyle{arrow}=[->]
\tikzstyle{new edge style 1}=[-, draw={rgb,255: red,242; green,233; blue,206}, fill={rgb,255: red,242; green,233; blue,206}]
\tikzstyle{morphism_shade}=[-, draw=black, fill={rgb,255: red,242; green,233; blue,206}, line join=bevel]
\tikzstyle{supermap_shade}=[-, fill={rgb,255: red,216; green,215; blue,242}, draw=black, line join=bevel]
\tikzstyle{hole_shade}=[-, fill=white, draw=black, line join=bevel]
\tikzstyle{new edge style 2}=[-, draw={rgb,255: red,14; green,188; blue,83}]
\tikzstyle{new edge style 3}=[<-, draw={rgb,255: red,234; green,209; blue,255}]
\tikzstyle{new edge style 4}=[<-, draw={rgb,255: red,0; green,106; blue,106}]
\tikzstyle{new edge style 5}=[-, draw={rgb,255: red,214; green,110; blue,62}]
\tikzstyle{new edge style 6}=[-, draw={rgb,255: red,174; green,20; blue,174}]
\tikzstyle{new edge style 0}=[-, fill=none, draw={rgb,255: red,0; green,106; blue,106}]

\usepackage{cite}

\usepackage{graphicx}   
\usepackage{subcaption} 

\usepackage{macrosOctave}

\begin{document}

\title{Non-factor quantum dynamics and relativity}

\author{Octave Mestoudjian}
\affiliation{Université Paris-Saclay, Inria, CNRS, LMF, 91190 Gif-sur-Yvette, France}

\author{Matt Wilson}
\affiliation{Université Paris-Saclay, CentraleSupélec, Inria, CNRS, LMF, 91190 Gif-sur-Yvette, France}

\maketitle

\begin{abstract}


We consider the effect of relativistic principles on the structure of the local evolutions of non-factor quantum systems such as those which naturally arise in quantum spacetimes and causal structures.
Generalising a former result that had been given for factor systems, we show that any evolution which can be applied locally on a direct sum of factor systems and without producing superluminal signals must be linear, unitary, and respect the block decomposition of the system.
\end{abstract}


\section{Introduction}

In an attempt to understand the particular features of quantum gravity, one can wonder how the compatibility with relativity should affect the possible evolutions of a quantum system. This question of rederiving or modifying the properties of quantum evolutions has a long history that spans in particular spontaneous collapse theories attempting to avoid a measurement problem \cite{GRW, PearleReductionAndLocalization, GPR, DIOSI1987377, DiosiUniversalReduction, PenroseGravity’sRole, PenroseGravitizationQM, GGPRelativisticDynamicalReduction, RelativisticGRW} as well as theories of semi-classical gravity \cite{WEINBERG1989336, DIOSI1984199, EDvsID, WeinbergPrecision, Carlip_2008, NonlinearEvolutionandSignaling, KentTestinQG, RayNoSignaling, Stamp_2015, Barvinsky_Structure, Barvinsky_Correlated, Wilson-Gerow_Propagators, OppenheimPostquantum} (where the matter and energy fields are quantum but the gravitational field is classical). Part of that history is also Gisin's influential article \cite{Gisin_1989} in which he argues that Schrödinger dynamics are the only evolution sending pure states to pure states compatible with the relativistic ``no superluminal signalling" principle. 

There are a variety of responses that can be made to the argument of Gisin. In \cite{Kent_2005} an alternative way to model the progression of the values of states of entangled subsystems is put forward, in which evolutions can be non-linear at the cost of abandoning a consistent global description for the evolution of the quantum state. 
A recent work \cite{wilson2023originlinearityunitarityquantum} extends arguments against non-linear evolution to the single-shot (discrete time) setting, proving that any local transformation which induces a consistent global state evolution without superluminal signalling must be linear and unitary. 




All these approaches limit the subsystems of consideration to those which factorise into tensor products, and yet, subsystems modelled by non-factor algebras arise very naturally in foundations of physics, in particular at the very intersection between relativity theory and quantum theory. Indeed, within quantum gravity, considering that the geometry of space-time could itself be in a superposition makes some very simple systems non-factor \cite{Arrighi2024quantumnetworks, Bianchi2023}. Similarly, the causal structure of some quantum evolutions can only be derived from their compositional and local properties by mixing together the tensor product and direct sum \cite{lorenz_unitary, vanrietvelde2023consistentcircuitsindefinitecausal, Barrett2021} which is equivalent to considering partitions into non-factor subsystems \cite{vanrietvelde2025partitionsquantumtheory, vanrietvelde2025causaldecompositions1dquantum}. This raises the question of what are the constraints imposed by relativity on the evolutions of these non-factor subsystems.

Recent tools \cite{mestoudjian2026picturinggeneralquantumsubsystems} were developed for treating finite-dimensional non-factor systems on an equal footing with the factor case and allowing graphical reasoning techniques from categorical quantum mechanics \cite{coecke_kissinger_2017, heunen_categories, abramsky_coecke} to be applied to non-factor systems. In this article, we use these tools to derive the form of dynamics compatible with relativity on non-factor subsystems. We find that these evolutions should act linearly and untarily on states but furthermore in the non-factor case they must block decompose, thereby deriving routed quantum evolutions \cite{vanrietvelde2023consistentcircuitsindefinitecausal} as the most general possible dynamics of non-factor subsystems compatible with relativity. From a purely algebraic point of view we obtain that the evolutions of non-factor systems compatible with relativity are inner *-automorphisms of the corresponding finite-dimensional von Neumann algebra. Finally we tease apart the independent consequences of relativistic compatibility, demonstrating that imposing locality without superluminal signalling leads to a restricted class of non-linear quantum dynamics in which deterministic post-selection can be arbitrarily-well approximated.


\section{The result in the factor case}\label{resultfactorcase}

In this section, we remind the reader about the context of the result of \cite{wilson2023originlinearityunitarityquantum}. We place ourselves in the usual formalism of factor systems where they are represented by Hilbert spaces and are composed through the tensor product, meaning that given two systems $A$ and $B$ associated to Hilbert spaces $\ch_A$ and $\ch_B$ respectively, $AB$ will correspond to the Hilbert space $\ch_A \otimes \ch_B$. Given a system $A$, we will write $S_A = \{ \ket{\psi} \in \ch_A, \Vert \ket{\psi} \Vert = 1\}$ the set of normalised states of the system $A$.

Let us now fix a system $A$. 
A locally applicable evolution of the system $A$ is the data of a family of functions $(\ml_X : S_{AX} \rightarrow S_{AX})_{X}$, indexed by the possible environments $X$ to the system $A$, sending the (normalised) states of $AX$ to (normalised) states of $AX$, and that satisfy the three following axioms:

\begin{itemize}
\item Local applicability. Given any two systems $X,X'$ and states $\ket{\psi} \in S_{AX}$ and $\ket{\phi} \in S_{X'}$,
\be
\ml_{XX'}(\ket{\psi}\otimes \ket{\phi}) = \ml_X(\ket{\psi})\otimes \ket{\phi} \, ;
\ee
we consider that it is possible to have an empty environment and thus that there exists a function $\ml_{[-]} : S_A \rightarrow S_A$ such that (choosing $X$ to be this empty environment in the previous equation),
\be
\ml_{X'}(\ket{\psi}\otimes \ket{\phi}) = \ml_{[-]}(\ket{\psi})\otimes \ket{\phi} \, ;
\ee
\item No superluminal signalling. Given any system $X$, any state $\ket{\psi} \in S_{AX}$ and any measurement result $m$ on $X$(by which we mean a projector of $\cl(\ch_X)$),
\be
\ml_X(\ket{\psi})^{\dagger}(\id \otimes m)\ml_X(\ket{\psi}) = \bra{\psi} (\id \otimes m) \ket{\psi} \, ;
\ee
\item Update commutativity. Given any system $X$, any state $\ket{\psi} \in S_{AX}$ and any measurement result $m$ on $X$,
\be
\frac{(\id \otimes \pi) \ml_X(\ket{\psi})}{\Vert (\id \otimes \pi) \ml_X(\ket{\psi}) \Vert} = \ml_X\left( \frac{(\id \otimes \pi)\ket{\psi}}{\Vert \id \otimes \pi)\ket{\psi} \Vert} \right) \, .
\ee
\end{itemize}

The main result of \cite{wilson2023originlinearityunitarityquantum} is to show that, for every locally-applicable transformation on pure quantum states there exists a unitary $U$ of $\cl(\ch_A)$ such that, for all $X$, $\ml_X = U \otimes \id_{\ch_X}$. In the rest of this article, we aim at generalising this result to the non-factor case. To do so, we need to be able to express a similar definition of locally applicable evolution of a non-factor system. We start by describing a new point of view on our theory of splitting maps, that generalises the compositional view and the tensor product of Hilbert spaces.

\section{Splitting maps and generalised tensors}

The mathematical proofs of this article are based on the formalism of splitting maps that was developed in \cite{mestoudjian2026picturinggeneralquantumsubsystems}. This formalism, that can be expressed in a decompositional way, when referring to splitting maps, but also in a more compositional way when referring to generalised tensors, allows one to see general quantum (sub)systems, i.e. von Neumann (sub)algebras, as local with respect to a decomposition of the underlying Hilbert space in the same way that it's done for usual quantum systems described by factor von Neumann algebras.

In the following we formalise the following intuitive picture of identifying subsystems by the legitimate ways of splitting global systems into parts:
\be
\begin{tikzpicture}
	\begin{pgfonlayer}{nodelayer}
		\node [style=none] (0) at (0, 1) {};
		\node [style=none] (1) at (0, -1) {};
		\node [style=none] (2) at (-1.5, 1.25) {};
		\node [style=none] (3) at (-2.25, 2.75) {};
		\node [style=none] (4) at (1.75, 1.25) {};
		\node [style=none] (5) at (2.25, 2.75) {};
		\node [style=none] (6) at (-2, 5.25) {};
		\node [style=none] (7) at (-2, 3.25) {};
		\node [style=none] (8) at (2, 5.25) {};
		\node [style=none] (9) at (2, 3.25) {};
	\end{pgfonlayer}
	\begin{pgfonlayer}{edgelayer}
		\draw [bend right=90, looseness=3.75] (0.center) to (1.center);
		\draw [bend left=90, looseness=3.75] (0.center) to (1.center);
		\draw [style=zigzag edge] (0.center) to (1.center);
		\draw [style=arrow] (2.center) to (3.center);
		\draw [style=arrow] (4.center) to (5.center);
		\draw [bend right=90, looseness=3.75] (6.center) to (7.center);
		\draw [style=zigzag edge] (6.center) to (7.center);
		\draw [bend left=90, looseness=3.75] (8.center) to (9.center);
		\draw [style=zigzag edge] (8.center) to (9.center);
	\end{pgfonlayer}
\end{tikzpicture}
\ee
When formalised in linear algebra, we refer to such splits as splitting maps.  

\begin{definition}[Splitting map and generalised tensor]
A {\em splitting map} $\chi$ on $\ch$ is an isometry $\chi : \ch \rightarrow \chlc \otimes \chrc$ where $\chlc$ and $\chrc$ are Hilbert spaces. A {\em generalised tensor} is the adjoint of a splitting map, i.e. a coisometry $\chi^{\dagger} : \chlc \otimes \chrc \rightarrow \ch$. We will write a splitting map $\chi$ diagrammatically as 
\begin{equation}
  \begin{tikzpicture}
	\begin{pgfonlayer}{nodelayer}
		\node [style=white] (0) at (0, 0) {$\chi$};
		\node [style=none] (1) at (-2, 2) {};
		\node [style=none] (2) at (2, 2) {};
		\node [style=none] (3) at (0, -1.75) {};
		\node [style=none] (4) at (-2.75, 2.25) {$\chlc$};
		\node [style=none] (5) at (2.75, 2.25) {$\chrc$};
		\node [style=none] (6) at (0, -2.25) {$\ch$};
	\end{pgfonlayer}
	\begin{pgfonlayer}{edgelayer}
		\draw [in=180, out=-90, looseness=1.25] (1.center) to (0);
		\draw [in=0, out=-90, looseness=1.25] (2.center) to (0);
		\draw (0) to (3.center);
	\end{pgfonlayer}
\end{tikzpicture}
\end{equation}
and we will write the associated generalised tensor as 
\begin{equation}
 \begin{tikzpicture}
	\begin{pgfonlayer}{nodelayer}
		\node [style=white] (0) at (3.25, 0) {$\chi$};
		\node [style=none] (1) at (1.25, -2) {};
		\node [style=none] (2) at (5.25, -2) {};
		\node [style=none] (3) at (3.25, 1.75) {};
		\node [style=none] (4) at (0.5, -2.25) {$\chlc$};
		\node [style=none] (5) at (6, -2.25) {$\chrc$};
		\node [style=none] (6) at (3.25, 2.25) {$\ch$};
	\end{pgfonlayer}
	\begin{pgfonlayer}{edgelayer}
		\draw [in=-180, out=90, looseness=1.25] (1.center) to (0);
		\draw [in=0, out=90, looseness=1.25] (2.center) to (0);
		\draw (0) to (3.center);
	\end{pgfonlayer}
\end{tikzpicture} \, .
\end{equation}
\end{definition}
Traditional tensors between a left an right system have the seemingly obvious property that any state from the left system can be tensored with any state from the right system. This property does not hold for generalised tensors, different left and right states may or may not fit together. 
\be
\begin{tikzpicture}
	\begin{pgfonlayer}{nodelayer}
		\node [style=none] (2) at (-2.25, 1.5) {};
		\node [style=none] (3) at (-1.5, 3) {};
		\node [style=none] (4) at (2.25, 1.5) {};
		\node [style=none] (5) at (1.75, 3) {};
		\node [style=none] (6) at (-2, -1) {};
		\node [style=none] (7) at (-2, 1) {};
		\node [style=none] (8) at (2, -1) {};
		\node [style=none] (9) at (2, 1) {};
		\node [style=none] (11) at (0, 4) {$0$};
	\end{pgfonlayer}
	\begin{pgfonlayer}{edgelayer}
		\draw [style=arrow] (2.center) to (3.center);
		\draw [style=arrow] (4.center) to (5.center);
		\draw [bend left=90, looseness=3.75] (6.center) to (7.center);
		\draw [style=zigzag custom] (6.center) to (7.center);
		\draw [bend right=90, looseness=3.75] (8.center) to (9.center);
		\draw [style=zigzag edge] (8.center) to (9.center);
	\end{pgfonlayer}
\end{tikzpicture}
\ee

Generalised tensors can be extended from states to operators. 
Let $\ket{\psi} \in \chlc$ and $\ket{\phi} \in \chrc$, we will write $\ket{\psi} \tc \ket{\phi} = \chi^{\dagger}(\ket{\psi} \otimes \ket{\phi}) \in \ch \, .$ Following naturally, if $A \in \cl(\chlc)$ and $B \in \cl(\chrc)$, we will write 
\be
A \tc B = \chi^{\dagger} (A \otimes B) \chi = \begin{tikzpicture}
	\begin{pgfonlayer}{nodelayer}
		\node [style=white] (0) at (0, -2) {$\chi$};
		\node [style=none] (1) at (-2, 0) {};
		\node [style=none] (2) at (2, 0) {};
		\node [style=none] (3) at (0, -3.75) {};
		\node [style=white] (4) at (0, 2) {$\chi$};
		\node [style=none] (5) at (-2, 0) {};
		\node [style=none] (6) at (2, 0) {};
		\node [style=none] (7) at (0, 3.75) {};
		\node [style=sqr] (8) at (-2, 0) {$A$};
		\node [style=sqr] (9) at (2, 0) {$B$};
	\end{pgfonlayer}
	\begin{pgfonlayer}{edgelayer}
		\draw [in=180, out=-90, looseness=1.25] (1.center) to (0);
		\draw [in=0, out=-90, looseness=1.25] (2.center) to (0);
		\draw (0) to (3.center);
		\draw [in=-180, out=90, looseness=1.25] (5.center) to (4);
		\draw [in=0, out=90, looseness=1.25] (6.center) to (4);
		\draw (4) to (7.center);
	\end{pgfonlayer}
\end{tikzpicture} \in \cl(\ch) \, .
\ee
General quantum systems arise from considering operators that are local with respect to one side of a splitting map/ generalised tensor.

\begin{definition}[Subsystems from splitting maps]
Let $\chi : \ch \rightarrow \chlc \otimes \chrc$ be a splitting map. Then one can define:
\begin{itemize}
\item The operator system $\loc_{\LL}(\chi)$ of the left {\em $\chi$-local} elements of $\cl(\ch)$, which are the operators of the form 
\be
\chi^{\dagger} \left(A \otimes \id\right) \chi = A \tc \id = \begin{tikzpicture}
	\begin{pgfonlayer}{nodelayer}
		\node [style=white] (0) at (3, -2.5) {$\chi$};
		\node [style=none] (1) at (1, 0) {};
		\node [style=none] (2) at (5, 0) {};
		\node [style=none] (3) at (3, -4) {};
		\node [style=none] (4) at (0.75, -3.25) {$\chlc$};
		\node [style=none] (5) at (5.75, 0) {$\chrc$};
		\node [style=none] (6) at (3, -5) {$\ch$};
		\node [style=sqr] (7) at (1, 0) {$A$};
		\node [style=white] (9) at (3, 2.5) {$\chi$};
		\node [style=none] (10) at (3, 4) {};
		\node [style=none] (15) at (3, 5) {$\ch$};
	\end{pgfonlayer}
	\begin{pgfonlayer}{edgelayer}
		\draw [in=180, out=-90, looseness=1.25] (1.center) to (0);
		\draw [in=0, out=-90, looseness=1.25] (2.center) to (0);
		\draw (0) to (3.center);
		\draw [in=90, out=180, looseness=1.25] (9) to (7);
		\draw (9) to (10.center);
		\draw [in=90, out=0, looseness=1.25] (9) to (2.center);
	\end{pgfonlayer}
\end{tikzpicture} \, .
\ee
\item The von Neumann algebra $\cons_{\LL}(\chi)$ of the left {\em $\chi$-consistent} elements in $\cl\left(\chlc\right)$, which are the operators $A$ such that $(A \otimes \id)\chi \chi^{\dagger} = \chi \chi^{\dagger}  (A \otimes \id)$ or, written diagrammatically,
\begin{equation}
 \begin{tikzpicture}
	\begin{pgfonlayer}{nodelayer}
		\node [style=white] (0) at (4, -3.25) {$\chi$};
		\node [style=none] (1) at (2, -1.25) {};
		\node [style=none] (2) at (6, -1.25) {};
		\node [style=none] (3) at (4, -4.25) {};
		\node [style=sqr] (7) at (2, -1.25) {$A$};
		\node [style=white] (8) at (4, -5.25) {$\chi$};
		\node [style=none] (9) at (2, -7.25) {};
		\node [style=none] (10) at (6, -7.25) {};
		\node [style=none] (11) at (2, 0) {};
		\node [style=none] (12) at (6, 0) {};
		\node [style=none] (13) at (0, -3.5) {$=$};
		\node [style=white] (14) at (-4, -2) {$\chi$};
		\node [style=none] (15) at (-6, 0) {};
		\node [style=none] (16) at (-2, 0) {};
		\node [style=none] (17) at (-4, -3) {};
		\node [style=white] (21) at (-4, -4) {$\chi$};
		\node [style=none] (23) at (-2, -6) {};
		\node [style=sqr] (26) at (-6, -6) {$A$};
		\node [style=none] (27) at (-6, -7.25) {};
		\node [style=none] (28) at (-2, -7.25) {};
		\node [style=none] (29) at (8, -3.5) {.};
	\end{pgfonlayer}
	\begin{pgfonlayer}{edgelayer}
		\draw [in=180, out=-90, looseness=1.25] (1.center) to (0);
		\draw [in=0, out=-90, looseness=1.25] (2.center) to (0);
		\draw (0) to (3.center);
		\draw (3.center) to (8);
		\draw [in=90, out=-180, looseness=1.25] (8) to (9.center);
		\draw [in=90, out=0, looseness=1.25] (8) to (10.center);
		\draw [in=180, out=-90, looseness=1.25] (15.center) to (14);
		\draw [in=0, out=-90, looseness=1.25] (16.center) to (14);
		\draw (14) to (17.center);
		\draw (17.center) to (21);
		\draw [in=90, out=0, looseness=1.25] (21) to (23.center);
		\draw [in=90, out=-180, looseness=1.25] (21) to (26);
		\draw (11.center) to (7);
		\draw (12.center) to (2.center);
		\draw (23.center) to (28.center);
		\draw (26) to (27.center);
	\end{pgfonlayer}
\end{tikzpicture} 
\end{equation}
\item The von Neumann algebra $\stloc_{\LL}$ of the left {\em strictly $\chi$-local} elements of $\cl(\ch)$, which are the operators of the form $\chi^{\dagger} \left(A \otimes \id\right) \chi = A \tc \id$ with $A \in \cons_{\LL}(\chi)$.
\end{itemize}
And right local/consistent/strictly local operators symmetrically.
\end{definition}

In this article, we will only work with a special class of splitting maps/generalised tensors called canonical and that are constructed using a representation theorem for finite dimensional von Neumann algebras.
\begin{definition}[Canonical splitting maps]
A splitting map $\chi$ is said canonical if there exists families of Hilbert spaces $\{\ch_{\LL}^{i} \}$ and $\{\ch_{\R}^{i} \}$ as well as a unitary $U : \ch \rightarrow \bigoplus_i (\ch_{\LL}^{i} \otimes \ch_{\R}^{i})$ such that
\begin{equation}
\begin{split}
\chi : \ch & \rightarrow \bigoplus_i (\ch_{\LL}^{i} \otimes \ch_{\R}^{i}) \hookrightarrow (\bigoplus_i \ch_{\LL}^{i}) \otimes (\bigoplus_i  \ch_{\R}^{i}) = \chlc \otimes \chrc \\ 
\ket{\psi} & \rightarrow U \ket{\psi}
\end{split}
\end{equation}
When we don't want to be precise about the unitary $U$, we will simply write $\ch \cong\bigoplus_i (\ch_{\LL}^{i} \otimes \ch_{\R}^{i})$ to say that the two spaces are (unitarily) isomorphic.
\end{definition}

It is possible to say that a splitting is bigger than another, for example when its left part contains the left part of the second one. For this comparison to be possible, it is necessary that the composition of the two splittings is well defined, i.e.\ that it doesn't matter which one is applied first and which one second.
\be
\begin{tikzpicture}
	\begin{pgfonlayer}{nodelayer}
		\node [style=none] (0) at (-1, -3.75) {};
		\node [style=none] (1) at (-1, -5.75) {};
		\node [style=none] (2) at (-2.5, -3.5) {};
		\node [style=none] (3) at (-3.25, -2) {};
		\node [style=none] (4) at (1.5, -3.5) {};
		\node [style=none] (5) at (2, -2) {};
		\node [style=none] (6) at (-3, 0.5) {};
		\node [style=none] (7) at (-3, -1.5) {};
		\node [style=none] (10) at (1, -3.75) {};
		\node [style=none] (11) at (1, -5.75) {};
		\node [style=none] (12) at (0.25, 0.5) {};
		\node [style=none] (13) at (0.25, -1.5) {};
		\node [style=none] (15) at (2.25, 0.5) {};
		\node [style=none] (16) at (2.25, -1.5) {};
		\node [style=none] (17) at (0.75, 1.25) {};
		\node [style=none] (18) at (0, 2.75) {};
		\node [style=none] (19) at (3.25, 1.25) {};
		\node [style=none] (20) at (3.75, 2.75) {};
		\node [style=none] (21) at (-1.25, 5.25) {};
		\node [style=none] (22) at (-1.25, 3.25) {};
		\node [style=none] (23) at (0.75, 5.25) {};
		\node [style=none] (24) at (0.75, 3.25) {};
		\node [style=none] (27) at (3.25, 5.25) {};
		\node [style=none] (28) at (3.25, 3.25) {};
	\end{pgfonlayer}
	\begin{pgfonlayer}{edgelayer}
		\draw [bend right=90, looseness=3.75] (0.center) to (1.center);
		\draw [style=zigzag edge] (0.center) to (1.center);
		\draw [style=arrow] (2.center) to (3.center);
		\draw [style=arrow] (4.center) to (5.center);
		\draw [bend right=90, looseness=3.75] (6.center) to (7.center);
		\draw [style=zigzag edge] (6.center) to (7.center);
		\draw (0.center) to (10.center);
		\draw (1.center) to (11.center);
		\draw [style=zigzag custom] (10.center) to (11.center);
		\draw [bend left=90, looseness=3.75] (10.center) to (11.center);
		\draw [style=zigzag edge] (12.center) to (13.center);
		\draw (12.center) to (15.center);
		\draw (13.center) to (16.center);
		\draw [style=zigzag custom] (15.center) to (16.center);
		\draw [bend left=90, looseness=3.75] (15.center) to (16.center);
		\draw [style=arrow] (17.center) to (18.center);
		\draw [style=arrow] (19.center) to (20.center);
		\draw [style=zigzag edge] (21.center) to (22.center);
		\draw (21.center) to (23.center);
		\draw (22.center) to (24.center);
		\draw [style=zigzag custom] (23.center) to (24.center);
		\draw [style=zigzag custom] (27.center) to (28.center);
		\draw [bend left=90, looseness=3.75] (27.center) to (28.center);
	\end{pgfonlayer}
\end{tikzpicture} \quad = \quad \begin{tikzpicture}
	\begin{pgfonlayer}{nodelayer}
		\node [style=none] (0) at (0.75, -3.75) {};
		\node [style=none] (1) at (0.75, -5.75) {};
		\node [style=none] (2) at (2.25, -3.5) {};
		\node [style=none] (3) at (3, -2) {};
		\node [style=none] (4) at (-1.75, -3.5) {};
		\node [style=none] (5) at (-2.25, -2) {};
		\node [style=none] (6) at (2.75, 0.5) {};
		\node [style=none] (7) at (2.75, -1.5) {};
		\node [style=none] (10) at (-1.25, -3.75) {};
		\node [style=none] (11) at (-1.25, -5.75) {};
		\node [style=none] (12) at (-0.5, 0.5) {};
		\node [style=none] (13) at (-0.5, -1.5) {};
		\node [style=none] (15) at (-2.5, 0.5) {};
		\node [style=none] (16) at (-2.5, -1.5) {};
		\node [style=none] (17) at (-1, 1.25) {};
		\node [style=none] (18) at (-0.25, 2.75) {};
		\node [style=none] (19) at (-3.5, 1.25) {};
		\node [style=none] (20) at (-4, 2.75) {};
		\node [style=none] (21) at (1, 5.25) {};
		\node [style=none] (22) at (1, 3.25) {};
		\node [style=none] (23) at (-1, 5.25) {};
		\node [style=none] (24) at (-1, 3.25) {};
		\node [style=none] (27) at (-3.5, 5.25) {};
		\node [style=none] (28) at (-3.5, 3.25) {};
	\end{pgfonlayer}
	\begin{pgfonlayer}{edgelayer}
		\draw [bend left=90, looseness=3.75] (0.center) to (1.center);
		\draw [style=zigzag custom] (0.center) to (1.center);
		\draw [style=arrow] (2.center) to (3.center);
		\draw [style=arrow] (4.center) to (5.center);
		\draw [bend left=90, looseness=3.75] (6.center) to (7.center);
		\draw [style=zigzag custom] (6.center) to (7.center);
		\draw (0.center) to (10.center);
		\draw (1.center) to (11.center);
		\draw [style=zigzag edge] (10.center) to (11.center);
		\draw [bend right=90, looseness=3.75] (10.center) to (11.center);
		\draw [style=zigzag custom] (12.center) to (13.center);
		\draw (12.center) to (15.center);
		\draw (13.center) to (16.center);
		\draw [style=zigzag edge] (15.center) to (16.center);
		\draw [bend right=90, looseness=3.75] (15.center) to (16.center);
		\draw [style=arrow] (17.center) to (18.center);
		\draw [style=arrow] (19.center) to (20.center);
		\draw [style=zigzag custom] (21.center) to (22.center);
		\draw (21.center) to (23.center);
		\draw (22.center) to (24.center);
		\draw [style=zigzag edge] (23.center) to (24.center);
		\draw [style=zigzag edge] (27.center) to (28.center);
		\draw [bend right=90, looseness=3.75] (27.center) to (28.center);
	\end{pgfonlayer}
\end{tikzpicture} 
\ee
When applied to splitting maps, these ideas lead to the definition of a preorder called comprehension.

\begin{definition}[Comprehension]
Let $\chi$ and $\zeta$ be two canonical splitting maps on $\ch$. We say that $\chi$ is {\em comprehended} in $\zeta$ and write $\chi \sqsubseteq \zeta$ if there exists a Hilbert space $\ch_{\MM}$ and canonical splitting maps $\mu: \chlz \rightarrow \chlc \otimes \ch_{\MM}$  and $\xi : \chrc \rightarrow \ch_{\MM} \otimes \chrz$ such that $\left(\mu \otimes \id_{\chrz}\right) \zeta = \left(\id_{\chlc} \otimes \xi \right) \chi$ or equivalently that $\left(\left(\cdot \tm \cdot\right) \tz \cdot\right) = \left( \cdot \tc \left(\cdot \tx \cdot \right)\right) $ which can be stated diagrammatically as 
\be
\begin{tikzpicture}
	\begin{pgfonlayer}{nodelayer}
		\node [style=none] (0) at (-4.25, -2.75) {};
		\node [style=white] (1) at (-4.25, -1) {$\zeta$};
		\node [style=none] (5) at (-7.5, 3) {};
		\node [style=none] (6) at (-4.5, 3) {};
		\node [style=none] (7) at (-1.5, 3) {};
		\node [style=none] (10) at (4, -2.75) {};
		\node [style=white] (11) at (4, -1) {$\chi$};
		\node [style=none] (14) at (0, 0) {$=$};
		\node [style=none] (16) at (1.5, 3) {};
		\node [style=none] (17) at (4.5, 3) {};
		\node [style=none] (18) at (7.5, 3) {};
		\node [style=white] (19) at (-6, 1) {$\mu$};
		\node [style=white] (20) at (6, 1) {$\xi$};
		\node [style=none] (21) at (9, 0) {.};
	\end{pgfonlayer}
	\begin{pgfonlayer}{edgelayer}
		\draw (0.center) to (1);
		\draw (10.center) to (11);
		\draw [in=-90, out=0, looseness=1.25] (1) to (7.center);
		\draw [in=-90, out=180, looseness=1.25] (11) to (16.center);
		\draw [in=180, out=-90, looseness=1.25] (5.center) to (19);
		\draw [in=-90, out=0, looseness=1.25] (19) to (6.center);
		\draw [in=180, out=-90, looseness=1.25] (19) to (1);
		\draw [in=-90, out=0, looseness=1.25] (11) to (20);
		\draw [in=-90, out=-180, looseness=1.25] (20) to (17.center);
		\draw [in=-90, out=0, looseness=1.25] (20) to (18.center);
	\end{pgfonlayer}
\end{tikzpicture}
\ee
Given a Hilbert space $\ch$, comprehension is a preorder on the set of splitting maps on $\ch$. 
\end{definition}

\begin{remark}\label{good dots}
Note that the splitting maps $\mu$ and $\xi$ can always be chosen such that $\cons_{\LL}(\mu) = \cons_{\LL}(\chi)$ and $\cons_{\R}(\xi) = \cons_{\R}(\zeta)$ (see Appendix B. of \cite{mestoudjian2026picturinggeneralquantumsubsystems}). We will thus assume, without loss of generality, that the $\mu$ and $\xi$ witnessing a comprehension relation always satisfy this property.
\end{remark}

Moreover the comprehension preorder perfectly captures, in the case of canonical splitting maps, the inclusion of the corresponding algebras of strictly left-local operators.

\begin{proposition}
Let $\chi$ and $\zeta$ be canonical splitting maps, then $\chi \sqsubseteq \zeta$ if and only if $\stloc_{\LL}(\chi) \subseteq \stloc_{\LL}(\zeta)$.
\end{proposition}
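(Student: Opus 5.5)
We prove the two directions separately. The forward direction is a direct diagrammatic computation; the converse goes through the structure theory of finite-dimensional von Neumann algebras.

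\emph{($\Rightarrow$)} Assume $\chi \sqsubseteq \zeta$, witnessed by $\mu$ and $\xi$ chosen as in Remark~\ref{good dots}, so that $\cons_{\LL}(\mu)=\cons_{\LL}(\chi)$. Take $A\in\cons_{\LL}(\chi)$; we want to show $A\tc\id\in\stloc_{\LL}(\zeta)$.
\begin{enumerate}
\item Using $(\mu\otimes\id)\zeta=(\id\otimes\xi)\chi$ and the fact that $\xi$ is an isometry, compute
\be
\chi^\dagger(A\otimes\id)\chi=\chi^\dagger(\id\otimes\xi^\dagger)(A\otimes\id\otimes\id)(\id\otimes\xi)\chi=\zeta^\dagger\big((\mu^\dagger(A\otimes\id)\mu)\otimes\id\big)\zeta .
\ee
So $A\tc\id=(A\tm\id)\tz\id$.
\item Show that $A':=A\tm\id$ lies in $\cons_{\LL}(\zeta)$. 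Since $A\in\cons_{\LL}(\mu)$, the map $A\mapsto A\tm\id$ is the strict-locality homomorphism for $\mu$. It therefore sends $\cons_{\LL}(\mu)$ onto a von Neumann algebra of operators on $\chlz$.
\item To obtain $\zeta$-consistency of $A'$, sandwich the $\chi$-consistency relation for $A$ between $(\mu\otimes\id)$ and $(\id\otimes\xi)$ and use the comprehension equation again. This yields $(A'\otimes\id)\zeta\zeta^\dagger=\zeta\zeta^\dagger(A'\otimes\id)$ after using $\mu^\dagger\mu=\id$.
\end{enumerate}
If step 3 turns out to be awkward, there is a fallback: canonical splittings satisfy $\cons_{\LL}\cong\stloc_{\LL}$, and $\stloc_{\LL}(\zeta)$ equals the commutant of $\stloc_{\R}(\zeta)$ (the commutation theorem for canonical splittings, which follows from the block form $\bigoplus_i \cl(\ch_{\LL}^i)\otimes\id$). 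Then it suffices to check that $A\tc\id$ commutes with every element $\id\tz B$, $B\in\cons_{\R}(\zeta)$. Rewriting $\id\tz B$ through $\xi$ as $\id\tc(\id\tx B)$ makes this commutation visible, since both operators act on different legs of $\chi$.

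\emph{($\Leftarrow$)} Assume $\stloc_{\LL}(\chi)\subseteq\stloc_{\LL}(\zeta)$. Write the canonical decompositions
\be
\ch\cong\bigoplus_i \ch_{\LL}^i\otimes\ch_{\R}^i, \qquad \ch\cong\bigoplus_j \ck_{\LL}^j\otimes\ck_{\R}^j ,
\ee
so that $\stloc_{\LL}(\chi)=\bigoplus_i\cl(\ch_{\LL}^i)\otimes\id$ and similarly for $\zeta$. Apply the standard structure theorem for a unital inclusion of finite-dimensional von Neumann algebras $\mathcal{N}\subseteq\mathcal{M}$, described by the Bratteli inclusion matrix. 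It gives multiplicity spaces $\ch_{\MM}^{ij}$ with
\be
\ck_{\LL}^j\cong\bigoplus_i\ch_{\LL}^i\otimes\ch_{\MM}^{ij}
\ee
compatible with the embedding. Dually, the reversed inclusion $\stloc_{\R}(\zeta)\subseteq\stloc_{\R}(\chi)$ of commutants gives
\be
\ch_{\R}^i\cong\bigoplus_j\ch_{\MM}^{ij}\otimes\ck_{\R}^j .
\ee
Set $\ch_{\MM}=\bigoplus_{ij}\ch_{\MM}^{ij}$ and define the canonical splittings $\mu,\xi$ from these two unitaries. Checking $(\mu\otimes\id)\zeta=(\id\otimes\xi)\chi$ then reduces to both sides implementing the same unitary $\ch\to\bigoplus_{ij}\ch_{\LL}^i\otimes\ch_{\MM}^{ij}\otimes\ck_{\R}^j$.

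The main obstacle is this last matching. The unitaries $U_\chi$, $U_\zeta$ and the two multiplicity isomorphisms must be chosen coherently. The identifications are only unique up to unitaries on the multiplicity spaces, so we use that freedom to absorb the discrepancy. First fix the isomorphism for $\ck_{\LL}^j$ from the inclusion of left algebras. Then define the isomorphism for $\ch_{\R}^i$ as the composite $U_\zeta U_\chi^\dagger$ restricted to the relevant block. This composite intertwines both algebras, hence by Schur's lemma it factors as needed, which forces the equation to hold.
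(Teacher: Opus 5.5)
This proposition is not proved in the present paper. It is imported from \cite{mestoudjian2026picturinggeneralquantumsubsystems}, so there is no in-text argument to compare yours with. Judged on its own, your proof is correct. The points below only tighten it.

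\textbf{Forward direction.} In ($\Rightarrow$), step~3 alone does the job, and it holds for arbitrary witnesses.

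From the comprehension equation, $(\id\otimes\xi)\chi\chi^\dagger(\id\otimes\xi^\dagger)=(\mu\otimes\id)\zeta\zeta^\dagger(\mu^\dagger\otimes\id)$. Multiply the $\chi$-consistency relation for $A$ by $\id\otimes\xi$ on the left and $\id\otimes\xi^\dagger$ on the right. Then multiply by $\mu^\dagger\otimes\id$ on the left and $\mu\otimes\id$ on the right. This gives $(A'\otimes\id)\zeta\zeta^\dagger=\zeta\zeta^\dagger(A'\otimes\id)$, using only $\mu^\dagger\mu=\id$. Together with step~1, this finishes the direction.

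So step~2, the commutant fallback, and the appeal to Remark~\ref{good dots} can all be dropped. No canonicity of $\chi,\zeta,\mu,\xi$ is needed here. Dropping the remark is preferable, since it comes from the same companion paper.

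\textbf{Converse direction.} In ($\Leftarrow$), your final paragraph is the real argument and it works, but the composite you factor must include the Bratteli unitary. It is $T=(V\otimes\id)\,U_\zeta U_\chi^\dagger$, not $U_\zeta U_\chi^\dagger$ alone.

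Here the inclusion induces a unital embedding $\pi:\cons_{\LL}(\chi)\to\cons_{\LL}(\zeta)$. This is well defined because $A\mapsto A\tc\id$ is a $*$-isomorphism $\cons_{\LL}\to\stloc_{\LL}$ for canonical maps with non-zero blocks. The unitary $V=\bigoplus_j V_j$ is the one with $V_j\pi_j(a)V_j^\dagger=\bigoplus_i a_i\otimes\id_{\ch_{\MM}^{ij}}$.

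Then $T$ carries $\bigoplus_i a_i\otimes\id_{\ch_{\R}^i}$ to $\bigoplus_i a_i\otimes\id$ on $\bigoplus_i \ch_{\LL}^i\otimes\bigl(\bigoplus_j\ch_{\MM}^{ij}\otimes\mathcal{K}_{\R}^j\bigr)$. By the commutant/Schur argument, $T=\bigoplus_i\id_{\ch_{\LL}^i}\otimes W_i$ with each $W_i$ unitary. Define $\xi$ as $\bigoplus_i W_i$ followed by the inclusion. Then $(\id\otimes\xi)\chi=(\mu\otimes\id)\zeta$ holds by construction.

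Only the left algebras need to be intertwined. Your separate ``dual'' decomposition of $\ch_{\R}^i$ through commutants therefore serves at most as a dimension check, and its isomorphism is overwritten by the $W_i$.

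A by-product: your $\mu$ satisfies $\cons_{\LL}(\mu)=\bigoplus_i\cl(\ch_{\LL}^i)=\cons_{\LL}(\chi)$. So the construction itself produces the witnesses described in Remark~\ref{good dots}.
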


%
%
%
%
%
%

\section{Minimal splitting maps}

Let us now define the notion of a minimal splitting map which will represent general quantum systems with trivial environment.

\begin{definition}[Minimal splitting map]
Let $\nu : \ch \rightarrow \ch \otimes \ch_{\R}$ be a canonical splitting map. We will say that $\nu$ is minimal if there exists an isometry $\tikzcircle{4pt} : \ch_{\R} \rightarrow \ch_{\R} \otimes \ch_{\R}$ such that
\begin{equation}
\label{eqminimal1}
 \begin{tikzpicture}
	\begin{pgfonlayer}{nodelayer}
		\node [style=white] (0) at (-5, 0) {$\nu$};
		\node [style=white] (1) at (-7, 2) {$\nu$};
		\node [style=none] (2) at (-2, 4) {};
		\node [style=none] (3) at (-5, -2) {};
		\node [style=none] (4) at (-9, 4) {};
		\node [style=none] (5) at (-5, 4) {};
		\node [style=none] (6) at (0, 0) {$=$};
		\node [style=white] (7) at (5, 0) {$\nu$};
		\node [style=black] (8) at (7, 2) {};
		\node [style=none] (9) at (2, 4) {};
		\node [style=none] (10) at (5, 4) {};
		\node [style=none] (11) at (9, 4) {};
		\node [style=none] (12) at (5, -2) {};
		\node [style=none] (13) at (-8, 1) {$\ch$};
		\node [style=none] (14) at (-5, -2.75) {$\ch$};
		\node [style=none] (15) at (5, -2.75) {$\ch$};
		\node [style=none] (16) at (2, 4.75) {$\ch$};
		\node [style=none] (17) at (-9, 4.75) {$\ch$};
		\node [style=none] (18) at (-5, 4.75) {$\ch_{\R}$};
		\node [style=none] (19) at (-2, 4.75) {$\ch_{\R}$};
		\node [style=none] (20) at (5, 4.75) {$\ch_{\R}$};
		\node [style=none] (21) at (9, 4.75) {$\ch_{\R}$};
	\end{pgfonlayer}
	\begin{pgfonlayer}{edgelayer}
		\draw [in=-180, out=-90, looseness=1.25] (4.center) to (1);
		\draw [in=-90, out=0, looseness=1.25] (1) to (5.center);
		\draw [in=180, out=-90, looseness=1.25] (1) to (0);
		\draw [in=-90, out=0, looseness=1.25] (0) to (2.center);
		\draw (0) to (3.center);
		\draw (7) to (12.center);
		\draw [in=-90, out=180, looseness=1.25] (7) to (9.center);
		\draw [in=-90, out=0, looseness=1.25] (7) to (8);
		\draw [in=-90, out=-180, looseness=1.25] (8) to (10.center);
		\draw [in=-90, out=0, looseness=1.25] (8) to (11.center);
	\end{pgfonlayer}
\end{tikzpicture} 
\end{equation}
and
\begin{equation}
\label{eqminimal2}
 \begin{tikzpicture}
	\begin{pgfonlayer}{nodelayer}
		\node [style=white] (0) at (1, -1) {$\nu$};
		\node [style=white] (1) at (1, 1) {$\nu$};
		\node [style=none] (2) at (-1, 3.25) {};
		\node [style=none] (3) at (3, 3.25) {};
		\node [style=none] (4) at (-1, -3.25) {};
		\node [style=none] (5) at (3, -3.25) {};
		\node [style=none] (6) at (5, 0) {$=$};
		\node [style=none] (7) at (-3, 0) {$=$};
		\node [style=white] (8) at (9, -2) {$\nu$};
		\node [style=black] (9) at (12.5, 2) {};
		\node [style=none] (11) at (7, 3) {};
		\node [style=none] (12) at (9, -3.25) {};
		\node [style=none] (13) at (14.5, -3.25) {};
		\node [style=none] (14) at (12.5, 3.25) {};
		\node [style=black] (15) at (-7, -2) {};
		\node [style=white] (16) at (-10.5, 2) {$\nu$};
		\node [style=none] (17) at (-10.5, 3.25) {};
		\node [style=none] (18) at (-5, 3.25) {};
		\node [style=none] (19) at (-7, -3.25) {};
		\node [style=none] (20) at (-12.5, -3.25) {};
	\end{pgfonlayer}
	\begin{pgfonlayer}{edgelayer}
		\draw [in=180, out=-90, looseness=1.25] (2.center) to (1);
		\draw [in=-90, out=0, looseness=1.25] (1) to (3.center);
		\draw (1) to (0);
		\draw [in=90, out=-180] (0) to (4.center);
		\draw [in=90, out=0] (0) to (5.center);
		\draw [in=-180, out=-90, looseness=0.75] (11.center) to (8);
		\draw (9) to (14.center);
		\draw [in=90, out=0, looseness=0.75] (9) to (13.center);
		\draw (8) to (12.center);
		\draw [in=0, out=-180, looseness=1.25] (9) to (8);
		\draw [in=-180, out=0, looseness=1.25] (16) to (15);
		\draw [in=90, out=-180, looseness=0.75] (16) to (20.center);
		\draw (16) to (17.center);
		\draw (15) to (19.center);
		\draw [in=0, out=-90, looseness=0.75] (18.center) to (15);
	\end{pgfonlayer}
\end{tikzpicture} 
\end{equation}
\end{definition}

\begin{proposition}
\label{shapeofminimal}
Let $\nu : \ch \rightarrow \ch \otimes \ch_{\R}$ be a canonical splitting map, then $\nu$ is minimal if and only if $\stloc_{\LL}(\nu) = \cons_{\LL}(\nu) = \bigoplus_i \cl(\ch_i)$, and it can therefore be written as
\begin{equation}
\nu : \ch = \bigoplus_{i=1}^N (\ch^{i} \otimes \mathbb{C}) \hookrightarrow (\bigoplus_{i=1}^N  \ch^{i}) \otimes (\bigoplus_{i=1}^N   \mathbb{C}) = \ch \otimes \mathbb{C}^{N} \, .
\end{equation}
\end{proposition}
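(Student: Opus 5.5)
Since $\nu$ is canonical, we start from a unitary $U:\ch\to\bigoplus_i(\ch_{\LL}^i\otimes\ch_{\R}^i)$ with $\ch_{\LL}=\bigoplus_i\ch_{\LL}^i$ and $\ch_{\R}=\bigoplus_i\ch_{\R}^i$. The known description of the canonical structure is
\[
\cons_{\LL}(\nu)=\bigoplus_i\cl(\ch_{\LL}^i),
\qquad
\stloc_{\LL}(\nu)\cong\bigoplus_i\cl(\ch_{\LL}^i)\otimes\id_{\ch_{\R}^i}.
\]
The proof rests on these, and on the blockwise form of the two minimality equations. Since the left output of $\nu$ is again $\ch$, the left output space is forced to equal $\ch$, with $\dim\ch=\sum_i d_{\LL}^i d_{\R}^i$ and also $\dim\ch_{\LL}=\sum_i d_{\LL}^i$.

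\emph{($\Leftarrow$)} Suppose every $\ch_{\R}^i=\mathbb{C}$, so $\ch=\bigoplus_i\ch^i$ and $\nu(\ket{\psi_i})=\ket{\psi_i}\otimes\ket{i}$ for $\ket{\psi_i}\in\ch^i$. Take the black dot to be the copy isometry $\ket{i}\mapsto\ket{i}\otimes\ket{i}$ on $\mathbb{C}^N$. We check \eqref{eqminimal1} on each block $\ch^i$: both sides send $\ket{\psi_i}$ to $\ket{\psi_i}\otimes\ket{i}\otimes\ket{i}$.

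For \eqref{eqminimal2}, the middle term is $\nu\nu^\dagger$, the projector onto $\bigoplus_i\ch^i\otimes\ket{i}$. The side terms are $(\id\otimes\bullet^\dagger)(\nu\otimes\id)$ and its mirror. Evaluated on $\ket{\psi_j}\otimes\ket{k}$, the first gives $\ket{\psi_j}\otimes\bullet^\dagger(\ket{j}\otimes\ket{k})=\delta_{jk}\ket{\psi_j}\otimes\ket{j}$. This is exactly $\nu\nu^\dagger$ applied to $\ket{\psi_j}\otimes\ket{k}$, and the mirror term is handled the same way. In this case $\nu$ is (up to $U$) the identity on each block, so $\stloc_{\LL}(\nu)=\cons_{\LL}(\nu)=\bigoplus_i\cl(\ch^i)$.

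\emph{($\Rightarrow$)} The key step is to show that \eqref{eqminimal2} forces $\dim\ch_{\R}^i=1$. Apply the left equality of \eqref{eqminimal2} to a vector $\ket{\phi}\otimes\ket{r}$ with $\ket{\phi}\in\ch$ and $\ket{r}\in\ch_{\R}$. The right-hand side $\nu\nu^\dagger$ is a projector $P$ onto $\nu(\ch)$. The left-hand side has the form $(\id\otimes\bullet^\dagger)(\nu\otimes\id)$, i.e.\ the tensor structure acts only on the $\ch_{\R}$ legs while $\ch$ passes through $\nu$.

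To extract information, first compose \eqref{eqminimal1} with $\nu^\dagger\otimes\id$. This gives $\nu=(\id\otimes\bullet^\dagger)(\nu\otimes\id)\nu$, so $\bullet$ acts as a "copy" of the right label. Then trace/dimension-count: the rank of $P$ is $\dim\ch$, while the map $(\id\otimes\bullet^\dagger)(\nu\otimes\id)$ commutes with $\cl(\ch_{\LL}^i)$-actions. I expect this to force $P$ to commute with $\cons_{\LL}(\nu)\otimes\id$ and with $\id\otimes\cons_{\R}(\nu)$ in a way that makes $\cons_{\R}(\nu)$ commutative. Concretely, I would try to show that \eqref{eqminimal2} implies $\nu\nu^\dagger$ lies in $\cl(\ch)\otimes\cons_{\R}(\nu)'$ with a commutative right algebra, and hence each $\cl(\ch_{\R}^i)$ is one-dimensional. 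This step is the main obstacle; a cleaner route may be to plug $\bra{\phi}\otimes\id$ into the equations and derive that $\ch_{\R}^i\otimes\ch_{\R}^i\to\ch_{\R}^i$ restricted blockwise must be a "Frobenius-like" isometry. That rules out $d_{\R}^i>1$, since an isometry $\ch_{\R}^i\to\ch_{\R}^i\otimes\ch_{\R}^i$ compatible with arbitrary $\cl(\ch_{\LL}^i)\otimes\id$ must be copying a classical basis, and the identity part of $\stloc_{\LL}$ then leaves no room.

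Once $d_{\R}^i=1$ for all $i$, $U$ identifies $\ch$ with $\bigoplus_i\ch^i\otimes\mathbb{C}$, which gives the displayed form. Then $\stloc_{\LL}(\nu)=\cons_{\LL}(\nu)=\bigoplus_i\cl(\ch^i)$. Conversely, if $\stloc_{\LL}(\nu)=\cons_{\LL}(\nu)$ as algebras, then $\cl(\ch_{\LL}^i)\otimes\id_{\ch_{\R}^i}\cong\cl(\ch_{\LL}^i)$ acting on $\ch$ with the same total dimension. Comparing $\dim\ch=\sum d_{\LL}^id_{\R}^i$ with $\dim\ch_{\LL}=\dim\ch=\sum d_{\LL}^i$ gives $d_{\R}^i=1$, and the $(\Leftarrow)$ direction then applies. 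I would actually route the forward direction through this dimension count as well, deriving $\dim\ch_{\LL}=\dim\ch$ directly from the type $\nu:\ch\to\ch\otimes\ch_{\R}$ together with canonicity. If that holds, the obstacle above dissolves and the minimality equations are only needed for the explicit form of the dot.
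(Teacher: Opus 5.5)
\paragraph{There is a genuine gap in your $(\Rightarrow)$ direction.}
Your dimension count is sound as far as it goes. Because the left output of $\nu$ is $\ch$ itself, $\sum_i d_{\LL}^i d_{\R}^i=\dim\ch=\sum_i d_{\LL}^i$, and with all blocks nonzero this forces $d_{\R}^i=1$ without using the minimality equations. That is a legitimate shortcut for the dimensional part.

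The gap is the next sentence: ``once $d_{\R}^i=1$, $U$ identifies $\ch$ with $\bigoplus_i\ch^i\otimes\mathbb{C}$, which gives the displayed form; then $\stloc_{\LL}(\nu)=\cons_{\LL}(\nu)$.'' The dimensions say nothing about how $U$ relates the input copy of $\ch$ to the output copy $\bigoplus_i\ch_{\LL}^i$. In general $\nu\ket{\psi}=\sum_i\pi_iU\ket{\psi}\otimes\ket{i}$, with $\pi_i$ the projector onto $\ch_{\LL}^i$, and one computes $\stloc_{\LL}(\nu)=U^\dagger\cons_{\LL}(\nu)U$.

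For a concrete failure, take $\ch=\mathbb{C}^2$ split into two one-dimensional blocks, $N=2$, and $U$ a Hadamard. This $\nu$ is canonical, has type $\ch\to\ch\otimes\mathbb{C}^2$, and passes your count. Yet $\cons_{\LL}(\nu)$ is the diagonal algebra while $\stloc_{\LL}(\nu)$ is its Hadamard conjugate, and this $\nu$ is not minimal. So the minimality equations are needed precisely to pin down $U$, not ``only for the explicit form of the dot.'' Nothing in your $(\Rightarrow)$ does this, and the Frobenius/commutativity route is, by your own account, left open.

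\paragraph{The missing step.}
This is the paper's key move: evaluate \eqref{eqminimal1} on $\nu^\dagger(\ket{\psi}\otimes\ket{j})$ with $\ket{\psi}\in\ch_{\LL}^i$ and $\ket{j}\in\ch_{\R}^i$.
\begin{itemize}
\item Since $\nu\nu^\dagger$ fixes such product vectors, this gives $\nu\ket{\psi}\otimes\ket{j}=\ket{\psi}\otimes\bullet\ket{j}$.
\item Factorising yields $\nu\ket{\psi}=\ket{\psi}\otimes\ket{m_i}$ and $\bullet\ket{j}=\ket{m_i}\otimes\ket{j}$, with $\ket{m_i}\in\ch_{\R}^i$ depending only on $i$.
\item So $\nu$ is the block inclusion up to the tags $\ket{m_i}$. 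Hence $\nu^\dagger(A\otimes\id)\nu=A$ for every $A\in\cons_{\LL}(\nu)$, i.e.\ $\stloc_{\LL}(\nu)=\cons_{\LL}(\nu)$.
\item Surjectivity of $U$ (or your count; the paper uses \eqref{eqminimal2}) then gives $\ch_{\R}^i=\mathbb{C}\ket{m_i}$, and $\bullet$ is the copy map.
\end{itemize}

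\paragraph{The same issue affects your closing converse.}
From the algebra equality and $d_{\R}^i=1$ you cannot recover $\nu\ket{\psi_i}=\ket{\psi_i}\otimes\ket{i}$. Take $N=1$ and $\nu\ket{\psi}=V\ket{\psi}\otimes\ket{1}$ with $V$ a non-scalar unitary. Both algebras equal $\cl(\ch)$, but \eqref{eqminimal1} forces $V^2=cV$, so $V$ would have to be a scalar. Your $(\Leftarrow)$ computation is correct, but it only covers $\nu$ literally equal to the displayed inclusion, which is how the converse of the statement has to be read.
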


Extensions of a minimal splitting map will be representing ways of embedding the minimal system into a bigger one with a (possibly) non-trivial  environment. 

\begin{definition}[Extension]
We will say that $\chi : \ch \rightarrow \ch_{\LL} \otimes \ch_{\R}$ is an extension of a minimal splitting map $\nu : \ch_{\LL} \rightarrow \ch_{\LL} \otimes \ch_{\MM}$ is there exists an isometry $\bigcirc : \ch_{\R} \rightarrow  \ch_{\MM} \otimes  \ch_{\R}$ such that 
\begin{equation}
\label{extension1}
 \begin{tikzpicture}
	\begin{pgfonlayer}{nodelayer}
		\node [style=white] (0) at (-5, 0) {$\chi$};
		\node [style=white] (1) at (-7, 2) {$\nu$};
		\node [style=none] (2) at (-2, 4) {};
		\node [style=none] (3) at (-5, -2) {};
		\node [style=none] (4) at (-9, 4) {};
		\node [style=none] (5) at (-5, 4) {};
		\node [style=none] (6) at (0, 0) {$=$};
		\node [style=white] (7) at (5, 0) {$\chi$};
		\node [style=none] (9) at (2, 4) {};
		\node [style=none] (10) at (5, 4) {};
		\node [style=none] (11) at (9, 4) {};
		\node [style=none] (12) at (5, -2) {};
		\node [style=none] (13) at (-8, 1) {$\ch_{\LL}$};
		\node [style=none] (14) at (-5, -2.75) {$\ch$};
		\node [style=none] (15) at (5, -2.75) {$\ch$};
		\node [style=none] (16) at (2, 4.75) {$\ch_{\LL}$};
		\node [style=none] (17) at (-9, 4.75) {$\ch_{\LL}$};
		\node [style=none] (18) at (-5, 4.75) {$\ch_{\MM}$};
		\node [style=none] (19) at (-2, 4.75) {$\ch_{\R}$};
		\node [style=none] (20) at (5, 4.75) {$\ch_{\MM}$};
		\node [style=none] (21) at (9, 4.75) {$\ch_{\R}$};
		\node [style=white] (22) at (7, 2) {};
	\end{pgfonlayer}
	\begin{pgfonlayer}{edgelayer}
		\draw [in=-180, out=-90, looseness=1.25] (4.center) to (1);
		\draw [in=-90, out=0, looseness=1.25] (1) to (5.center);
		\draw [in=180, out=-90, looseness=1.25] (1) to (0);
		\draw [in=-90, out=0, looseness=1.25] (0) to (2.center);
		\draw (0) to (3.center);
		\draw (7) to (12.center);
		\draw [in=-90, out=180, looseness=1.25] (7) to (9.center);
		\draw [in=180, out=-90, looseness=1.25] (10.center) to (22);
		\draw [in=-90, out=0, looseness=1.25] (22) to (11.center);
		\draw [in=0, out=-90, looseness=1.25] (22) to (7);
	\end{pgfonlayer}
\end{tikzpicture} 
\end{equation}
and
\begin{equation}
\label{extension2}
 \begin{tikzpicture}
	\begin{pgfonlayer}{nodelayer}
		\node [style=white] (0) at (1, -1) {$\chi$};
		\node [style=white] (1) at (1, 1) {$\chi$};
		\node [style=none] (2) at (-1, 3.25) {};
		\node [style=none] (3) at (3, 3.25) {};
		\node [style=none] (4) at (-1, -3.25) {};
		\node [style=none] (5) at (3, -3.25) {};
		\node [style=none] (6) at (5, 0) {$=$};
		\node [style=none] (7) at (-3, 0) {$=$};
		\node [style=white] (8) at (9, -2) {$\nu$};
		\node [style=none] (11) at (7, 3) {};
		\node [style=none] (12) at (9, -3.25) {};
		\node [style=none] (13) at (14.5, -3.25) {};
		\node [style=none] (14) at (12.5, 3.25) {};
		\node [style=white] (16) at (-10.5, 2) {$\nu$};
		\node [style=none] (17) at (-10.5, 3.25) {};
		\node [style=none] (18) at (-5, 3.25) {};
		\node [style=none] (19) at (-7, -3.25) {};
		\node [style=none] (20) at (-12.5, -3.25) {};
		\node [style=white] (21) at (12.5, 2) {};
		\node [style=white] (22) at (-7, -2) {};
	\end{pgfonlayer}
	\begin{pgfonlayer}{edgelayer}
		\draw [in=180, out=-90, looseness=1.25] (2.center) to (1);
		\draw [in=-90, out=0, looseness=1.25] (1) to (3.center);
		\draw (1) to (0);
		\draw [in=90, out=-180] (0) to (4.center);
		\draw [in=90, out=0] (0) to (5.center);
		\draw [in=-180, out=-90, looseness=0.75] (11.center) to (8);
		\draw (8) to (12.center);
		\draw [in=90, out=-180, looseness=0.75] (16) to (20.center);
		\draw (16) to (17.center);
		\draw [in=180, out=0] (16) to (22);
		\draw [in=-90, out=0, looseness=0.75] (22) to (18.center);
		\draw (22) to (19.center);
		\draw [in=-180, out=0] (8) to (21);
		\draw (21) to (14.center);
		\draw [in=90, out=0, looseness=0.75] (21) to (13.center);
	\end{pgfonlayer}
\end{tikzpicture} 
\end{equation}
\end{definition}

Note that a canonical splitting map is minimal if and only if it is an extension of itself.

\begin{proposition}
\label{shapeofextensions}
Let $\nu : \ch_{\LL} \rightarrow \ch_{\LL} \otimes \ch_{\MM}$ be a minimal splitting map, a canonical splitting map $\chi : \ch \rightarrow \ch_{\LL} \otimes \ch_{\R}$ is an extension of $\nu$ if and only if $\cons_{\LL}(\chi) = \cons_{\LL}(\nu)$ and therefore that they can jointly be written
\begin{equation}
\nu : \ch_{\LL} = \bigoplus_{i=1}^N (\ch_{\LL}^{i} \otimes \mathbb{C}) \hookrightarrow (\bigoplus_{i=1}^N  \ch_{\LL}^{i}) \otimes (\bigoplus_{i=1}^N   \mathbb{C}) = \ch_{\LL} \otimes \mathbb{C}^{N} \, , 
\end{equation}
\begin{equation}
\chi : \ch \cong \bigoplus_{i=1}^N (\ch_{\LL}^{i} \otimes \ch_{\R}^{i}) \hookrightarrow (\bigoplus_{i=1}^N  \ch_{\LL}^{i}) \otimes (\bigoplus_{i=1}^N   \ch_{\R}^{i}) = \ch_{\LL} \otimes \ch_{\R} \, .
\end{equation}
\end{proposition}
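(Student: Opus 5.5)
We prove the two directions separately, working throughout in the explicit form of $\nu$ given by Proposition~\ref{shapeofminimal}: $\cons_{\LL}(\nu)=\bigoplus_{i=1}^N \cl(\ch_{\LL}^i)$, and $\nu\ket{\psi_i}=\ket{\psi_i}\otimes\ket{i}$ for $\ket{\psi_i}\in\ch_{\LL}^i$. The key observation is that, for an isometry $\nu$, the projector $\nu\nu^\dagger$ is $\sum_i P_i\otimes\ket{i}\bra{i}$, where $P_i$ is the projector onto $\ch_{\LL}^i$. Hence $\cons_{\LL}(\nu)$ is the commutant of $\{P_i\}$, and the $P_i$ are exactly the minimal central projections of $\cons_{\LL}(\nu)$.

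For ($\Leftarrow$), assume $\cons_{\LL}(\chi)=\cons_{\LL}(\nu)$. By canonicity, $\chi$ has the form $\ch\cong\bigoplus_j(\ch_{\LL}'^j\otimes\ch_{\R}^j)\hookrightarrow\ch_{\LL}\otimes\ch_{\R}$ with $\ch_{\LL}=\bigoplus_j\ch_{\LL}'^j$. Computing $\chi\chi^\dagger$ in the same way shows that $\cons_{\LL}(\chi)$ is the commutant of the block projectors $P'_j$ onto $\ch_{\LL}'^j$, assuming the $\ch_{\R}^j$ are nonzero; zero blocks can be discarded. Equality of the algebras forces equality of their centres, so after relabelling $P'_j=P_i$ and $\ch_{\LL}'^i=\ch_{\LL}^i$. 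This yields the joint form stated in the proposition. We then define $\bigcirc:\ch_{\R}\to\mathbb{C}^N\otimes\ch_{\R}$ by $\ket{r_i}\mapsto\ket{i}\otimes\ket{r_i}$ for $\ket{r_i}\in\ch_{\R}^i$; this is an isometry. Checking \eqref{extension1} is immediate on each vector $\ket{\psi_i}\otimes\ket{r_i}$. For \eqref{extension2}, both sides reduce to $\sum_i P_i\otimes\ket{i}\bra{i}\otimes Q_i$ after the obvious rewiring, where $Q_i$ is the projector onto $\ch_{\R}^i$. Indeed, $\chi\chi^\dagger=\sum_i P_i\otimes Q_i$, and composing with $\bigcirc$ on the right wire and $\nu$ on the left wire gives matching expressions.

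For ($\Rightarrow$), assume \eqref{extension1} and \eqref{extension2} hold. The middle equality of \eqref{extension2} expresses the projector $\chi\chi^\dagger$ as built from $\nu$ and $\bigcirc$. I would plug \eqref{extension1} into the leftmost expression of \eqref{extension2}. I expect this to show $\chi\chi^\dagger=\sum_i P_i\otimes R_i$, where $R_i=\bigcirc^\dagger(\ket{i}\bra{i}\otimes\id)\bigcirc$ are orthogonal projectors summing to $\id_{\ch_{\R}}$ on the relevant range. The argument is to trace out the $\ch_{\MM}$ wire and use $\nu^\dagger\nu=\id$ and $\bigcirc^\dagger\bigcirc=\id$.

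Given that form, the commutant computation goes as follows. An operator $A$ satisfies $(A\otimes\id)\chi\chi^\dagger=\chi\chi^\dagger(A\otimes\id)$ if and only if $A$ commutes with every $P_i$ for which $R_i\neq0$. Extension \eqref{extension1} guarantees $R_i\neq 0$ for every block that actually appears in $\ch$, and blocks with $R_i=0$ can be removed from $\nu$. So $\cons_{\LL}(\chi)=\bigoplus_i\cl(\ch_{\LL}^i)=\cons_{\LL}(\nu)$, and the joint form then follows as in the first direction.

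The main obstacle is this extraction step: turning the diagrammatic identities \eqref{extension1} and \eqref{extension2} into the statement that $\chi\chi^\dagger$ is block diagonal with respect to the $P_i$, with nonvanishing blocks. This requires identifying the $R_i$ as orthogonal projectors, and verifying their orthogonality uses \eqref{eqminimal1}, i.e.\ that copying of the $\ket{i}$ label is coherent.
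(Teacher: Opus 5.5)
Your ($\Leftarrow$) direction matches the paper's argument: equal algebras have the same minimal central projections, hence the same blocks, and the same $\bigcirc$ witnesses \eqref{extension1} and \eqref{extension2}.

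Your ($\Rightarrow$) direction takes a different route. The paper applies \eqref{extension1} to product basis vectors to get $\nu\ket{e}=\ket{e}\otimes\ket{m_i}$, so each left block of $\chi$ lies inside a single block of $\nu$ and $\cons_{\LL}(\chi)\subseteq\cons_{\LL}(\nu)$; it uses \eqref{extension2} only to push $A\in\cons_{\LL}(\nu)$ through $\chi\chi^\dagger$. Reading the block form of $\chi\chi^\dagger$ off \eqref{extension2} and computing its commutant is a legitimate alternative. However, the step carrying the content is not justified. You need $R_i\neq 0$ for every label $i$ of $\nu$: if $R_i=R_j=0$ for some $i\neq j$, the commutant contains operators mixing $\ch_{\LL}^i$ and $\ch_{\LL}^j$, and the claimed equality fails.

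Neither of your reasons covers this. You cannot remove blocks from $\nu$: it is given, and removing blocks changes $\cons_{\LL}(\nu)$, so you would be proving a different statement. Nor does \eqref{extension1} force $R_i\neq 0$. Take orthogonal projectors $R_k$ summing to $\id$ with $R_i=R_j=0$; the inclusion of $\bigoplus_k\ch_{\LL}^k\otimes R_k\ch_{\R}$ into $\ch_{\LL}\otimes\ch_{\R}$ satisfies both \eqref{extension1} and \eqref{extension2} with $\bigcirc=\sum_k\ket{k}\otimes R_k$.

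What excludes this case is canonicity of $\chi$, with nonzero blocks, as is implicit whenever one writes $\cons_{\LL}(\chi)=\bigoplus_j\cl(\tilde{\ch}_{\LL}^j)$. The left output space is all of $\ch_{\LL}=\bigoplus_j\tilde{\ch}_{\LL}^j$, and every $\tilde{\ch}_{\R}^j\neq 0$. So for nonzero $\ket{\psi}\in\ch_{\LL}^i$, pick $j$ with $\tilde P_j\ket{\psi}\neq 0$ and a nonzero $\ket{r}\in\tilde{\ch}_{\R}^j$. Then $\chi\chi^\dagger(\ket{\psi}\otimes\ket{r})=\tilde P_j\ket{\psi}\otimes\ket{r}\neq 0$, whereas $R_i=0$ would make it $\ket{\psi}\otimes R_i\ket{r}=0$. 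The paper uses this same covering fact silently when it says the $\ket{e}\otimes\ket{m_j}$ form a basis.

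Separately, the step you call the main obstacle is simpler than your sketch and needs neither \eqref{extension1} nor \eqref{eqminimal1}. Since $\nu^\dagger=\sum_i P_i\otimes\bra{i}$, the left equality of \eqref{extension2} gives $\chi\chi^\dagger=\sum_i P_i\otimes B_i$ with $B_i=(\bra{i}\otimes\id)\bigcirc$. Compressing $(\chi\chi^\dagger)^\dagger=\chi\chi^\dagger=(\chi\chi^\dagger)^2$ by $P_i\otimes\id$ shows each $B_i$ is an orthogonal projector. Hence $B_i=B_i^\dagger B_i=R_i$. Moreover $\sum_i R_i=\bigcirc^\dagger\bigcirc=\id$ on all of $\ch_{\R}$, not just on a range, and this forces $R_iR_k=0$ for $i\neq k$.

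With these repairs, your route shows something the paper's does not make visible. For minimal $\nu$, \eqref{extension2} together with canonicity of $\chi$ already gives the result. It even implies \eqref{extension1}, because $(P_i\otimes\id)\chi=(P_i\otimes R_i)\chi=(\id\otimes R_i)\chi$ and $\bigcirc=\sum_i\ket{i}\otimes R_i$.
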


\begin{remark}
Note that the reasoning used in Appendix \ref{proofextensions} to prove the previous proposition could symmetrically be applied to $\bigcirc$, giving that $\cons_{\R}(\bigcirc) = \cons_{\R}(\chi)$ and thus making Equation \ref{extension1} a witness of the comprehension $\chi \sqsubseteq \chi$.
\end{remark}

\section{Linearity, locality and unitarity}


Let us now turn to the properties that the relativistic evolutions of quantum systems should satisfy.
\begin{itemize}
\item An evolution of a system should not depend on the potential environment of the system and it should therefore be possible to apply it locally on the system, before composing the system with an environment. We will call this property ``local applicability".
\be
\begin{tikzpicture}
	\begin{pgfonlayer}{nodelayer}
		\node [style=none] (2) at (-2.25, -2.25) {};
		\node [style=none] (3) at (-1.5, -0.75) {};
		\node [style=none] (4) at (2.25, -2.25) {};
		\node [style=none] (5) at (1.75, -0.75) {};
		\node [style=none] (6) at (-2, -4.75) {};
		\node [style=none] (7) at (-2, -2.75) {};
		\node [style=none] (8) at (2, -4.75) {};
		\node [style=none] (9) at (2, -2.75) {};
		\node [style=none] (11) at (0, 1.5) {};
		\node [style=none] (12) at (0, -0.5) {};
		\node [style=none] (15) at (-1, 0.5) {};
		\node [style=none] (16) at (-1, 3.75) {};
		\node [style=none] (17) at (-1.5, 2.75) {};
		\node [style=none] (18) at (-0.5, 2.75) {};
		\node [style=none] (19) at (-0.5, 1.75) {};
		\node [style=none] (20) at (-1.5, 1.75) {};
		\node [style=none] (21) at (-1, 2.75) {};
		\node [style=none] (22) at (-1, 1.75) {};
		\node [style=none] (23) at (-1, 2.25) {$\mathscr{L}$};
		\node [style=none] (24) at (0, 5) {};
		\node [style=none] (25) at (0, 3) {};
		\node [style=none] (26) at (1, 3.75) {};
		\node [style=none] (27) at (1, 0.5) {};
	\end{pgfonlayer}
	\begin{pgfonlayer}{edgelayer}
		\draw [style=arrow] (2.center) to (3.center);
		\draw [style=arrow] (4.center) to (5.center);
		\draw [bend left=90, looseness=3.75] (6.center) to (7.center);
		\draw [style=zigzag edge] (6.center) to (7.center);
		\draw [bend right=90, looseness=3.75] (8.center) to (9.center);
		\draw [style=zigzag edge] (8.center) to (9.center);
		\draw [bend right=90, looseness=3.75] (11.center) to (12.center);
		\draw [bend left=90, looseness=3.75] (11.center) to (12.center);
		\draw [style=zigzag edge] (11.center) to (12.center);
		\draw (17.center) to (18.center);
		\draw (18.center) to (19.center);
		\draw (19.center) to (20.center);
		\draw (20.center) to (17.center);
		\draw (16.center) to (21.center);
		\draw (22.center) to (15.center);
		\draw [bend right=90, looseness=3.75] (24.center) to (25.center);
		\draw [bend left=90, looseness=3.75] (24.center) to (25.center);
		\draw [style=zigzag edge] (24.center) to (25.center);
		\draw (26.center) to (27.center);
	\end{pgfonlayer}
\end{tikzpicture} \quad = \quad \begin{tikzpicture}
	\begin{pgfonlayer}{nodelayer}
		\node [style=none] (2) at (-2.25, 1.5) {};
		\node [style=none] (3) at (-1.5, 3) {};
		\node [style=none] (4) at (2.25, 1.5) {};
		\node [style=none] (5) at (1.75, 3) {};
		\node [style=none] (6) at (-2, -1) {};
		\node [style=none] (7) at (-2, 1) {};
		\node [style=none] (8) at (2, -1) {};
		\node [style=none] (9) at (2, 1) {};
		\node [style=none] (11) at (0, 5.25) {};
		\node [style=none] (12) at (0, 3.25) {};
		\node [style=none] (13) at (-2, -4.5) {};
		\node [style=none] (14) at (-2, -2.5) {};
		\node [style=none] (15) at (-3, -3.5) {};
		\node [style=none] (16) at (-3, -0.25) {};
		\node [style=none] (21) at (-3, -1.25) {};
		\node [style=none] (22) at (-3, -2.25) {};
		\node [style=none] (23) at (-3.5, -1.25) {};
		\node [style=none] (24) at (-2.5, -1.25) {};
		\node [style=none] (25) at (-2.5, -2.25) {};
		\node [style=none] (26) at (-3.5, -2.25) {};
		\node [style=none] (27) at (-3, -1.25) {};
		\node [style=none] (28) at (-3, -2.25) {};
		\node [style=none] (29) at (-3, -1.75) {$\mathscr{L}$};
	\end{pgfonlayer}
	\begin{pgfonlayer}{edgelayer}
		\draw [style=arrow] (2.center) to (3.center);
		\draw [style=arrow] (4.center) to (5.center);
		\draw [bend left=90, looseness=3.75] (6.center) to (7.center);
		\draw [style=zigzag edge] (6.center) to (7.center);
		\draw [bend right=90, looseness=3.75] (8.center) to (9.center);
		\draw [style=zigzag edge] (8.center) to (9.center);
		\draw [bend right=90, looseness=3.75] (11.center) to (12.center);
		\draw [bend left=90, looseness=3.75] (11.center) to (12.center);
		\draw [style=zigzag edge] (11.center) to (12.center);
		\draw [bend left=90, looseness=3.75] (13.center) to (14.center);
		\draw [style=zigzag edge] (13.center) to (14.center);
		\draw (16.center) to (21.center);
		\draw (22.center) to (15.center);
		\draw (23.center) to (24.center);
		\draw (24.center) to (25.center);
		\draw (25.center) to (26.center);
		\draw (26.center) to (23.center);
	\end{pgfonlayer}
\end{tikzpicture} 
\ee
\item An evolution of the system should not signal to an observer on the environment, by which we mean that it should not influence the result of a measurement on the environment. This property called ``no superluminal signalling" corresponds to  the following picture where the doted lines represent the future light cone of the evolution.
\be
\begin{tikzpicture}
	\begin{pgfonlayer}{nodelayer}
		\node [style=none] (11) at (0, -0.75) {};
		\node [style=none] (12) at (0, -2.75) {};
		\node [style=none] (15) at (-2.75, -1.75) {};
		\node [style=none] (16) at (-2.75, 5) {};
		\node [style=none] (17) at (-3.25, 0.5) {};
		\node [style=none] (18) at (-2.25, 0.5) {};
		\node [style=none] (19) at (-2.25, -0.5) {};
		\node [style=none] (20) at (-3.25, -0.5) {};
		\node [style=none] (21) at (-2.75, 0.5) {};
		\node [style=none] (22) at (-2.75, -0.5) {};
		\node [style=none] (23) at (-2.75, 0) {$\mathscr{L}$};
		\node [style=none] (24) at (0, 2.75) {};
		\node [style=none] (25) at (0, 0.75) {};
		\node [style=none] (26) at (-2.25, 0) {};
		\node [style=none] (27) at (-3.25, 0) {};
		\node [style=none] (28) at (-7, 5) {};
		\node [style=none] (29) at (1.75, 5) {};
		\node [style=none] (30) at (2.5, -1.75) {};
		\node [style={white_dot}] (37) at (2.5, 3.5) {$m$};
		\node [style=none] (38) at (2.5, 5) {};
		\node [style={white_dot}] (39) at (3.5, 3.9) {};
		\node [style=none] (40) at (3.5, 3.25) {};
		\node [style=none] (41) at (3.25, 3) {};
		\node [style=none] (42) at (3.75, 3) {};
		\node [style=none] (43) at (3.75, 3.5) {};
		\node [style=none] (44) at (3.25, 3.5) {};
		\node [style=none] (45) at (0, 6.25) {};
		\node [style=none] (46) at (0, 4.25) {};
	\end{pgfonlayer}
	\begin{pgfonlayer}{edgelayer}
		\draw [bend left=270, looseness=6.25] (11.center) to (12.center);
		\draw [bend left=90, looseness=6.25] (11.center) to (12.center);
		\draw [style=zigzag edge] (11.center) to (12.center);
		\draw (17.center) to (18.center);
		\draw (18.center) to (19.center);
		\draw (19.center) to (20.center);
		\draw (20.center) to (17.center);
		\draw (16.center) to (21.center);
		\draw (22.center) to (15.center);
		\draw [bend right=90, looseness=6.25] (24.center) to (25.center);
		\draw [bend left=90, looseness=6.25] (24.center) to (25.center);
		\draw [style=zigzag edge] (24.center) to (25.center);
		\draw [style=dottededge] (29.center) to (26.center);
		\draw [style=dottededge] (27.center) to (28.center);
		\draw (37) to (30.center);
		\draw (38.center) to (37);
		\draw (39) to (40.center);
		\draw (43.center) to (44.center);
		\draw (41.center) to (40.center);
		\draw (40.center) to (42.center);
		\draw [bend right=90, looseness=6.25] (45.center) to (46.center);
		\draw [bend left=90, looseness=6.25] (45.center) to (46.center);
		\draw [style=zigzag edge] (45.center) to (46.center);
	\end{pgfonlayer}
\end{tikzpicture}
\ee
\item Finally, it should not matter whether the evolution of the system is applied before or after a measurement on the environment. We will call this property ``update commutativity".
\be
\begin{tikzpicture}
	\begin{pgfonlayer}{nodelayer}
		\node [style=none] (11) at (0, -0.75) {};
		\node [style=none] (12) at (0, -2.75) {};
		\node [style=none] (15) at (-2.75, -1.75) {};
		\node [style=none] (16) at (-2.75, 5) {};
		\node [style=none] (17) at (-3.25, 0.5) {};
		\node [style=none] (18) at (-2.25, 0.5) {};
		\node [style=none] (19) at (-2.25, -0.5) {};
		\node [style=none] (20) at (-3.25, -0.5) {};
		\node [style=none] (21) at (-2.75, 0.5) {};
		\node [style=none] (22) at (-2.75, -0.5) {};
		\node [style=none] (23) at (-2.75, 0) {$\mathscr{L}$};
		\node [style=none] (24) at (0, 2.75) {};
		\node [style=none] (25) at (0, 0.75) {};
		\node [style=none] (26) at (-2.25, 0) {};
		\node [style=none] (27) at (-3.25, 0) {};
		\node [style=none] (30) at (2.5, -1.75) {};
		\node [style={white_dot}] (37) at (2.5, 3.5) {$m$};
		\node [style=none] (38) at (2.5, 5) {};
		\node [style=none] (45) at (0, 6.25) {};
		\node [style=none] (46) at (0, 4.25) {};
	\end{pgfonlayer}
	\begin{pgfonlayer}{edgelayer}
		\draw [bend left=270, looseness=6.25] (11.center) to (12.center);
		\draw [bend left=90, looseness=6.25] (11.center) to (12.center);
		\draw [style=zigzag edge] (11.center) to (12.center);
		\draw (17.center) to (18.center);
		\draw (18.center) to (19.center);
		\draw (19.center) to (20.center);
		\draw (20.center) to (17.center);
		\draw (16.center) to (21.center);
		\draw (22.center) to (15.center);
		\draw [bend right=90, looseness=6.25] (24.center) to (25.center);
		\draw [bend left=90, looseness=6.25] (24.center) to (25.center);
		\draw [style=zigzag edge] (24.center) to (25.center);
		\draw (37) to (30.center);
		\draw (38.center) to (37);
		\draw [bend right=90, looseness=6.25] (45.center) to (46.center);
		\draw [bend left=90, looseness=6.25] (45.center) to (46.center);
		\draw [style=zigzag edge] (45.center) to (46.center);
	\end{pgfonlayer}
\end{tikzpicture} \quad = \quad \begin{tikzpicture}
	\begin{pgfonlayer}{nodelayer}
		\node [style=none] (11) at (0, -0.75) {};
		\node [style=none] (12) at (0, -2.75) {};
		\node [style=none] (15) at (-2.75, -1.75) {};
		\node [style=none] (16) at (-2.75, 5) {};
		\node [style=none] (17) at (-3.25, 4) {};
		\node [style=none] (18) at (-2.25, 4) {};
		\node [style=none] (19) at (-2.25, 3) {};
		\node [style=none] (20) at (-3.25, 3) {};
		\node [style=none] (21) at (-2.75, 4) {};
		\node [style=none] (22) at (-2.75, 3) {};
		\node [style=none] (23) at (-2.75, 3.5) {$\mathscr{L}$};
		\node [style=none] (24) at (0, 2.75) {};
		\node [style=none] (25) at (0, 0.75) {};
		\node [style=none] (26) at (-2.25, 3.5) {};
		\node [style=none] (27) at (-3.25, 3.5) {};
		\node [style=none] (30) at (2.5, -1.75) {};
		\node [style={white_dot}] (37) at (2.5, 0) {$m$};
		\node [style=none] (38) at (2.5, 5) {};
		\node [style=none] (45) at (0, 6.25) {};
		\node [style=none] (46) at (0, 4.25) {};
	\end{pgfonlayer}
	\begin{pgfonlayer}{edgelayer}
		\draw [bend left=270, looseness=6.25] (11.center) to (12.center);
		\draw [bend left=90, looseness=6.25] (11.center) to (12.center);
		\draw [style=zigzag edge] (11.center) to (12.center);
		\draw (17.center) to (18.center);
		\draw (18.center) to (19.center);
		\draw (19.center) to (20.center);
		\draw (20.center) to (17.center);
		\draw (16.center) to (21.center);
		\draw (22.center) to (15.center);
		\draw [bend right=90, looseness=6.25] (24.center) to (25.center);
		\draw [bend left=90, looseness=6.25] (24.center) to (25.center);
		\draw [style=zigzag edge] (24.center) to (25.center);
		\draw (37) to (30.center);
		\draw (38.center) to (37);
		\draw [bend right=90, looseness=6.25] (45.center) to (46.center);
		\draw [bend left=90, looseness=6.25] (45.center) to (46.center);
		\draw [style=zigzag edge] (45.center) to (46.center);
	\end{pgfonlayer}
\end{tikzpicture} 
\ee
\end{itemize}

These properties lead to the following mathematical definition. Given a Hilbert space $\ch$, we define $S_{\ch} = \{ \ket{\psi} \in \ch, \braket{\psi}{\psi} = 1 \}$ the set of (normalised) states of $\ch$. We call measurement results any orthogonal projector.

\begin{definition}[Local evolutions of general quantum systems]
\label{deflocalevolutions}
A local evolution of the quantum system associated to a minimal splitting map $\nu : \ch^{\nu} \rightarrow \ch^{\nu} \otimes \ch_{\R}^{\nu}$ is the data of a function $\ml_{\chi} : S_{\ch^{\chi}} \rightarrow S_{\ch^{\chi}}$ for every extension $\chi : \ch^{\chi} \rightarrow \ch^{\nu} \otimes \chrc$ of $\nu$, such that:
\begin{itemize}
\item Local applicability. For every $\zeta$ such that $\chi \sqsubseteq \zeta$, i.e. such that $\left(\left(\cdot \tm \cdot\right) \tz \cdot\right) = \left( \cdot \tc \left(\cdot \tx \cdot \right)\right)$ and every $\ket{\phi}, \ket{\psi}$ such that $\ket{\phi} \tz \ket{\psi} \neq 0$,
\be
\label{extension}
\ml_{\chi} \left( \frac{\ket{\phi} \tz \ket{\psi}}{\left\Vert \ket{\phi} \tz \ket{\psi} \right\Vert} \right) = \frac{\ml_{\mu}(\ket{\phi}) \tz \ket{\psi}}{\left\Vert \ml_{\mu}(\ket{\phi}) \tz \ket{\psi} \right\Vert } \, ,
\ee
or written diagrammatically that 
\begin{equation}
 \begin{tikzpicture}
	\begin{pgfonlayer}{nodelayer}
		\node [style=none] (0) at (-12.25, 5.75) {};
		\node [style=none] (1) at (-5, 5.75) {};
		\node [style=none] (2) at (-12.25, -1.5) {};
		\node [style=none] (3) at (-5, -1.5) {};
		\node [style=none] (4) at (-11.25, 4) {};
		\node [style=none] (5) at (-6, 4) {};
		\node [style=none] (6) at (-11.25, -0.5) {};
		\node [style=none] (7) at (-6, -0.5) {};
		\node [style=none] (8) at (-8.5, 5.75) {};
		\node [style=none] (9) at (-8.5, 4) {};
		\node [style=none] (10) at (-11.25, 4.75) {$\cl_{\chi}$};
		\node [style=none] (12) at (-8.5, 7.25) {};
		\node [style=none] (14) at (0.75, 3.75) {};
		\node [style=none] (15) at (6.25, 3.75) {};
		\node [style=none] (16) at (0.75, -1.25) {};
		\node [style=none] (17) at (6.25, -1.25) {};
		\node [style=none] (18) at (1.5, 2.5) {};
		\node [style=none] (19) at (5.5, 2.5) {};
		\node [style=none] (20) at (1.5, -0.75) {};
		\node [style=none] (21) at (5.5, -0.75) {};
		\node [style=none] (22) at (3.5, 3.75) {};
		\node [style=none] (23) at (3.5, 2.5) {};
		\node [style=none] (24) at (2, 3) {$\cl_{\mu}$};
		\node [style=none] (25) at (3.5, 4.75) {};
		\node [style=none] (26) at (3.5, 1.5) {};
		\node [style=white] (27) at (-8.5, 2.75) {$\zeta$};
		\node [style=state] (28) at (-10, 0.75) {$\phi$};
		\node [style=state] (29) at (-7, 0.75) {$\psi$};
		\node [style=state] (30) at (3.5, 0.5) {$\phi$};
		\node [style=white] (31) at (7, 7) {$\zeta$};
		\node [style=state] (32) at (10.5, 0) {$\psi$};
		\node [style=none] (33) at (-2.5, 2) {$=$};
		\node [style=none] (34) at (-10.5, 3) {$\frac{1}{\alpha}$};
		\node [style=none] (35) at (3.75, 7.5) {$\frac{1}{\beta}$};
		\node [style=none] (36) at (7, 9) {};
	\end{pgfonlayer}
	\begin{pgfonlayer}{edgelayer}
		\draw (0.center) to (1.center);
		\draw (4.center) to (5.center);
		\draw (6.center) to (7.center);
		\draw (7.center) to (5.center);
		\draw (4.center) to (6.center);
		\draw (0.center) to (2.center);
		\draw (2.center) to (3.center);
		\draw (3.center) to (1.center);
		\draw (8.center) to (12.center);
		\draw (14.center) to (15.center);
		\draw (18.center) to (19.center);
		\draw (20.center) to (21.center);
		\draw (21.center) to (19.center);
		\draw (18.center) to (20.center);
		\draw (14.center) to (16.center);
		\draw (16.center) to (17.center);
		\draw (17.center) to (15.center);
		\draw (26.center) to (23.center);
		\draw (22.center) to (25.center);
		\draw (9.center) to (27);
		\draw [in=90, out=-180, looseness=1.25] (27) to (28);
		\draw [in=90, out=0, looseness=1.25] (27) to (29);
		\draw (26.center) to (30);
		\draw [in=180, out=90] (25.center) to (31);
		\draw [in=90, out=0] (31) to (32);
		\draw (36.center) to (31);
	\end{pgfonlayer}
\end{tikzpicture} 
\end{equation}
where $\alpha = \left\Vert \ket{\phi} \tz \ket{\psi} \right\Vert$ and $\beta = \left\Vert \ml_{\mu}(\ket{\phi}) \tz \ket{\psi} \right\Vert$.
\item No superluminal signalling. For every right $\chi$-consistent measurement result $m$,
\be
\label{probas}
\left\Vert (\id \tc m ) \ml_{\chi}(\ket{\psi}) \right\Vert = \left\Vert (\id \tc m )\ket{\psi} \right\Vert \, , 
\ee
which can be written diagrammatically as
\begin{equation}
 \begin{tikzpicture}
	\begin{pgfonlayer}{nodelayer}
		\node [style=state] (15) at (5, -2) {$\psi$};
		\node [style=none] (28) at (-5, -0.25) {};
		\node [style=white] (31) at (5, 0) {$\chi$};
		\node [style=white] (32) at (5, 4) {$\chi$};
		\node [style=none] (33) at (3, 2) {};
		\node [style=sqr] (34) at (7, 2) {$m$};
		\node [style=effect] (35) at (5, 6) {$\psi$};
		\node [style=none] (36) at (0, 2) {$=$};
		\node [style=white] (37) at (-5, -0.25) {$\chi$};
		\node [style=white] (38) at (-5, 3.75) {$\chi$};
		\node [style=none] (39) at (-7, 1.75) {};
		\node [style=sqr] (40) at (-3, 1.75) {$m$};
		\node [style=none] (41) at (-6.75, 5) {};
		\node [style=none] (42) at (-3.25, 5) {};
		\node [style=none] (43) at (-6.75, 8.5) {};
		\node [style=none] (44) at (-3.25, 8.5) {};
		\node [style=none] (45) at (-6.25, 6) {};
		\node [style=none] (46) at (-3.75, 6) {};
		\node [style=none] (47) at (-6.25, 8) {};
		\node [style=none] (48) at (-3.75, 8) {};
		\node [style=none] (49) at (-5, 5) {};
		\node [style=none] (50) at (-5, 6) {};
		\node [style=none] (51) at (-4, 5.5) {$\cl_{\chi}$};
		\node [style=effect] (54) at (-5, 7) {$\psi$};
		\node [style=none] (55) at (-6.75, -1.5) {};
		\node [style=none] (56) at (-3.25, -1.5) {};
		\node [style=none] (57) at (-6.75, -5) {};
		\node [style=none] (58) at (-3.25, -5) {};
		\node [style=none] (59) at (-6.25, -2.5) {};
		\node [style=none] (60) at (-3.75, -2.5) {};
		\node [style=none] (61) at (-6.25, -4.5) {};
		\node [style=none] (62) at (-3.75, -4.5) {};
		\node [style=none] (63) at (-5, -1.5) {};
		\node [style=none] (64) at (-5, -2.5) {};
		\node [style=none] (65) at (-6, -2) {$\cl_{\chi}$};
		\node [style=state] (67) at (-5, -3.5) {$\psi$};
		\node [style=none] (68) at (9, 2) {.};
	\end{pgfonlayer}
	\begin{pgfonlayer}{edgelayer}
		\draw [in=90, out=0, looseness=1.25] (32) to (34);
		\draw [in=0, out=-90, looseness=1.25] (34) to (31);
		\draw [in=270, out=180, looseness=1.25] (31) to (33.center);
		\draw [in=180, out=90, looseness=1.25] (33.center) to (32);
		\draw (31) to (15);
		\draw (32) to (35);
		\draw [in=90, out=0, looseness=1.25] (38) to (40);
		\draw [in=0, out=-90, looseness=1.25] (40) to (37);
		\draw [in=270, out=180, looseness=1.25] (37) to (39.center);
		\draw [in=180, out=90, looseness=1.25] (39.center) to (38);
		\draw (41.center) to (42.center);
		\draw (45.center) to (46.center);
		\draw (47.center) to (48.center);
		\draw (48.center) to (46.center);
		\draw (45.center) to (47.center);
		\draw (41.center) to (43.center);
		\draw (43.center) to (44.center);
		\draw (44.center) to (42.center);
		\draw (54) to (50.center);
		\draw (55.center) to (56.center);
		\draw (59.center) to (60.center);
		\draw (61.center) to (62.center);
		\draw (62.center) to (60.center);
		\draw (59.center) to (61.center);
		\draw (55.center) to (57.center);
		\draw (57.center) to (58.center);
		\draw (58.center) to (56.center);
		\draw (37) to (63.center);
		\draw (64.center) to (67);
		\draw (49.center) to (38);
	\end{pgfonlayer}
\end{tikzpicture} 
\end{equation}
\item Update commutativity. For every right $\chi$-consistent measurement result $m$,
\be
\label{update}
\frac{(\id \tc m)\ml_{\chi}(\ket{\psi})}{\lv (\id \tc m)\ml_{\chi}(\ket{\psi}) \rv} = \ml_{\chi} \left( \frac{(\id \tc m)\ket{\psi}}{\lv (\id \tc m)\ket{\psi} \rv} \right) \, 
\ee
or equivalently 
\begin{equation}
 \begin{tikzpicture}
	\begin{pgfonlayer}{nodelayer}
		\node [style=state] (15) at (5, -2.75) {$\psi$};
		\node [style=none] (28) at (-8, 0.5) {};
		\node [style=white] (31) at (5, -1.25) {$\chi$};
		\node [style=white] (32) at (5, 2.75) {$\chi$};
		\node [style=none] (33) at (3, 0.75) {};
		\node [style=sqr] (34) at (7, 0.75) {$m$};
		\node [style=none] (36) at (-3, 0) {$=$};
		\node [style=white] (37) at (-8, 0.5) {$\chi$};
		\node [style=white] (38) at (-8, 4.5) {$\chi$};
		\node [style=none] (39) at (-10, 2.5) {};
		\node [style=sqr] (40) at (-6, 2.5) {$m$};
		\node [style=none] (55) at (-9.75, -0.75) {};
		\node [style=none] (56) at (-6.25, -0.75) {};
		\node [style=none] (57) at (-9.75, -4.25) {};
		\node [style=none] (58) at (-6.25, -4.25) {};
		\node [style=none] (59) at (-9.25, -1.75) {};
		\node [style=none] (60) at (-6.75, -1.75) {};
		\node [style=none] (61) at (-9.25, -3.75) {};
		\node [style=none] (62) at (-6.75, -3.75) {};
		\node [style=none] (63) at (-8, -0.75) {};
		\node [style=none] (64) at (-8, -1.75) {};
		\node [style=none] (65) at (-9, -1.25) {$\cl_{\chi}$};
		\node [style=state] (67) at (-8, -2.75) {$\psi$};
		\node [style=none] (69) at (-8, 6.5) {};
		\node [style=none] (70) at (1, 5) {};
		\node [style=none] (71) at (9, 5) {};
		\node [style=none] (72) at (1, -4.25) {};
		\node [style=none] (73) at (9, -4.25) {};
		\node [style=none] (74) at (1.5, 4) {};
		\node [style=none] (75) at (8.5, 4) {};
		\node [style=none] (76) at (1.5, -3.75) {};
		\node [style=none] (77) at (8.5, -3.75) {};
		\node [style=none] (78) at (5, 5) {};
		\node [style=none] (79) at (5, 4) {};
		\node [style=none] (80) at (2.25, 4.5) {$\cl_{\chi}$};
		\node [style=none] (81) at (5, 6.5) {};
		\node [style=none] (82) at (-10.25, 5.5) {$\frac{1}{\gamma}$};
		\node [style=none] (83) at (2.5, 3) {$\frac{1}{\delta}$};
	\end{pgfonlayer}
	\begin{pgfonlayer}{edgelayer}
		\draw [in=90, out=0, looseness=1.25] (32) to (34);
		\draw [in=0, out=-90, looseness=1.25] (34) to (31);
		\draw [in=270, out=180, looseness=1.25] (31) to (33.center);
		\draw [in=180, out=90, looseness=1.25] (33.center) to (32);
		\draw (31) to (15);
		\draw [in=90, out=0, looseness=1.25] (38) to (40);
		\draw [in=0, out=-90, looseness=1.25] (40) to (37);
		\draw [in=270, out=180, looseness=1.25] (37) to (39.center);
		\draw [in=180, out=90, looseness=1.25] (39.center) to (38);
		\draw (55.center) to (56.center);
		\draw (59.center) to (60.center);
		\draw (61.center) to (62.center);
		\draw (62.center) to (60.center);
		\draw (59.center) to (61.center);
		\draw (55.center) to (57.center);
		\draw (57.center) to (58.center);
		\draw (58.center) to (56.center);
		\draw (37) to (63.center);
		\draw (64.center) to (67);
		\draw (70.center) to (71.center);
		\draw (74.center) to (75.center);
		\draw (76.center) to (77.center);
		\draw (77.center) to (75.center);
		\draw (74.center) to (76.center);
		\draw (70.center) to (72.center);
		\draw (72.center) to (73.center);
		\draw (73.center) to (71.center);
		\draw (81.center) to (78.center);
		\draw (79.center) to (32);
		\draw (69.center) to (38);
	\end{pgfonlayer}
\end{tikzpicture} 
\end{equation}
with $\gamma = \lv (\id \tc m)\ml_{\chi}(\ket{\psi}) \rv$ and $\delta = \lv (\id \tc m)\ket{\psi} \rv$.
\end{itemize}
Where we have represented for any extensions $\chi$ of $\nu$, the map $\cl_{\chi}$ as
\begin{equation}
 \begin{tikzpicture}
	\begin{pgfonlayer}{nodelayer}
		\node [style=none] (0) at (-11.5, 4) {};
		\node [style=none] (1) at (-5.5, 4) {};
		\node [style=none] (2) at (-11.5, -3) {};
		\node [style=none] (3) at (-5.5, -3) {};
		\node [style=none] (4) at (-10.5, 2) {};
		\node [style=none] (5) at (-6.5, 2) {};
		\node [style=none] (6) at (-10.5, -2) {};
		\node [style=none] (7) at (-6.5, -2) {};
		\node [style=none] (8) at (-8.5, 4) {};
		\node [style=none] (9) at (-8.5, 2) {};
		\node [style=none] (10) at (-10.5, 3) {$\cl_{\chi}$};
		\node [style=none] (11) at (-3.5, 0) {$:$};
		\node [style=none] (12) at (-8.5, 5) {};
		\node [style=none] (13) at (-8.5, 1) {};
		\node [style=none] (14) at (-1, 2) {};
		\node [style=state] (15) at (-1, -2) {$\psi$};
		\node [style=none] (16) at (4, 0) {};
		\node [style=none] (17) at (6.5, 4) {};
		\node [style=none] (18) at (12.5, 4) {};
		\node [style=none] (19) at (6.5, -3) {};
		\node [style=none] (20) at (12.5, -3) {};
		\node [style=none] (21) at (7.5, 2) {};
		\node [style=none] (22) at (11.5, 2) {};
		\node [style=none] (23) at (7.5, -2) {};
		\node [style=none] (24) at (11.5, -2) {};
		\node [style=none] (25) at (9.5, 4) {};
		\node [style=none] (26) at (9.5, 2) {};
		\node [style=none] (27) at (7.5, 3) {$\cl_{\chi}$};
		\node [style=none] (28) at (9.5, 6) {};
		\node [style=none] (29) at (9.5, 1) {};
		\node [style=state] (30) at (9.5, -0.5) {$\psi$};
		\node [style=none] (31) at (1.75, 0) {};
		\node [style=none] (32) at (15, 0) {.};
	\end{pgfonlayer}
	\begin{pgfonlayer}{edgelayer}
		\draw (0.center) to (1.center);
		\draw (4.center) to (5.center);
		\draw (6.center) to (7.center);
		\draw (7.center) to (5.center);
		\draw (4.center) to (6.center);
		\draw (0.center) to (2.center);
		\draw (2.center) to (3.center);
		\draw (3.center) to (1.center);
		\draw (17.center) to (18.center);
		\draw (21.center) to (22.center);
		\draw (23.center) to (24.center);
		\draw (24.center) to (22.center);
		\draw (21.center) to (23.center);
		\draw (17.center) to (19.center);
		\draw (19.center) to (20.center);
		\draw (20.center) to (18.center);
		\draw (13.center) to (9.center);
		\draw (8.center) to (12.center);
		\draw (14.center) to (15);
		\draw (26.center) to (30);
		\draw (25.center) to (28.center);
		\draw [style=arrow] (31.center) to (16.center);
	\end{pgfonlayer}
\end{tikzpicture} 
\end{equation}

\end{definition}

First note that this definition does not assume any sort of linearity. On the contrary, the linearity will be derived from these properties  in Theorem \ref{linearity}. Let us now explain this definition. Let $\nu$ be a minimal splitting map, it represents a system without any environment. An evolution of this system should give rise to an evolution of all the systems that are extensions of $\nu$; these evolutions are described by the family $\ml_{\chi}$. Then, any $\zeta$ such that $\chi \sqsubseteq \zeta$ is describing a subsystem $\stloc_{\LL}(\zeta)$ of $\cl(\ch^{\chi})$ that is bigger than $\stloc_{\LL}(\chi)$. Equation \ref{extension} can thus be understood as asking that when acting on a state that is separable with respect to $\zeta$, the evolution $\ml_{\chi}$ should only act on the left part of the state which is already a bigger system than the one described by $\nu$. Moreover this local evolution should be the one associated to $\mu$ which separates $\zeta$ into $\chi$ on the left and $\bar{\chi} \cap \zeta$ on the right and is itself an extension of $\nu$ because $\cons_{\LL}(\mu) = \cons_{\LL}(\chi) = \cons_{\LL}(\nu)$ as stated in Remark \ref{good dots}. The two other postulates are more explicit. Indeed they are asking that for any $\chi$ embedding our initial system into a bigger one, applying $\ml_{\chi}$ should not affect the probabilities of any measurement done on the right of $\chi$, i.e. on the environment of the system we're evolving. And that our evolution should commute with the updates coming from measurement on the environment. 

Before proving our main characterisation, we prove a preliminary result, which establishes a form of consistency for locally-applicable transformations in the non-factor case that does not appear in the non-factor setting. Indeed, for non-factor systems, we must content with the possibility that two states might be inconsistent and so not tensor-able (represented by the isometric rather than unitary nature of splitting maps). In this case, we note that local-applicability entails that a transformation cannot, when applied locally, modify the tensor-ability, that is, the consistency, between two states. 
\begin{proposition}
\label{propositionnorms}
Consider the splitting maps $\chi, \zeta, \mu, \xi$ satisfying $\left(\left(\cdot \tm \cdot\right) \tz \cdot\right) = \left( \cdot \tc \left(\cdot \tx \cdot \right)\right)$ as in the premise of Equation \ref{extension} and suppose that $\cl_{\mu}$ satisfies Equation \ref{probas} for any measurement result $m$ on $\ch_{\R}^{\mu}$, then for all states $\ket{\phi}$ and $\ket{\psi}$
\be
\label{local norms}
\lv \ket{\phi} \tz \ket{\psi} \rv = \lv \ml_{\mu}(\ket{\phi}) \tz \ket{\psi} \rv
\ee
\end{proposition}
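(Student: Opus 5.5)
We compute the norm on the left-hand side of \eqref{local norms} by moving the right factor $\ket{\psi}$ through the comprehension relation, and then read off that it is the probability of a right $\mu$-consistent measurement result. No superluminal signalling for $\ml_{\mu}$ will then give the equality directly.

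\textbf{Step 1: reduce to a projector on $\chlz$.} Since $\zeta$ is an isometry, $\zeta\zeta^{\dagger}$ is the orthogonal projector onto its range. Hence
\[
\lv \ket{\phi} \tz \ket{\psi} \rv^2 = \bra{\phi}\otimes\bra{\psi}\,\zeta\zeta^{\dagger}\,\ket{\phi}\otimes\ket{\psi} .
\]
Now partially contract with $\ket{\psi}$ on $\chrz$. Let $P_\psi$ be the positive operator on $\chlz$ given by
\[
P_\psi = (\id\otimes\bra{\psi})\,\zeta\zeta^{\dagger}\,(\id\otimes\ket{\psi}).
\]
Then $\lv \ket{\phi}\tz\ket{\psi}\rv^2 = \bra{\phi}P_\psi\ket{\phi}$. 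The same formula holds with $\ket{\phi}$ replaced by $\ml_\mu(\ket{\phi})$. So it suffices to show that $\bra{\phi}P_\psi\ket{\phi}$ is preserved by $\ml_\mu$.

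\textbf{Step 2: identify $P_\psi$ as a right $\mu$-local operator.} Using $(\mu\otimes\id)\zeta = (\id\otimes\xi)\chi$ and the fact that $\mu$ is an isometry, we have $\zeta = (\mu^{\dagger}\otimes\id)(\id\otimes\xi)\chi$. Therefore
\[
\zeta\zeta^{\dagger} = (\mu^{\dagger}\otimes\id)(\id\otimes\xi)\,\chi\chi^{\dagger}\,(\id\otimes\xi^{\dagger})(\mu\otimes\id).
\]
The projector $\chi\chi^\dagger$ onto $\bigoplus_i \ch^i_{\LL}\otimes\ch^i_{\R}$ has the form $\sum_i \Pi^{\LL}_i\otimes \Pi^{\R}_i$, where the $\Pi^{\LL}_i$ are the central block projectors of $\cons_{\LL}(\chi)$. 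Substituting, $P_\psi$ takes the form $\mu^{\dagger}(\id\otimes M)\mu = \id\tm M$, where $M$ is the positive operator on $\ch_{\MM}$ given by
\[
M = \sum_i (\id\otimes\bra{\psi})\,\xi\,\Pi^{\R}_i\,\xi^{\dagger}\,(\id\otimes\ket{\psi}),
\]
up to the identification of the $\Pi^{\LL}_i$ through $\mu$. Here Remark \ref{good dots} is used: because $\cons_{\LL}(\mu)=\cons_{\LL}(\chi)$, the $\Pi^{\LL}_i$ act on the $\chlc$ output of $\mu$ and can be absorbed into $\mu$ (they act as the identity on its range, since $\mu\mu^\dagger$ commutes with them). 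One then needs to check that $M$ lies in $\cons_{\R}(\mu)$, or at least can be replaced by an element of it without changing $\id\tm M$. The plan is to compress by the projector $\mu\mu^\dagger$, whose commutant structure is block diagonal in the canonical decomposition. This step is the main obstacle: it requires carefully tracking the canonical block structure of $\mu$ and $\chi$ to show that only the block-diagonal (consistent) part of $M$ contributes.

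\textbf{Step 3: apply no signalling.} $M$ is positive and bounded, and consistent on the right of $\mu$. Spectrally decompose it into right $\mu$-consistent projectors, $M=\sum_k \lambda_k m_k$; finite dimensionality and the fact that $\cons_{\R}(\mu)$ is a von Neumann algebra make this possible. Equation \eqref{probas} for $\ml_\mu$ gives
\[
\bra{\phi}(\id\tm m_k)\ket{\phi}=\lv(\id\tm m_k)\ket{\phi}\rv^2=\lv(\id\tm m_k)\ml_\mu(\ket{\phi})\rv^2 ,
\]
where the first equality holds because $\id\tm m_k$ is a projector for consistent $m_k$. Summing over $k$ with weights $\lambda_k$ yields $\bra{\phi}P_\psi\ket{\phi}=\bra{\ml_\mu(\phi)}P_\psi\ket{\ml_\mu(\phi)}$, which is \eqref{local norms}.
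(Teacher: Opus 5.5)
\textbf{Step 2 has a genuine gap.} Your overall architecture is sound and differs from the paper's. However, Step 2 is not established, and the justification you give for it is false. You flag this step yourself as the main obstacle.

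The projectors $\Pi^{\LL}_i\otimes\id$ do not act as the identity on the range of $\mu$. Write $\ch_{\MM}=\bigoplus_i\ch^{i}_{\MM}$ for the right blocks of $\mu$. Then that range is $\bigoplus_i \ch^{i}_{\LL}\otimes\ch^{i}_{\MM}$, and $\Pi^{\LL}_i\otimes\id$ is the identity on the $i$-th summand and zero on the others. Commuting with $\mu\mu^{\dagger}$ only means it preserves the range. So as written you have not shown that $P_\psi$ is right $\mu$-local at all, let alone of the form $\id\tm M$ with $M$ consistent. Without that, Step 3 has nothing to act on.

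\textbf{The missing idea is short.} What $\cons_{\LL}(\mu)=\cons_{\LL}(\chi)$ actually gives you is that $\mu$ and $\chi$ have the same left blocks. Canonicity of $\mu$ then gives $(\Pi^{\LL}_i\otimes\id)\mu=(\id\otimes Q_i)\mu$, where $Q_i$ is the projector of $\ch_{\MM}$ onto $\ch^{i}_{\MM}$. Set $M_i=(\id\otimes\bra{\psi})\xi\,\Pi^{\R}_i\,\xi^{\dagger}(\id\otimes\ket{\psi})$. Since $\Pi^{\LL}_i\otimes M_i=(\Pi^{\LL}_i\otimes\id)(\id\otimes M_i)(\Pi^{\LL}_i\otimes\id)$, you get
\[
P_\psi=\sum_i\mu^{\dagger}(\Pi^{\LL}_i\otimes M_i)\mu=\mu^{\dagger}\Bigl(\id\otimes\sum_i Q_iM_iQ_i\Bigr)\mu .
\]
So $P_\psi=\id\tm M$ with $M=\sum_i Q_iM_iQ_i$ positive and block diagonal, hence in $\cons_{\R}(\mu)$. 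No compression argument is needed.

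You can go further by applying the comprehension equation to $\chi^{\dagger}(a\otimes r)$, with $a\in\ch^{i}_{\LL}$ and $r\in\Pi^{\R}_i\chrc$. This shows $\xi\Pi^{\R}_i=(Q_i\otimes\id)\xi\Pi^{\R}_i$, hence $Q_iM_iQ_i=M_i$. So your formula $M=(\id\otimes\bra{\psi})\xi\xi^{\dagger}(\id\otimes\ket{\psi})$ is in fact correct, just not for the reason you gave. With this fix, Step 3 goes through exactly as you wrote it.

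\textbf{Comparison with the paper.} The paper decomposes along the canonical blocks $\pi^{i}=\pi^{i}_{\LL}\tx\pi^{i}_{\R}$ of $\xi$. It then uses Pythagoras and the multiplicativity of $\lv\cdot\tz\cdot\rv$ inside each block, and applies no-signalling to the projectors $\pi^{i}_{\LL}$. Your route instead bundles the whole $\psi$-dependence into a single effect on $\ch_{\MM}$. It therefore needs \eqref{probas} only for right $\mu$-consistent projectors. That is exactly what Definition \ref{deflocalevolutions} provides when the proposition is used in Theorem \ref{linearity}.
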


\begin{proof}
Because $\xi$ is a canonical splitting map, it is by definition of the form
\be
\xi :  \chrc = \bigoplus_i (\ch_{\LL}^{i} \otimes \ch_{\R}^{i}) \hookrightarrow (\bigoplus_i \ch_{\LL}^{i}) \otimes (\bigoplus_i  \ch_{\R}^{i}) = \chlx \otimes \chrx \, .
\ee
For all $i$, we define by $\pi^{i}$ the orthogonal projector of $\ch^{\xi} = \chrc = \bigoplus_i (\ch_{\LL}^{i} \otimes \ch_{\R}^{i})$ on the subspace $\ch_{\LL}^{i} \otimes \ch_{\R}^{i}$. The $\pi^{i}$ for a family of pairwise orthogonal projectors that sum to the identity (of $\chrc$). Moreover each $\pi_i$ can be written as $\pi^{i} = \pi_{\LL}^{i} \tx \pi_{\R}^{i}$ where $\pi_{\LL}^{i}$ is the orthogonal projector of $\bigoplus_i \ch_{\LL}^{i}$ onto $\ch_{\LL}^{i}$ and symmetrically for $\pi_{\R}^{i}$. We can then compute  
\be
\begin{split}
\| \ml_{\mu}(\ket{\phi}) \tz \ket{\psi} \| & = \| (\id \tc \sum_i \pi_i)(\ml_{\mu}(\ket{\phi}) \tz \ket{\psi}) \| \\
& = \| \sum_i  (\id \tc \pi_i)(\ml_{\mu}(\ket{\phi}) \tz \ket{\psi}) \| \\
& = \sqrt{\sum_i \|(\id \tc \pi_i)(\ml_{\mu}(\ket{\phi}) \tz \ket{\psi}) \|^{2}}\\
& = \sqrt{\sum_i \|(\id \tc (\pi_i^{l} \tx \pi_i^{r}))(\ml_{\mu}(\ket{\phi}) \tz \ket{\psi}) \|^{2}}\\
& = \sqrt{\sum_i \|((\id \tm \pi_i^{l}) \tz \pi_i^{r})(\ml_{\mu}(\ket{\phi}) \tz \ket{\psi}) \|^{2}}\\
& = \sqrt{\sum_i \|(\id \tm \pi_i^{l})\ml_{\mu}(\ket{\phi}) \tz \pi_i^{r} \ket{\psi}) \|^{2}}\\
& = \sqrt{\sum_i \|(\id \tm \pi_i^{l})\ml_{\mu}(\ket{\phi}) \|^{2} \| \pi_i^{r} \ket{\psi}) \|^{2}}\\
& = \sqrt{\sum_i \|(\id \tm \pi_i^{l})\ket{\phi} \|^{2} \| \pi_i^{r} \ket{\psi}) \|^{2}}\\
& = \sqrt{\sum_i \|(\id \tm \pi_i^{l})\ket{\phi} \tz \pi_i^{r} \ket{\psi}) \|^{2}}\\
& = \| \ket{\phi} \tz \ket{\psi} \| \, ,
\end{split}
\ee
where the third equality is the Pythagorean theorem, the sixth equality comes from the fact that $(\id \tm \pi_{\LL}^{i})$ and $\pi_{\R}^{i}$ are respectively left $\zeta$-consistent and right $\zeta$-consistent, the seventh from the fact that $\pi_{\LL}^{i} \tx \pi_{\R}^{j} = \delta_{ij} \pi^{i}$ and that $\zeta$ is an isometry, the eighth comes from Equation \ref{probas} and the last comes from applying all the previous steps in the inverse direction. 
\end{proof}

\begin{theorem}
\label{linearity}
A family of maps $(\cl_{\chi})$ indexed by the extensions $\chi$ of $\nu$, forms a local evolution associated to the splitting map $\nu : \ch_{\LL} \rightarrow \ch_{\LL} \otimes \ch_{\R}$ if and only if there exists a linear map $L \in \cl(\ch_{\LL})$, which is moreover a unitary element of $\stloc_{\LL}(\nu)$, such that for every extension $\chi$ of $\nu$, $\cl_{\chi} = L \tc \id$. 
\end{theorem}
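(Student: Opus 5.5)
The plan is to treat the two directions separately, with nearly all the work in the forward direction.

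\emph{Converse direction.} Let $L$ be a unitary in $\stloc_{\LL}(\nu)=\cons_{\LL}(\nu)=\bigoplus_i \cl(\ch_{\LL}^i)$, which is Proposition \ref{shapeofminimal}. For any extension $\chi$, Proposition \ref{shapeofextensions} gives $\cons_{\LL}(\chi)=\cons_{\LL}(\nu)$. Hence $L\tc\id$ is unitary and block diagonal, it commutes with every $\id\tc m$ for $m$ right $\chi$-consistent, and it preserves norms. This yields Equations \ref{probas} and \ref{update} directly. For Equation \ref{extension} I would use the comprehension identity $\left(\left(\cdot \tm \cdot\right) \tz \cdot\right) = \left( \cdot \tc \left(\cdot \tx \cdot \right)\right)$ together with the consistency of $L$ on both $\chi$ and $\mu$. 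This gives $(L\tc\id)(\ket\phi\tz\ket\psi)=(L\tm\id)\ket\phi\tz\ket\psi$, and the normalisations then match.

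\emph{Forward direction, reduction to a factor-type setting.} The key trick is to pick a particular extension that behaves like a tensor product with a large environment. Write $\nu$ as in Proposition \ref{shapeofminimal}, and take $\chi_0:\ch_0=\bigoplus_i \ch_{\LL}^i\otimes\ck\hookrightarrow \ch_{\LL}\otimes(\bigoplus_i\ck)$ with $\ck$ of large dimension.

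First, I would show that $\ml_{\chi_0}$ maps each block $\ch_{\LL}^i\otimes\ck$ into itself. The projectors $\id\tc\pi_{\R}^i$ are right-consistent measurements. By no signalling, their probabilities are preserved, so a state supported in block $i$ stays in block $i$.

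Second, within each block, the states $\ket{\phi}\tc\ket{\psi}$ with $\ket\psi\in\ck$ in block $i$ are genuinely separable. Using update commutativity with projectors $\id\tc m$ for $m$ supported in block $i$, together with no signalling, I would essentially reproduce the argument of \cite{wilson2023originlinearityunitarityquantum} blockwise. Its conclusion is that $\ml_{\chi_0}$ restricted to the block is $U_i\otimes\id$ up to a phase, with $U_i$ unitary on $\ch_{\LL}^i$.

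Third, the delicate point is coherence across blocks. A superposition $\sum_i c_i\ket{v_i}$ must be sent to $\sum_i c_i U_i\ket{v_i}$ with phases fixed independently of the $c_i$. I would get this by applying Gisin-type steering: a maximally entangled environment lets a measurement on the right prepare arbitrary left superpositions across blocks, while update commutativity forces the action to commute with those preparations. This is exactly the mechanism that yields linearity in the factor case. Setting $L=\bigoplus_i U_i$ gives a unitary in $\cons_{\LL}(\nu)$, and hence $\ml_{\chi_0}=L\tc\id$.

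\emph{Transfer to all extensions.} Finally, I would propagate the formula to an arbitrary extension $\chi$ using local applicability. The idea is to realise both $\chi$ and $\chi_0$ under a common $\zeta$ with $\chi\sqsubseteq\zeta$, or to embed environments. Proposition \ref{propositionnorms} guarantees that the normalisations $\alpha,\beta$ coincide, so Equation \ref{extension} gives $\ml_\chi(\ket\phi\tz\ket\psi)=\ml_\mu(\ket\phi)\tz\ket\psi$ on product states. These states span the space, and linearity on $\ml_\mu$ together with norm preservation then forces $\ml_\chi=L\tc\id$.

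The main obstacle is the third step: establishing linearity across different blocks, where no tensor-product environment can directly entangle states of different sectors. The steering argument must be carried out within the isometric, non-surjective structure of $\chi_0$.
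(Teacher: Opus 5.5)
Your converse direction is correct and is essentially the paper's argument. In the forward direction you take a different route: block preservation, then a blockwise appeal to the factor theorem, then gluing. The paper never splits into blocks. It proves linearity of each $\ml_{\chi}$ directly, by a teleportation argument on the tower $\chi_{1/234}\sqsubseteq\chi_{1/23456}$ using the cross-block coherent state $\ket{\phi_+}$ and the projector $\sum_k\ketbra{\phi_k}{\phi_k}$. Only afterwards does it identify $\ml_{\chi}=\ml_{\nu}\tc\id$.

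As written, your route has two genuine gaps.

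\textbf{Step 3 is never proved.} You call it the main obstacle but only sketch a steering argument, and that sketch would not close the gap. Right $\chi_0$-consistent projectors are block diagonal, so from one fixed resource state they cannot prepare cross-block superpositions with prescribed relative amplitudes. The paper gets around this with an extra teleportation layer. In your route, however, no steering is needed. Let $P_i=\id\tc\pi^{i}_{\R}$ be the projector onto the $i$-th block. Equation \ref{update} with $m=\pi^{i}_{\R}$, whose normalising constant Equation \ref{probas} fixes to $\lv P_i\ket{\Psi}\rv$, gives
\[
P_i\,\ml_{\chi}(\ket{\Psi})=\lv P_i\ket{\Psi}\rv\,\ml_{\chi}\big(P_i\ket{\Psi}/\lv P_i\ket{\Psi}\rv\big)
\]
whenever $P_i\ket{\Psi}\neq 0$, and $P_i\ml_{\chi}(\ket{\Psi})=0$ otherwise. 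Summing over $i$, $\ml_{\chi}$ is determined as a vector-valued map by its values on normalised single-block vectors. Relative phases are included, since the constants are positive reals. So once Step 2 gives exactly $U_i\otimes\id$ there (a constant phase is absorbed into $U_i$), $\ml_{\chi}=L\tc\id$ with $L=\bigoplus_i U_i$ follows at once.

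\textbf{The transfer step is invalid.} Equation \ref{extension} for $\chi\sqsubseteq\zeta$ only fixes $\ml_{\chi}$ on normalised vectors $\ket{\phi}\tz\ket{\psi}$. That these vectors span $\ch^{\chi}$ says nothing about $\ml_{\chi}$ on other vectors, such as those entangled inside a block $\ch^{i}_{\LL}\otimes\ch^{i}_{\R}$, unless $\ml_{\chi}$ is already known to be linear. That is exactly what must be proved, and linearity of $\ml_{\mu}$ plus norm preservation does not substitute for it. The paper uses this spanning step only after establishing linearity of every $\ml_{\chi}$. Moreover, $\chi_0$ with a fixed $\mathcal{K}$ generally cannot be the $\mu$ of a comprehension $\chi\sqsubseteq\zeta$: that would require every $\dim\ch^{i}_{\R}$ to be a multiple of $\dim\mathcal{K}$.

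There are two ways to repair this.

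\emph{Drop the transfer and run Step 2 for every $\chi$.} The factor theorem concerns a whole family of environments linked by local applicability, so it cannot be run on $\chi_0$ alone in any case. For block $i$, take the extensions whose block-$i$ environment is an arbitrary $X$ and whose other blocks are as in $\chi$. Check the factor axioms through explicit comprehensions. Use $\ml_{\nu}|_{\ch^{i}_{\LL}}$ as the empty-environment map; it stays in $\ch^{i}_{\LL}$ by no signalling for $\nu$ with $m=\ketbra{i}{i}$. This gives $U_i=\ml_{\nu}|_{\ch^{i}_{\LL}}$, independently of $\chi$.

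\emph{Reverse the transfer.} Choose $\dim\mathcal{K}$ divisible by every $\dim\ch^{i}_{\R}$, so that $\chi$ is the $\mu$ witnessing $\chi_0\sqsubseteq\zeta$. Equation \ref{extension} then makes $((L\tc\id)\ket{\phi})\tz\ket{\psi}$ a positive multiple of $\ml_{\chi}(\ket{\phi})\tz\ket{\psi}$ for every $\ket{\phi}\in\ch^{\chi}$. Taking $\ket{\psi}$ with a nonzero component in every block makes $v\mapsto v\tz\ket{\psi}$ injective, which yields $\ml_{\chi}=L\tc\id$.
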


\begin{proof}
Let $\nu$ be a canonical splitting map and let us prove that a local evolution associated to $\nu$ is a family of linear maps of the form $\cl_{\chi} = L \tc \id$. Let $\chi$ be a splitting map extending $\nu$ and let us prove that $\cl_{\chi}$ is linear. We remind that $\nu$ and $\chi$ are of the form 
\begin{equation}
\nu : \ch_{\LL} = \bigoplus_{i=1}^N (\ch_{\LL}^{i} \otimes \mathbb{C}) \hookrightarrow (\bigoplus_{i=1}^N  \ch_{\LL}^{i}) \otimes (\bigoplus_{i=1}^N   \mathbb{C}) = \ch_{\LL} \otimes \mathbb{C}^{N} \, , 
\end{equation}
\begin{equation}
\chi : \ch = \bigoplus_{i=1}^N (\ch_{\LL}^{i} \otimes \ch_{\R}^{i}) \hookrightarrow (\bigoplus_{i=1}^N  \ch_{\LL}^{i}) \otimes (\bigoplus_{i=1}^N   \ch_{\R}^{i}) = \ch_{\LL} \otimes \ch_{\R} \, .
\end{equation}
Let us now define the family of splitting maps $\chi_{k \cdots l / l+1 \cdots m}$, for $k \leq l \leq m \in \mathbb{Z}^{*}$, by 
\begin{equation}
\chi_{k \cdots l / l+1 \cdots m} : \bigoplus_{i=1}^N \left( \ch_{k}^{i} \otimes \cdots \otimes \ch_{m}^{i} \right) \hookrightarrow \left( \bigoplus_{i=1}^N \ch_{k}^{i} \otimes \cdots \otimes \ch_{l}^{i} \right) \otimes \left( \bigoplus_{i=1}^N \ch_{l+1}^{i} \otimes \cdots \otimes \ch_{m}^{i} \right)
\end{equation}
where $\ch_{j}^{i} = \ch_{\LL}^{i}$ when $j$ is odd and $\ch_{j}^{i} = \ch_{\R}^{i}$ when $j$ is even. We will write $\chi_{/l \cdots m}$ when $k=l$ and $\chi_{k \cdots l/ }$ when $l=m$. The $\chi_{k \cdots l / l+1 \cdots m}$ form a family of  canonical splitting maps such that :
\begin{itemize}
\item $\chi_{1/} = \nu$,
\item $\chi_{1/2} = \chi$,
\item $\chi_{1/2 \cdots l}$ is an extension of $\nu$ for all $l \geq 1$,
\item $\chi_{k \cdots l/l+1 \cdots m} \sqsubseteq \chi_{k \cdots l'/l' +1 \cdots m}$ if and only if $k \leq l \leq l' \leq m$ and this comprehension can be witnessed through the isometries $\mu = \chi_{k \cdots l /l+1 \cdots l'}$ and $\xi = \chi_{l \cdots l' / l'+1 \cdots m}$.
\end{itemize}
Let us call for every $1 \leq i \leq N$, $\ch^{i} = \ch_{\LL}^{i} \otimes \ch_{\R}^{i}$ and $d_i = \dim\left(\ch^{i}\right)$. Let $\{ \ket{j} \}_{1 \leq j \leq d_i}$ be an orthonormal basis of $\ch^{i}$ and define $\ket{\phi_{i}} = \sum_{j=1}^{d_i} \frac{1}{\sqrt{d_i}} \ket{jj}$, $\ket{\phi_{+}} = \sum_{i = 1}^{N} \frac{1}{\sqrt{N}} \ket{\phi_{i}}$ as well as $\ket{\phi} = \sum_{i = 1}^{N} d_i\sqrt{N}\ket{\phi_{i}}$ which is an unormalised state. For any state 
\be
\ket{\psi} = \sum_{k=1}^N \sum_{j=1}^{d_k} \alpha_j \ket{j} \in \ch = \bigoplus_i \ch_{\LL}^{i} \otimes \ch_{\R}^{i}\, , 
\ee
of norm one, we denote 
\be
\beta = \left\Vert \ket{\phi_+} \tc_{1234/56} \ket{\psi} \right\Vert
\ee
and 
\be
\begin{split}
\gamma & = \left\Vert \left (\id \tc_{12/3456 }\sum_{k = 1}^N \ketbra{\phi_k}{\phi_k} \right) \frac{\ket{\phi_+} \tc_{1234/56} \ket{\psi}}{\left\Vert \ket{\phi_+} \tc_{1234/56} \ket{\psi} \right\Vert}  \right\Vert   \\
& = \frac{1}{\beta} \left\Vert \left (\id \tc_{12/3456 }\sum_{k = 1}^N \ketbra{\phi_k}{\phi_k} \right)\ket{\phi_+} \tc_{1234/56} \ket{\psi} \right\Vert \, .
\end{split}
\ee
Note that, by proposition \ref{propositionnorms}, $\beta$ is also equal to 
\be
\beta = \left\Vert \ml_{\chi_{1/234}}(\ket{\phi_+}) \tc_{1234/56} \ket{\psi} \right\Vert
\ee
and that, by equation \ref{probas}, $\gamma$ is also equal to
\be
\begin{split}
\gamma & = \left\Vert \left (\id \tc_{12/3456 }\sum_{k = 1}^N \ketbra{\phi_k}{\phi_k} \right) \frac{\ket{\phi_+} \tc_{1234/56} \ket{\psi}}{\left\Vert \ket{\phi_+} \tc_{1234/56} \ket{\psi} \right\Vert}  \right\Vert \\
& = \left\Vert \left (\id \tc_{1/23456 } \left( \id \tc_{2/3456} \sum_{k = 1}^N \ketbra{\phi_k}{\phi_k} \right) \right) \frac{\ket{\phi_+} \tc_{1234/56} \ket{\psi}}{\left\Vert \ket{\phi_+} \tc_{1234/56} \ket{\psi} \right\Vert}  \right\Vert \\
& = \left\Vert \left (\id \tc_{1/23456 } \left( \id \tc_{2/3456} \sum_{k = 1}^N \ketbra{\phi_k}{\phi_k} \right) \right)\ml_{\chi_{1/23456}} \left(  \frac{\ket{\phi_+} \tc_{1234/56} \ket{\psi}}{\left\Vert \ket{\phi_+} \tc_{1234/56} \ket{\psi} \right\Vert} \right) \right\Vert \\
& = \left\Vert \left (\id \tc_{12/3456 }\sum_{k = 1}^N \ketbra{\phi_k}{\phi_k} \right) \ml_{\chi_{1/23456}} \left(  \frac{\ket{\phi_+} \tc_{1234/56} \ket{\psi}}{\left\Vert \ket{\phi_+} \tc_{1234/56} \ket{\psi} \right\Vert}  \right) \right\Vert \, .
\end{split}
\ee
Note that whichever $\ket{\psi}$ has been chosen, $\beta$ and $\gamma$ are non-zero and we can then compute that 
\be
\begin{split}
&\frac{1}{\beta \gamma} \left( \id \tc_{12/3456 }\sum_{k = 1}^N \ketbra{\phi_k}{\phi_k} \right) \ml_{\chi_{1/234}}\left(\ket{\phi_+}\right) \tc_{1234/56} \ket{\psi} \\
= & \frac{1}{\gamma} \left( \id \tc_{12/3456 }\sum_{k = 1}^N \ketbra{\phi_k}{\phi_k} \right) \ml_{\chi_{1/23456}} \left( \frac{\ket{\phi_+} \tc_{1234/56} \ket{\psi}}{\beta} \right) \\
= & \ml_{\chi_{1/23456}}  \left( \frac{1}{\beta \gamma} \left( \id \tc_{12/3456 }\sum_{k = 1}^N \ketbra{\phi_k}{\phi_k} \right) \ket{\phi_+} \tc_{1234/56} \ket{\psi}\right) \\
= & \ml_{\chi_{1/23456}}\left( \frac{1}{\beta \gamma} \left( \id \tc_{12/3456 }\sum_{k = 1}^N \ketbra{\phi_k}{\phi_k} \right) \sum_{i = 1}^N \sum_{j = 1}^{d_i} \frac{1}{\sqrt{Nd_i}} \left(\ket{j} \tc_{12/34} \ket{j}\right) \tc_{1234/56} \ket{\psi}\right) \\
= & \ml_{\chi_{1/23456}}\left(\frac{1}{\beta \gamma}\left(\id \tc_{12/3456} \sum_{k = 1}^N \ketbra{\phi_k}{\phi_k}\right) \sum_{i = 1}^N \sum_{j = 1}^{d_i} \frac{1}{\sqrt{Nd_i}} \ket{j} \tc_{12/3456} \left(\ket{j} \tc_{34/56} \ket{\psi}\right)\right) \\
= & \ml_{\chi_{1/23456}}\left(\frac{1}{\beta \gamma} \sum_{i = 1}^N \sum_{j = 1}^{d_i} \frac{1}{\sqrt{Nd_i}} \ket{j} \tc_{12/3456} \left(\sum_{k = 1}^N \ketbra{\phi_k}{\phi_k} \left(\ket{j} \tc_{34/56} \ket{\psi}\right)\right)\right)\\
= & \ml_{\chi_{1/23456}}\left(\frac{1}{\beta \gamma} \sum_{i = 1}^N \sum_{j = 1}^{d_i} \frac{1}{\sqrt{Nd_i}} \ket{j} \tc_{12/3456} \left(\sum_{k = 1}^N  \ket{\phi_k} \sum_{l = 1}^{d_k} \frac{1}{d_k}\left(\bra{l} \tc_{34/56} \bra{l}\right) \left(\ket{j} \tc_{34/56} \ket{\psi}\right)\right)\right)\\
= & \ml_{\chi_{1/23456}}\left(\frac{1}{\beta \gamma} \sum_{i = 1}^N \sum_{j = 1}^{d_i} \frac{1}{\sqrt{Nd_i}} \ket{j} \tc_{12/3456} \left(\sum_{k = 1}^N \frac{\alpha_j}{d_k} \ket{\phi_k}\right)\right)\\
= & \ml_{\chi_{1/23456}}\left( \frac{1}{\beta \gamma} \sum_{i = 1}^N \sum_{j = 1}^{d_i} \alpha_j \ket{j} \tc_{12/3456} \left(\sum_{k = 1}^N \frac{1}{d_k\sqrt{N}} \ket{\phi_k}\right)\right)\\
= & \ml_{\chi_{1/23456}}\left(\frac{1}{\beta \gamma} \ket{\psi} \tc_{12/3456} \left(\sum_{k = 1}^N \frac{1}{d_k\sqrt{N}} \ket{\phi_k}\right)\right)\\ 
= & \frac{1}{\beta \gamma} \ml_{\chi_{1/2}}\left(\ket{\psi}\right) \tc_{12/3456} \left(\sum_{k = 1}^N \frac{1}{d_k\sqrt{N}} \ket{\phi_k} \right)
\end{split}
\ee
where the first equality comes from applying Equation \ref{extension} and the second from applying Equation \ref{update} to the right $\chi_{1/23456}$-local measurement 
\be
\id \tc_{12/3456} m = \id \tc_{12/3456} \sum_{k=1}^N \ketbra{\phi_k}{\phi_k} = \id \tc_{1/23456} \sum_{k=1}^N (\id \tc_{2/3456} \ketbra{\phi_k}{\phi_k}) \, .
\ee
The next equalities are simple computations allowing us to rewrite the state on which $\ml_{\chi_{1/23456}}$ is applied as a different $\chi$-product. And the last equality comes from applying Equation \ref{extension} to the normalised state 
\be
\begin{split}
& \frac{1}{\beta \gamma} \ket{\psi} \tc_{12/3456} \left(\sum_{k = 1}^N \frac{1}{d_k\sqrt{N}} \ket{\phi_k}\right)\\
= & \frac{\lambda}{\beta \gamma} \ket{\psi} \tc_{12/3456} \left(\frac{1}{\lambda} \sum_{k = 1}^N \frac{1}{d_k\sqrt{N}} \ket{\phi_k}\right)
\end{split}
\ee
where
\be
\lambda = \left\Vert \sum_{k = 1}^N \frac{1}{d_k\sqrt{N}} \ket{\phi_k} \right\Vert = \sqrt{\sum_{k=1}^{N}\frac{1}{Nd_k^{2}}}
\ee
and then remarking that, by proposition \ref{propositionnorms},
\be
\begin{split}
& \left\Vert \ml_{\chi_{1/2}}\left(\ket{\psi}\right) \tc_{12/3456} \left( \frac{1}{\lambda} \sum_{k = 1}^N \frac{1}{d_k\sqrt{N}} \ket{\phi_k} \right) \right\Vert \\
= & \left\Vert \ket{\psi} \tc_{12/3456} \left( \frac{1}{\lambda} \sum_{k = 1}^N \frac{1}{d_k\sqrt{N}} \ket{\phi_k} \right) \right\Vert \\
= & \frac{\beta \gamma}{\lambda} \, .
\end{split}
\ee
It follows that for any (normalised) state $\ket{\psi} = \sum_{k=1}^{N} \ket{\psi_k} \in \ch$, 
\be
\begin{split}
& \left( \id \tc_{12/3456 }\sum_{k = 1}^N \ketbra{\phi_k}{\phi_k} \right) \ml_{\chi_{1/234}}\left(\ket{\phi_+}\right) \tc_{1234/56} \ket{\psi} \\
= & \ml_{\chi_{1/2}}\left(\ket{\psi}\right) \tc_{12/3456} \left(\sum_{k = 1}^N \frac{1}{d_k\sqrt{N}} \ket{\phi_k} \right)
\end{split}
\ee
which allows us to compute that for any state $\ket{\eta} = \sum_{k=1}^{N} \ket{\eta_k} \in \ch$,
\be
\begin{split}
&  \left(\bra{\eta} \tc_{12/3456} \bra{\phi}\right) \ml_{\chi_{1/234}}\left(\ket{\phi_+}\right) \tc_{1234/56} \ket{\psi} \\
= & \left(\bra{\eta} \tc_{12/3456} \left( \bra{\phi} \sum_{k=1}^N \ketbra{\phi_k}{\phi_k} \right) \right) \ml_{\chi_{1/234}}\left(\ket{\phi_+}\right) \tc_{1234/56} \ket{\psi} \\
= & \left(\bra{\eta} \tc_{12/3456} \bra{\phi}\right)\left( \id \tc_{12/3456 }\sum_{k = 1}^N \ketbra{\phi_k}{\phi_k} \right) \ml_{\chi_{1/234}}\left(\ket{\phi_+}\right) \tc_{1234/56} \ket{\psi} \\
= &  \left(\bra{\eta} \tc_{12/3456} \bra{\phi}\right) \ml_{\chi_{1/2}}\left(\ket{\psi}\right) \tc_{12/3456} \left(\sum_{k = 1}^N \frac{1}{d_k\sqrt{N}} \ket{\phi_k} \right) \\
= & \left( \sum_{k=1}^{N} \bra{\eta_k} \otimes d_k\sqrt{N}\bra{\phi_k} \right) \left( \sum_{k=1}^{N}\ml_{\chi_{1/2}}\left(\ket{\psi}\right)_k \otimes \frac{1}{d_k\sqrt{N}} \ket{\phi_k} \right) \\
= & \sum_{k=1}^{N} \bra{\eta_k}\ml_{\chi_{1/2}}\left(\ket{\psi}\right)_k \braket{\phi_k}{\phi_k} \\
= & \bra{\eta} \ml_{\chi_{1/2}}\left(\ket{\psi}\right) \, .
\end{split}
\ee
This whole computation could also be seen diagramatically as 
\be
\scalebox{0.9}{\begin{tikzpicture}
	\begin{pgfonlayer}{nodelayer}
		\node [style=none] (0) at (-12, -4.75) {};
		\node [style=none] (1) at (-8, -4.75) {};
		\node [style=none] (2) at (-12, -8.75) {};
		\node [style=none] (3) at (-8, -8.75) {};
		\node [style=none] (4) at (-11.25, -5.75) {};
		\node [style=none] (5) at (-8.75, -5.75) {};
		\node [style=none] (6) at (-11.25, -8.25) {};
		\node [style=none] (7) at (-8.75, -8.25) {};
		\node [style=none] (8) at (-10, -4.75) {};
		\node [style=none] (9) at (-10, -5.75) {};
		\node [style=none] (10) at (-10.5, -5.25) {$\cl_{\chi_{1/234}}$};
		\node [style=white] (14) at (-6, -1.75) {$\chi_{1234/56}$};
		\node [style=state] (15) at (-2, -4.75) {$\psi$};
		\node [style=state] (16) at (-10, -7) {$\phi_{+}$};
		\node [style=white] (17) at (-6, 3) {$\chi_{12/3456}$};
		\node [style=effect] (18) at (-10, 6) {$\eta$};
		\node [style=effect] (19) at (-2, 6) {$\phi$};
		\node [style=none] (20) at (0, 0.5) {$=$};
		\node [style=none] (42) at (-14, -19.75) {$=$};
		\node [style=none] (43) at (-10.75, -20.25) {};
		\node [style=none] (44) at (-1.25, -20.25) {};
		\node [style=none] (45) at (-10.75, -28.25) {};
		\node [style=none] (46) at (-1.25, -28.25) {};
		\node [style=none] (47) at (-10, -21.25) {};
		\node [style=none] (48) at (-2, -21.25) {};
		\node [style=none] (49) at (-10, -27.75) {};
		\node [style=none] (50) at (-2, -27.75) {};
		\node [style=none] (51) at (-10, -20.25) {};
		\node [style=none] (53) at (-9.25, -20.75) {$\cl_{\chi_{1/23456}}$};
		\node [style=white] (54) at (-6, -23.25) {$\chi_{1234/56}$};
		\node [style=state] (55) at (-3, -26.25) {$\psi$};
		\node [style=state] (56) at (-9, -26.25) {$\phi_{+}$};
		\node [style=white] (57) at (-6, -17.25) {$\chi_{12/3456}$};
		\node [style=effect] (58) at (-10, -14.25) {$\eta$};
		\node [style=effect] (59) at (-2, -14.25) {$\phi_k$};
		\node [style=state] (60) at (-2, -12.5) {$\phi_k$};
		\node [style=none] (61) at (-3.25, -13.25) {$\sum_k$};
		\node [style=effect] (62) at (-2, -10.75) {$\phi$};
		\node [style=none] (63) at (-6, -21.25) {};
		\node [style=none] (64) at (-6, -20.25) {};
		\node [style=none] (65) at (3.25, 0) {};
		\node [style=none] (66) at (12.75, 0) {};
		\node [style=none] (67) at (3.25, -8) {};
		\node [style=none] (68) at (12.75, -8) {};
		\node [style=none] (69) at (4, -1) {};
		\node [style=none] (70) at (12, -1) {};
		\node [style=none] (71) at (4, -7.5) {};
		\node [style=none] (72) at (12, -7.5) {};
		\node [style=none] (73) at (4, 0) {};
		\node [style=none] (74) at (4.75, -0.5) {$\cl_{\chi_{1/23456}}$};
		\node [style=white] (75) at (8, -3) {$\chi_{1234/56}$};
		\node [style=state] (76) at (11, -6) {$\psi$};
		\node [style=state] (77) at (5, -6) {$\phi_{+}$};
		\node [style=none] (78) at (8, -1) {};
		\node [style=none] (79) at (8, 0) {};
		\node [style=white] (80) at (8, 3) {$\chi_{12/3456}$};
		\node [style=effect] (81) at (4, 6) {$\eta$};
		\node [style=effect] (82) at (12, 6) {$\phi$};
		\node [style=none] (83) at (1, -23) {$=$};
		\node [style=none] (84) at (4.25, -27.5) {};
		\node [style=none] (85) at (13.75, -27.5) {};
		\node [style=none] (86) at (4.25, -35.5) {};
		\node [style=none] (87) at (13.75, -35.5) {};
		\node [style=none] (88) at (5, -28.5) {};
		\node [style=none] (89) at (13, -28.5) {};
		\node [style=none] (90) at (5, -35) {};
		\node [style=none] (91) at (13, -35) {};
		\node [style=none] (92) at (5, -27.5) {};
		\node [style=none] (93) at (5.75, -28) {$\cl_{\chi_{1/23456}}$};
		\node [style=white] (94) at (9, -30.5) {$\chi_{1234/56}$};
		\node [style=state] (95) at (12, -33.5) {$\psi$};
		\node [style=state] (96) at (6, -33.5) {$\phi_{+}$};
		\node [style=white] (97) at (9, -25.25) {$\chi_{12/3456}$};
		\node [style=effect] (98) at (5, -11) {$\eta$};
		\node [style=effect] (99) at (13, -22.25) {$\phi_k$};
		\node [style=state] (100) at (13, -20.5) {$\phi_k$};
		\node [style=none] (101) at (11.75, -21.25) {$\sum_k$};
		\node [style=effect] (102) at (13, -11) {$\phi$};
		\node [style=none] (103) at (9, -28.5) {};
		\node [style=none] (104) at (9, -27.5) {};
		\node [style=white] (105) at (9, -17.75) {$\chi_{12/3456}$};
		\node [style=white] (106) at (9, -14) {$\chi_{12/3456}$};
		\node [style=none] (107) at (5, -22.25) {};
		\node [style=none] (108) at (5, -20.5) {};
		\node [style=none] (109) at (-12, 4) {$\frac{1}{\beta \gamma}$};
		\node [style=none] (110) at (2, 4) {$\frac{1}{\gamma}$};
		\node [style=none] (111) at (5.25, -2) {$\frac{1}{\beta}$};
		\node [style=none] (112) at (-11, -12) {$\frac{1}{\gamma}$};
		\node [style=none] (113) at (-8.75, -22.25) {$\frac{1}{\beta}$};
		\node [style=none] (114) at (3.25, -12.5) {$\frac{1}{\gamma}$};
		\node [style=none] (115) at (6.25, -29.5) {$\frac{1}{\beta}$};
	\end{pgfonlayer}
	\begin{pgfonlayer}{edgelayer}
		\draw (0.center) to (1.center);
		\draw (4.center) to (5.center);
		\draw (6.center) to (7.center);
		\draw (7.center) to (5.center);
		\draw (4.center) to (6.center);
		\draw (0.center) to (2.center);
		\draw (2.center) to (3.center);
		\draw (3.center) to (1.center);
		\draw (17) to (14);
		\draw [in=90, out=0, looseness=1.25] (14) to (15);
		\draw (9.center) to (16);
		\draw [in=-90, out=180, looseness=1.50] (17) to (18);
		\draw [in=-90, out=0, looseness=1.50] (17) to (19);
		\draw [in=90, out=-180, looseness=1.25] (14) to (8.center);
		\draw (43.center) to (44.center);
		\draw (47.center) to (48.center);
		\draw (49.center) to (50.center);
		\draw (50.center) to (48.center);
		\draw (47.center) to (49.center);
		\draw (43.center) to (45.center);
		\draw (45.center) to (46.center);
		\draw (46.center) to (44.center);
		\draw [in=90, out=0, looseness=1.25] (54) to (55);
		\draw [in=-90, out=180, looseness=1.50] (57) to (58);
		\draw [in=-90, out=0, looseness=1.50] (57) to (59);
		\draw (60) to (62);
		\draw [in=180, out=90, looseness=1.25] (56) to (54);
		\draw (57) to (64.center);
		\draw (63.center) to (54);
		\draw (65.center) to (66.center);
		\draw (69.center) to (70.center);
		\draw (71.center) to (72.center);
		\draw (72.center) to (70.center);
		\draw (69.center) to (71.center);
		\draw (65.center) to (67.center);
		\draw (67.center) to (68.center);
		\draw (68.center) to (66.center);
		\draw [in=90, out=0, looseness=1.25] (75) to (76);
		\draw [in=180, out=90, looseness=1.25] (77) to (75);
		\draw (78.center) to (75);
		\draw [in=-90, out=180, looseness=1.50] (80) to (81);
		\draw [in=-90, out=0, looseness=1.50] (80) to (82);
		\draw (80) to (79.center);
		\draw (84.center) to (85.center);
		\draw (88.center) to (89.center);
		\draw (90.center) to (91.center);
		\draw (91.center) to (89.center);
		\draw (88.center) to (90.center);
		\draw (84.center) to (86.center);
		\draw (86.center) to (87.center);
		\draw (87.center) to (85.center);
		\draw [in=90, out=0, looseness=1.25] (94) to (95);
		\draw [in=-90, out=0, looseness=1.50] (97) to (99);
		\draw [in=180, out=90, looseness=1.25] (96) to (94);
		\draw (97) to (104.center);
		\draw (103.center) to (94);
		\draw (105) to (106);
		\draw [in=-90, out=180, looseness=1.25] (106) to (98);
		\draw [in=-90, out=0, looseness=1.25] (106) to (102);
		\draw [in=90, out=0, looseness=1.25] (105) to (100);
		\draw [in=-180, out=90] (108.center) to (105);
		\draw (108.center) to (107.center);
		\draw [in=180, out=-90, looseness=1.25] (107.center) to (97);
	\end{pgfonlayer}
\end{tikzpicture}}
\ee
\be
\scalebox{0.9}{\begin{tikzpicture}
	\begin{pgfonlayer}{nodelayer}
		\node [style=none] (84) at (-12.75, 12) {};
		\node [style=none] (85) at (-1.25, 12) {};
		\node [style=none] (86) at (-12.75, -8) {};
		\node [style=none] (87) at (-1.25, -8) {};
		\node [style=none] (88) at (-12, 11) {};
		\node [style=none] (89) at (-2, 11) {};
		\node [style=none] (90) at (-12, -7.5) {};
		\node [style=none] (91) at (-2, -7.5) {};
		\node [style=none] (92) at (-12, 12) {};
		\node [style=none] (93) at (-11.25, 11.5) {$\cl_{\chi_{1/23456}}$};
		\node [style=white] (94) at (-7, -2.75) {$\chi_{1234/56}$};
		\node [style=state] (95) at (-4, -5.75) {$\psi$};
		\node [style=state] (96) at (-10, -5.75) {$\phi_{+}$};
		\node [style=white] (97) at (-7, 1.5) {$\chi_{12/3456}$};
		\node [style=effect] (98) at (-11, 17) {$\eta$};
		\node [style=effect] (99) at (-3, 4.5) {$\phi_k$};
		\node [style=state] (100) at (-3, 6.25) {$\phi_k$};
		\node [style=none] (101) at (-4.25, 5.5) {$\sum_k$};
		\node [style=effect] (102) at (-3, 17) {$\phi$};
		\node [style=none] (104) at (-7, -2.75) {};
		\node [style=white] (105) at (-7, 9) {$\chi_{12/3456}$};
		\node [style=white] (106) at (-7, 14) {$\chi_{12/3456}$};
		\node [style=none] (107) at (-11, 4.5) {};
		\node [style=none] (108) at (-11, 6.25) {};
		\node [style=none] (109) at (-7, 12) {};
		\node [style=none] (110) at (-7, 11) {};
		\node [style=none] (111) at (4.25, 12) {};
		\node [style=none] (112) at (18.75, 12) {};
		\node [style=none] (113) at (4.25, -8) {};
		\node [style=none] (114) at (18.75, -8) {};
		\node [style=none] (115) at (5, 11) {};
		\node [style=none] (116) at (18, 11) {};
		\node [style=none] (117) at (5, -7.5) {};
		\node [style=none] (118) at (18, -7.5) {};
		\node [style=none] (119) at (5, 12) {};
		\node [style=none] (120) at (5.75, 11.5) {$\cl_{\chi_{1/23456}}$};
		\node [style=state] (122) at (16.5, -5.75) {$\psi$};
		\node [style=state] (123) at (9, -5.75) {$\phi_{+}$};
		\node [style=effect] (125) at (7, 17) {$\eta$};
		\node [style=effect] (126) at (14, 4.5) {$\phi_k$};
		\node [style=state] (127) at (14, 6.25) {$\phi_k$};
		\node [style=none] (128) at (12.75, 5.5) {$\sum_k$};
		\node [style=effect] (129) at (15, 17) {$\phi$};
		\node [style=white] (131) at (11, 9) {$\chi_{12/3456}$};
		\node [style=white] (132) at (11, 14) {$\chi_{12/3456}$};
		\node [style=none] (133) at (6, 1) {};
		\node [style=none] (134) at (6, 3.25) {};
		\node [style=none] (135) at (11, 12) {};
		\node [style=none] (136) at (11, 11) {};
		\node [style=none] (137) at (1.5, 5) {$=$};
		\node [style=white] (138) at (9, -3) {$\chi_{12/34}$};
		\node [style=white] (139) at (14, 2) {$\chi_{34/56}$};
		\node [style=none] (140) at (-13.5, -15) {};
		\node [style=none] (141) at (-4, -15) {};
		\node [style=none] (142) at (-13.5, -22.5) {};
		\node [style=none] (143) at (-4, -22.5) {};
		\node [style=none] (144) at (-12.75, -16) {};
		\node [style=none] (145) at (-4.75, -16) {};
		\node [style=none] (146) at (-12.75, -22) {};
		\node [style=none] (147) at (-4.75, -22) {};
		\node [style=none] (148) at (-12.75, -15) {};
		\node [style=none] (149) at (-12, -15.5) {$\cl_{\chi_{1/23456}}$};
		\node [style=state] (150) at (-11.75, -21) {$\psi$};
		\node [style=effect] (152) at (-12.75, -10) {$\eta$};
		\node [style=state] (154) at (-5.75, -20.75) {$\phi_k$};
		\node [style=none] (155) at (-8.25, -20.75) {$\sum_k\frac{1}{d_k \sqrt{N}}$};
		\node [style=effect] (156) at (-4.75, -10) {$\phi$};
		\node [style=white] (157) at (-8.75, -18) {$\chi_{12/3456}$};
		\node [style=white] (158) at (-8.75, -13) {$\chi_{12/3456}$};
		\node [style=none] (161) at (-8.75, -15) {};
		\node [style=none] (162) at (-8.75, -16) {};
		\node [style=none] (163) at (-15, -16) {$=$};
		\node [style=none] (164) at (-1, -20.75) {};
		\node [style=none] (165) at (2.5, -20.75) {};
		\node [style=none] (166) at (-1, -24.25) {};
		\node [style=none] (167) at (2.5, -24.25) {};
		\node [style=none] (168) at (-0.25, -21.75) {};
		\node [style=none] (169) at (1.75, -21.75) {};
		\node [style=none] (170) at (-0.25, -23.75) {};
		\node [style=none] (171) at (1.75, -23.75) {};
		\node [style=none] (172) at (-0.25, -20.75) {};
		\node [style=none] (173) at (0.5, -21.25) {$\cl_{\chi_{1/2}}$};
		\node [style=state] (174) at (0.75, -22.75) {$\psi$};
		\node [style=effect] (175) at (0.75, -10) {$\eta$};
		\node [style=state] (176) at (8.75, -20.75) {$\phi_k$};
		\node [style=none] (177) at (6.25, -20.75) {$\sum_k\frac{1}{d_k \sqrt{N}}$};
		\node [style=effect] (178) at (8.75, -10) {$\phi$};
		\node [style=white] (179) at (4.75, -17.25) {$\chi_{12/3456}$};
		\node [style=white] (180) at (4.75, -13) {$\chi_{12/3456}$};
		\node [style=none] (183) at (0.75, -21.75) {};
		\node [style=none] (184) at (0.75, -20.75) {};
		\node [style=none] (185) at (-1.75, -16) {$=$};
		\node [style=none] (186) at (13, -16) {};
		\node [style=none] (187) at (16.5, -16) {};
		\node [style=none] (188) at (13, -20) {};
		\node [style=none] (189) at (16.5, -20) {};
		\node [style=none] (190) at (13.75, -17) {};
		\node [style=none] (191) at (15.75, -17) {};
		\node [style=none] (192) at (13.75, -19.5) {};
		\node [style=none] (193) at (15.75, -19.5) {};
		\node [style=none] (194) at (13.75, -16) {};
		\node [style=none] (195) at (14.5, -16.5) {$\cl_{\chi_{1/2}}$};
		\node [style=state] (196) at (14.75, -18.5) {$\psi$};
		\node [style=effect] (197) at (14.75, -14) {$\eta$};
		\node [style=none] (205) at (14.75, -17) {};
		\node [style=none] (206) at (14.75, -16) {};
		\node [style=none] (207) at (10.75, -16) {$=$};
		\node [style=none] (208) at (-15, 5) {$=$};
		\node [style=none] (209) at (-10.5, 10) {$\frac{1}{\beta \gamma}$};
		\node [style=none] (210) at (7, 9.75) {$\frac{1}{\beta \gamma}$};
		\node [style=none] (211) at (12.5, -13.75) {$\frac{1}{\beta \gamma}$};
		\node [style=none] (212) at (1, -15) {$\frac{1}{\beta \gamma}$};
		\node [style=none] (213) at (-11.5, -17) {$\frac{1}{\beta \gamma}$};
	\end{pgfonlayer}
	\begin{pgfonlayer}{edgelayer}
		\draw (84.center) to (85.center);
		\draw (88.center) to (89.center);
		\draw (90.center) to (91.center);
		\draw (91.center) to (89.center);
		\draw (88.center) to (90.center);
		\draw (84.center) to (86.center);
		\draw (86.center) to (87.center);
		\draw (87.center) to (85.center);
		\draw [in=90, out=0, looseness=1.25] (94) to (95);
		\draw [in=-90, out=0, looseness=1.50] (97) to (99);
		\draw [in=180, out=90, looseness=1.25] (96) to (94);
		\draw (97) to (104.center);
		\draw [in=-90, out=180, looseness=1.25] (106) to (98);
		\draw [in=-90, out=0, looseness=1.25] (106) to (102);
		\draw [in=90, out=0, looseness=1.25] (105) to (100);
		\draw [in=-180, out=90] (108.center) to (105);
		\draw (108.center) to (107.center);
		\draw [in=180, out=-90, looseness=1.25] (107.center) to (97);
		\draw (106) to (109.center);
		\draw (110.center) to (105);
		\draw (111.center) to (112.center);
		\draw (115.center) to (116.center);
		\draw (117.center) to (118.center);
		\draw (118.center) to (116.center);
		\draw (115.center) to (117.center);
		\draw (111.center) to (113.center);
		\draw (113.center) to (114.center);
		\draw (114.center) to (112.center);
		\draw [in=-90, out=180, looseness=1.25] (132) to (125);
		\draw [in=-90, out=0, looseness=1.25] (132) to (129);
		\draw [in=90, out=0, looseness=1.25] (131) to (127);
		\draw [in=-180, out=90] (134.center) to (131);
		\draw (134.center) to (133.center);
		\draw (132) to (135.center);
		\draw (136.center) to (131);
		\draw [in=0, out=-180, looseness=0.75] (139) to (138);
		\draw [in=90, out=0, looseness=0.75] (139) to (122);
		\draw [in=-90, out=180] (138) to (133.center);
		\draw (138) to (123);
		\draw (126) to (139);
		\draw (140.center) to (141.center);
		\draw (144.center) to (145.center);
		\draw (146.center) to (147.center);
		\draw (147.center) to (145.center);
		\draw (144.center) to (146.center);
		\draw (140.center) to (142.center);
		\draw (142.center) to (143.center);
		\draw (143.center) to (141.center);
		\draw [in=-90, out=180, looseness=1.25] (158) to (152);
		\draw [in=-90, out=0, looseness=1.25] (158) to (156);
		\draw [in=90, out=0, looseness=1.25] (157) to (154);
		\draw (158) to (161.center);
		\draw (162.center) to (157);
		\draw [in=90, out=-180, looseness=1.25] (157) to (150);
		\draw (164.center) to (165.center);
		\draw (168.center) to (169.center);
		\draw (170.center) to (171.center);
		\draw (171.center) to (169.center);
		\draw (168.center) to (170.center);
		\draw (164.center) to (166.center);
		\draw (166.center) to (167.center);
		\draw (167.center) to (165.center);
		\draw [in=-90, out=180, looseness=1.25] (180) to (175);
		\draw [in=-90, out=0, looseness=1.25] (180) to (178);
		\draw [in=90, out=0, looseness=1.25] (179) to (176);
		\draw (183.center) to (174);
		\draw [in=-180, out=90] (184.center) to (179);
		\draw (186.center) to (187.center);
		\draw (190.center) to (191.center);
		\draw (192.center) to (193.center);
		\draw (193.center) to (191.center);
		\draw (190.center) to (192.center);
		\draw (186.center) to (188.center);
		\draw (188.center) to (189.center);
		\draw (189.center) to (187.center);
		\draw (205.center) to (196);
		\draw (206.center) to (197);
		\draw (180) to (179);
	\end{pgfonlayer}
\end{tikzpicture}}
\ee
It follows that for all normalised states $\ket{\psi_1},\ket{\psi_2}, \ket{\eta}$ of $\ch$ and for all $\alpha_1, \alpha_2 \in \mathbb{C}$ such that $\alpha_1 \ket{\psi_1} + \alpha_2 \ket{\psi_2}$ is a normalised state (i.e. such that $\alpha_1^{2} + \alpha_2^{2} = 1$)
\begin{equation}
\begin{split}
\bra{\eta} \ml_{\chi} \left(\alpha_1 \ket{\psi_1} + \alpha_2 \ket{\psi_2}\right)  = & \left(\bra{\eta} \tc_{12/3456} \bra{\phi}\right)\left( \ml_{\chi_{1/234}}\left(\ket{\phi_+}\right) \tc_{1234/56} \left( \alpha_1 \ket{\psi_1} + \alpha_2 \ket{\psi_2} \right) \right) \\
 = & \alpha_1 \left(\bra{\eta} \tc_{12/3456} \bra{\phi}\right)\left( \ml_{\chi_{1/234}}\left(\ket{\phi_+}\right) \chi_{1234/56} \ket{\psi_1}\right) \\
&  + \alpha_2 \left(\bra{\eta} \tc_{12/3456} \bra{\phi}\right)\left( \ml_{\chi_{1/234}}\left(\ket{\phi_+}\right) \tc_{1234/56} \ket{\psi_2}\right) \\
 = & \alpha_1 \bra{\eta} \ml_{\chi} \left(\ket{\psi_1} \right) +  \alpha_2 \bra{\eta} \ml_{\chi} \left(\ket{\psi_2} \right) \, .
\end{split}
\end{equation} 
and thus that $\ml_{\chi}$ is the function obtained by restricting a linear map of $\cl(\ch)$ to acting only on norm-one elements. We will consider that $\ml_{\chi}$ is itself the linear map and not only its restriction to states.
Let us briefly focus on $\ml_{\nu} = \ml_{\chi_{1/}}$. The equality
\begin{equation}
(\cdot \tn (\cdot \tn \cdot)) = ((\cdot \tn \cdot) \tc_{/}\cdot) 
\end{equation}
is witnessing the comprehension $\nu \sqsubseteq \nu$ and it follows, by Equation \ref{extension}, that $\ml_{\nu}(\ket{\phi} \tn \ket{\psi}) = \ml_{\nu}(\ket{\phi}) \tn \ket{\psi}$, which, now that we know $\ml_{\nu}$ is linear, is equivalent to $\ml_{\nu}\nu^{\dagger} = \nu^{\dagger}(\ml_{\nu} \otimes \id)$. And we can compute that $\ml_{\nu} = \ml_{\nu}\nu^{\dagger}\nu = \nu^{\dagger} (\ml_{\nu} \otimes \id) \nu \in \loc_{\LL}(\nu) = \stloc_{\LL}(\nu)$ (because $\nu$ is canonical). We have thus proved that $\ml_{\nu} : \ch_{\LL} \rightarrow \ch_{\LL}$ is a linear map and moreover an element of $\stloc_{\LL}(\nu)$. We will call it $L$. Finally, because it sends (by definition) normalised states to normalised states, it is an isometry and therefore, by dimensionality, a unitary. It remains to show that $\ml_{\chi} = \ml_{\nu} \tc \id = L \tc \id$. Let us consider the following equality
\begin{equation}
(\cdot \tn (\cdot \tc \cdot)) = ((\cdot \tc \cdot) \tc_{/2}\cdot) 
\end{equation}
which is witnessing the comprehension $\chi \sqsubseteq \chi$. By Equation \ref{extension}, we have that for all states $\ket{\phi}$ and $\ket{\psi}$, $\ml_{\chi}(\ket{\phi} \tc \ket{\psi}) = \ml_{\nu}(\ket{\phi}) \tc \ket{\psi}$, which by linearity of $\ml_{\chi}$ is equivalent to $\ml_{\chi}\chi^{\dagger} = \chi^{\dagger} (\ml_{\nu} \otimes \id)$. And it follows that $\ml_{\chi}  =  \chi^{\dagger} (\ml_{\nu} \otimes \id) \chi = \ml_{\nu} \tc \id$.


Conversely, consider a family $\ml_{\chi} : S_{\ch^{\chi}} \rightarrow S_{\ch^{\chi}}$ indexed by all extensions $\chi$ of a minimal splitting map $\nu$ and such that $\ml_{\chi} = L \tc \id$ where $L$ is a unitary element of $\stloc_{\LL}(\nu)$. First let us remark that that, because $L \in \stloc_{\LL}(\nu)$, there exists $\tilde{L} \in \cons_{\LL}(\nu)$ such that $L = \tilde{L} \tn \id$ which is equal, by Equation \ref{eqminimal1} to $(\tilde{L} \tn \id) \tn \id = L \tn \id = \ml_{\nu}$ and thus that $\ml_{\nu} = L$. Then one has that for every extension $\chi$ of $\nu$, and every $\zeta$ such that $\chi \sqsubseteq \chi$, i.e. such that $\left(\left(\cdot \tm \cdot\right) \tz \cdot\right) = \left( \cdot \tc \left(\cdot \tx \cdot \right)\right)$, 
\be
\begin{split}
\ml_{\chi}(\ket{\phi} \tz \ket{\psi}) & = (L \tc \id)(\ket{\phi} \tz \ket{\psi}) \\
& = (L \tc (\id \tx \id))(\ket{\phi} \tz \ket{\psi})\\
& = ((L \tm \id) \tz \id)(\ket{\phi} \tz \ket{\psi})\\
& = (L \tm \id)\ket{\phi} \tz \ket{\psi} \\
& = \ml_{\mu}\ket{\phi} \tz \ket{\psi}
\end{split}
\ee
which proves Equation \ref{extension}. One can also compute that for every measurement result $m$,
\be
\begin{split}
\ml_{\chi}(\ket{\psi})^{\dagger} (\id \tc m ) \ml_{\chi}(\ket{\psi}) & = \bra{\psi}(L^{\dagger} \tc \id)(\id \tc m)(L \tc \id) \ket{\psi} \\
& = \bra{\psi}(L^{\dagger} \tc \id)(L \tc \id)(\id \tc m)\ket{\psi} \\
& =  \bra{\psi}(\id \tc m )\ket{\psi}
\end{split}
\ee
where the second equality come from the fact that $L \in \stloc_{\LL}(\nu) = \cons_{\LL}(\nu) = \cons_{\LL}(\chi)$ and the third comes from its unitarity. This proves Equation \ref{probas} and it remains to show that Equation \ref{update} is also true which can be done by computing that  
\be
\begin{split}
\ml_{\chi} \left( \frac{(\id \tc m)\ket{\psi}}{\sqrt{\bra{\psi}(\id \tc m )\ket{\psi}}} \right) & = \frac{(L \tc \id)(\id \tc m)\ket{\psi}}{\sqrt{\bra{\psi}(\id \tc m )\ket{\psi}}} \\ 
& =  \frac{(\id \tc m)(L \tc \id)\ket{\psi}}{\sqrt{\bra{\psi}(\id \tc m )\ket{\psi}}} \\
& =  \frac{(\id \tc m)\ml_{\chi}\ket{\psi}}{\sqrt{\bra{\psi}(\id \tc m )\ket{\psi}}} \\
& = \frac{(\id \tc m)\ml_{\chi}(\ket{\psi})}{\sqrt{\ml_{\chi}(\ket{\psi})^{\dagger} (\id \tc m ) \ml_{\chi}(\ket{\psi}) }}
\end{split}
\ee
where the second equality is true because $L \in \cons_{\LL}(\chi)$ and the last because of the previous computation.
\end{proof}

\begin{remark}
As it is explained in \cite{mestoudjian2026picturinggeneralquantumsubsystems} and as we've considered it throughout this article, $\nu$ is a splitting map representative of the algebraic non-factor subsystem $\stloc_{\LL}(\nu) \subseteq \cl(\ch)$. The admissible evolutions of the system described by $\nu$, as they're defined in Definition \ref{deflocalevolutions} and characterised in \ref{linearity} naturally translates into a family of admissible evolutions of $\stloc_{\LL}(\nu)$ of the form $A \rightarrow U^{\dagger} A U$ where $U$ is a unitary element of $\stloc_{\LL}(\nu)$. These evolutions are precisely the inner *-automorphisms of $\stloc_{\LL}(\nu)$ and these evolution can be extended to any algebra containing $\stloc_{\LL}(\nu)$ and will as expected act trivially on the complementary system of $\stloc_{\LL}(\nu)$, $\stloc_{\LL}(\nu)'$.
\end{remark}

\section{On the respective consequences of local applicability and no superluminal signalling}
In this section we explore the precise sense in which specifically no-superluminal signalling alone implies linearity. 
In order to do so, let us explore the consequences of only state locality and update commutativity, which encode the possibility to apply a transformation locally, without a prohibition on superluminal signals. 
\begin{proposition}
\label{localapplicabilityonly}
Let $\nu : \ch^{\nu} \rightarrow \ch^{\nu} \otimes \ch_{\R}^{\nu}$ be a minimal splitting map and let us consider a family $(\ml_{\chi})_{\chi}$ of functions $\ml_{\chi} : S_{\ch^{\chi}} \rightarrow S_{\ch^{\chi}}$ indexed by the extensions $\chi : \ch^{\chi} \rightarrow \ch^{\nu} \otimes \chrc$ of $\nu$ and satisfying, for every $\zeta$ such that $\chi \sqsubseteq \zeta$, i.e. such that $\left(\left(\cdot \tm \cdot\right) \tz \cdot\right) = \left( \cdot \tc \left(\cdot \tx \cdot \right)\right)$ and every $\ket{\phi}, \ket{\psi}$ such that $\ket{\phi} \tz \ket{\psi} \neq 0$,
\be
\ml_{\chi} \left( \frac{\ket{\phi} \tz \ket{\psi}}{\left\Vert \ket{\phi} \tz \ket{\psi} \right\Vert} \right) = \frac{\ml_{\mu}(\ket{\phi}) \tz \ket{\psi}}{\left\Vert \ml_{\mu}(\ket{\phi}) \tz \ket{\psi} \right\Vert } \,
\ee
and for every right $\chi$-consistent measurement result $m$,
\be
\frac{(\id \tc m)\ml_{\chi}(\ket{\psi})}{\lv (\id \tc m)\ml_{\chi}(\ket{\psi}) \rv} = \ml_{\chi} \left( \frac{(\id \tc m)\ket{\psi}}{\lv (\id \tc m)\ket{\psi} \rv} \right) \, .
\ee
Then there exists an invertible element $A$ of $\stloc_{\LL}(\nu)$ such that for every extension $\chi$ of $\nu$, 
\be
\ml_{\chi}(\ket{\psi}) = \frac{(A \tc \id)\ket{\psi}}{\left\Vert  (A \tc \id)\ket{\psi}  \right\Vert}  \, .
\ee
\end{proposition}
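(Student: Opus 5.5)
Without no-signalling, Proposition \ref{propositionnorms} is no longer available, and the normalisation factors $\beta$ and $\gamma$ in the proof of Theorem \ref{linearity} can no longer be identified on both sides. The plan is to rerun the same teleportation-style argument while tracking these factors as unknown positive scalars, and to show that the resulting map is linear up to normalisation. Fix an extension $\chi$ of $\nu$ and use the same tower of splitting maps $\chi_{k\cdots l/l+1\cdots m}$, the same states $\ket{\phi_+}$, $\ket{\phi_k}$, $\ket{\phi}$, and the same projector $\sum_k \ketbra{\phi_k}{\phi_k}$. Set $\ket{\Omega} = \ml_{\chi_{1/234}}(\ket{\phi_+})$, a fixed vector independent of $\ket{\psi}$.

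First, repeat the chain of equalities using only local applicability (Equation \ref{extension}) and update commutativity (Equation \ref{update}). This gives that for each normalised $\ket{\psi}$ the vector
\[ \Big(\id \tc_{12/3456} \textstyle\sum_k \ketbra{\phi_k}{\phi_k}\Big)\, \ket{\Omega} \tc_{1234/56} \ket{\psi} \]
is a positive scalar multiple $c(\psi)$ of $\ml_{\chi_{1/2}}(\ket{\psi}) \tc_{12/3456} \sum_k \frac{1}{d_k\sqrt N}\ket{\phi_k}$.

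One point needs care here. The update-commutativity step requires the projected vector to be nonzero, which was previously guaranteed by probability preservation. Without no-signalling, I would argue that the left-hand side is nonzero as follows. Local applicability together with the teleportation identity forces $\ml_{\chi_{1/23456}}$ of the projected state to be a well-defined normalised vector, and the right-hand side $\ket{\psi}\tc(\cdots)$ is nonzero. If the projection of the evolved state vanished, update commutativity could not be applied. I expect this nonvanishing to be the main obstacle. My first attempt would be a perturbation or continuity-free argument: mix $\ket{\psi}$ with a state for which the projection is nonzero and derive a contradiction from the block structure, using that $\ket{\Omega}$ is nonzero in every block. As a fallback, I would note that it suffices to establish nonvanishing on a spanning set of $\ket{\psi}$'s.

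Next, pair with $\bra{\eta}\tc_{12/3456}\bra{\phi}$ exactly as in the proof of Theorem \ref{linearity}. This yields
\[ \bra{\eta}\ml_{\chi_{1/2}}(\ket{\psi}) = c(\psi)^{-1}\, \big(\bra{\eta}\tc_{12/3456}\bra{\phi}\big)\big(\ket{\Omega}\tc_{1234/56}\ket{\psi}\big). \]
The right-hand side is $c(\psi)^{-1}\bra{\eta}B\ket{\psi}$ for a fixed linear map $B\in\cl(\ch^{\chi})$, obtained from $\ket{\Omega}$ and $\bra{\phi}$ by the generalised teleportation. Hence $\ml_{\chi}(\ket{\psi}) = B\ket{\psi}/\lv B\ket{\psi}\rv$, since the output is normalised and $c>0$. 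In particular $B\ket{\psi}\neq 0$ for every $\ket{\psi}$, so $B$ is injective and therefore invertible in finite dimension.

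Finally, apply the same argument to $\nu=\chi_{1/}$ itself to obtain an invertible $A$ with $\ml_\nu(\ket{\phi}) = A\ket{\phi}/\lv A\ket{\phi}\rv$. Use the comprehension $\nu\sqsubseteq\nu$ with Equation \ref{extension}: projectively, $A\nu^\dagger$ and $\nu^\dagger(A\otimes\id)$ agree on product vectors. A scalar-matching argument, comparing superpositions $\ket{\phi}\tn(\ket{\psi_1}+\ket{\psi_2})$, makes the scalar constant, so $A\nu^\dagger=\nu^\dagger(A\otimes\id)$ up to rescaling $A$. Hence $A=\nu^\dagger(A\otimes\id)\nu\in\stloc_{\LL}(\nu)$.

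Then use $\chi\sqsubseteq\chi$ via $(\cdot\tn(\cdot\tc\cdot))=((\cdot\tc\cdot)\tc_{/2}\cdot)$. This gives $\ml_\chi(\ket{\phi}\tc\ket{\psi}) \propto (A\tc\id)(\ket{\phi}\tc\ket{\psi})$ on product states. Since $\ml_\chi$ is already known to be projectively the linear map $B$, and product states span $\ch^\chi$, the same scalar-constancy argument gives $B\propto A\tc\id$. This yields the claimed form $\ml_\chi(\ket{\psi}) = (A\tc\id)\ket{\psi}/\lv(A\tc\id)\ket{\psi}\rv$.
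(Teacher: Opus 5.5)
Your overall route is the paper's own (Appendix \ref{proofoflocalapplicabilityonly}). You rerun the teleportation chain of Theorem \ref{linearity} with the normalisations kept as unknown positive scalars; your $c(\psi)$ is the paper's $\kappa_\psi$ and your $B$ is its $A_\chi$. You then pair with $\bra{\eta}\tc_{12/3456}\bra{\phi}$ to get $\ml_\chi(\ket{\psi})=B\ket{\psi}/\Vert B\ket{\psi}\Vert$ with $B$ invertible, and pin down $B$ through $\nu\sqsubseteq\nu$ and $\chi\sqsubseteq\chi$. (Rescaling $A$ cannot absorb the constant there, since both sides scale alike. But $A=\alpha\,\nu^{\dagger}(A\otimes\id)\nu\in\loc_{\LL}(\nu)=\stloc_{\LL}(\nu)$ is all you need, as in the paper.) The genuine gap is the step you flag and leave open. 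You need $(\id\tc_{12/3456}\sum_k\ketbra{\phi_k}{\phi_k})\,\ml_{\chi_{1/23456}}(\cdot)$ of the teleportation state to be nonzero for every $\ket{\psi}$; otherwise Equation \ref{update} cannot be invoked.

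Your proposed fixes cannot deliver this. After the first local-applicability step, this vector is, up to a positive factor, $(\id\tc_{12/3456}\sum_k\ketbra{\phi_k}{\phi_k})(\ket{\Omega}\tc_{1234/56}\ket{\psi})$, which is a fixed linear map applied to $\ket{\psi}$. That map is injective if and only if each block of $\ket{\Omega}$ has full Schmidt rank between $\ch_1^i\otimes\ch_2^i$ and $\ch_3^i\otimes\ch_4^i$.
\begin{itemize}
\item $\ket{\Omega}$ being nonzero in every block is far weaker; for $N=1$ it is automatic.
\item Nonvanishing on a spanning set does not exclude a kernel. It would not propagate to all states anyway, since $\ml_\chi$ is not yet known to be linear.
\end{itemize}
In fact no argument can work if Equations \ref{extension} and \ref{update} are only imposed when both sides are defined, which is how your phrase ``update commutativity could not be applied'' reads. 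Take $N=1$, $\ch^{\nu}=\mathbb{C}^2$ and $\pi_0=\ketbra{0}{0}$. Define $\ml_\chi(\ket{\Psi})=(\pi_0\tc\id)\ket{\Psi}/\Vert(\pi_0\tc\id)\ket{\Psi}\Vert$ when this is nonzero, and $\ml_\chi(\ket{\Psi})=\ket{\Psi}$ otherwise. This family satisfies both equations whenever both sides are defined. Yet $\ml_\nu$ sends $\ket{0}$ and $(\ket{0}+\ket{1})/\sqrt{2}$ to the same state, so it is not of the form $A\ket{\psi}/\Vert A\ket{\psi}\Vert$ with $A$ invertible.

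The paper instead reads nonvanishing directly off the hypotheses (``by Equations \ref{extension} and \ref{update}, $\upsilon_\psi,\eta_\psi,\tau_\psi$ are non-zero''). That is, each equation is taken to assert that the side containing the evolved state is well defined whenever the other side is. Equation \ref{update} then gives $(\id\tc m)\ml_\chi(\ket{\psi})\neq0$ whenever $(\id\tc m)\ket{\psi}\neq0$. Equation \ref{extension} gives $\ml_\mu(\ket{\phi})\tz\ket{\psi}\neq0$ whenever $\ket{\phi}\tz\ket{\psi}\neq0$. This is also the only source of your unproved claim that $\Vert\ket{\Omega}\tc_{1234/56}\ket{\psi}\Vert\neq0$.

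If you want a derivation rather than an assumption, the strong reading of Equation \ref{extension} alone suffices. Apply it to $\chi_{1/234}\sqsubseteq\zeta$ with $\mu=\chi_{1/234}$, where $\zeta$ copies an arbitrary block-adapted orthonormal basis $\{\ket{j}\}$ of the $\ch_3^i\otimes\ch_4^i$ factors into a classical register on the right. Every component of $\ket{\phi_+}$ along such a $\ket{j}$ is nonzero, so the equation forces the corresponding component of $\ket{\Omega}$ to be nonzero. Since the basis is arbitrary, this is exactly the full-rank condition.
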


\begin{proof}
See Appendix \ref{proofoflocalapplicabilityonly}.
\end{proof}

Concretely, the class of maps derived are those given by invertible (and in this case block-diagonal) linear maps with renormalisation. 
This result matches an example of non-linear quantum theory with superluminal signalling given in \cite{aaronson2004quantummechanicsislandtheoryspace}, which has the notable feature as a model of computation that it allows one transformations which are invertible but approximately closed to deterministic post selection. As a precise example, whilst the projection $\pi_0 = \ket{0}\bra{0}$ is not invertible, the linear map $p_0^{\epsilon} =  \ket{0}\bra{0} + \epsilon  \ket{1}\bra{1}$ \textit{is} invertible, and applying it up to renormalisation leads to a dynamics which is state-local and update-commutative, but supports faster-than-light signals.

On the one hand, this demarcation between the axioms, gives some indication of what role exactly the prohibition on superluminal signals plays in such derivations. On the other hand, the possibility to prove that under only locality conditions, the precise class of invertible maps with renormalisation is isolated, suggests it as a particularly well-behaved and potential canonical model; of non-linear quantum mechanics, in particular one which might be arrangeable into a circuit model, and an operational probabilistic theory, albeit an interesting example of an operational theory which is non-causal (the OPT framing of no-superluminal signalling). 


\section{Conclusion}
Traditional arguments for the derivation of linearity of quantum mechanics via relativistic principles \cite{Gisin_1989} have focused on factor-systems. 
In this article, we extended a recent formalisation of these arguments \cite{wilson2023originlinearityunitarityquantum} to non-factor systems, such as those which arise naturally within causal decompositions and superpositions of geometries. We found that compatibility with relativity for evolutions of non-factor systems enforces linearity and furthermore sectorisation of evolutions.
By inspecting closer the role of each of the axioms of locally-applicable transformations, we teased apart the consequences of locality and causality, proving that locality without causality recovers a broader well-motivated class of non-linear dynamics \cite{aaronson2004quantummechanicsislandtheoryspace}.

The results of this paper, and in particular the diagrammatic nature of their proof suggest a few natural questions. 
Regarding compatibility with relativity in the sense of no-superluminal signalling, it is natural to consider whether the diagrammatic nature of the argumentation renders it applicable to the broader class of generalised operational \cite{chiribella_purification}, and categorical \cite{gogioso_cpt} probabilistic theories. The step from factor to non-factor also represents a step towards quantum gravity, where natural next steps involve adaptation of the reasoning to understand the consequences of relativistic compatibility within spin foams as opposed to traditional unitary evolutions. 

The identification of invertible linear maps with renormalisation as the most general conceivable quantum dynamics (without relativistic compatibility) also suggests a path to extension of categorical quantum mechanics \cite{coecke_kissinger_2017} to the non-linear and non-causal \cite{coecke_causalcats} setting. Future work would naturally involve formalising this identification into the development of a circuit model for a fragment of non-linear quantum computation, connection to other non-linear-with-renormalisation principles in quantum theory such as Deustch'es closed timelike curves \cite{PhysRevD.44.3197, pinzani2019categoricalsemanticstimetravel}, and analysis of the structural categorical features of such a the non-linear circuit model - such as characterisation of bases, measurements, Frobenius algebras, and higher-order processes \cite{taranto2025higherorderquantumoperations}. 

Finally, the methods of this paper rely heavily on the recently introduced conceptual and formal notion of splitting map  \cite{mestoudjian2026picturinggeneralquantumsubsystems} , it is natural to wonder for what other foundational or practical questions regarding non-factor systems in quantum theory the notion of splitting map and associated chi-tensors \cite{Arrighi2024quantumnetworks} might find application 

\section*{Acknowledgements}

OM and MW are partially funded by the European Union through the MSCA SE project QCOMICAL, by the French National Research Agency (ANR): projects TaQC ANR-22-CE47-0012 and within the framework of `Plan France 2030', under the research projects EPIQ ANR-22-PETQ-0007, OQULUS ANR-23-PETQ-0013, HQI-Acquisition ANR-22-PNCQ-0001 and HQI-R\&D
ANR-22-PNCQ-0002, and by the ID \#62312 grant from the John Templeton Foundation, as part of the \href{https://www.templeton.org/grant/the-quantum-information-structure-of-spacetime-qiss-second-phase}{‘The Quantum Information Structure of Spacetime’ Project (QISS)}. MW was funded by the Engineering and Physical Sciences Research Council [grant number EP/W524335/1]. The opinions expressed in this publication are those of the authors and do not necessarily reflect the views of the John Templeton Foundation.

\bibliographystyle{quantum}
\bibliography{biblio}

\newpage

\appendix

\section{Frobenius structure}
\label{frobenius}

\begin{proposition}
The isometry $\tikzcircle{4pt} : \ch_{\R} \rightarrow \ch_{\R} \otimes \ch_{\R}$ is a special and commutative dagger Frobenius structure (or classical structure), meaning that there exists an orthonormal basis of $\ch_{\R}$ such that for any element $\ket{j}$ of the basis, $\tikzcircle{4pt}\ket{j} = \ket{j} \otimes \ket{j}$.
\end{proposition}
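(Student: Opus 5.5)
By Proposition \ref{shapeofminimal}, a minimal $\nu$ has the form $\nu : \ch = \bigoplus_{i=1}^N (\ch^{i}\otimes\mathbb{C}) \hookrightarrow \ch \otimes \mathbb{C}^N$. The plan is to describe $\nu$ in coordinates and then read off the action of the black dot directly from Equations \ref{eqminimal1} and \ref{eqminimal2}. Let $\{\ket{i}\}_{i=1}^N$ be the standard basis of $\ch_{\R}=\mathbb{C}^N$. For $\ket{\psi}=\sum_i \ket{\psi_i}$ with $\ket{\psi_i}\in\ch^i$ we have $\nu\ket{\psi}=\sum_i \ket{\psi_i}\otimes\ket{i}$.

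\textbf{Step 1: the black dot copies the basis $\{\ket{i}\}$.} Take $\ket{\psi_i}\in\ch^i$ nonzero and apply both sides of Equation \ref{eqminimal1} to it. The left-hand side gives $(\nu\otimes\id)\nu\ket{\psi_i} = \ket{\psi_i}\otimes\ket{i}\otimes\ket{i}$. The right-hand side gives $\ket{\psi_i}\otimes \tikzcircle{4pt}\ket{i}$. Cancelling the nonzero factor $\ket{\psi_i}$, we get $\tikzcircle{4pt}\ket{i}=\ket{i}\otimes\ket{i}$ for every $i$. By linearity this fixes the dot on all of $\ch_{\R}$, so the dot is exactly the copy map of the basis $\{\ket{i}\}$. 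This uses only that each $\ch^i$ is nonzero, which holds in a canonical decomposition since empty blocks can be dropped.

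\textbf{Step 2: the dagger special commutative Frobenius axioms.} These now follow from standard facts about the copy map $\delta:\ket{i}\mapsto\ket{ii}$ of an orthonormal basis. It is an isometry, which gives speciality $\delta^\dagger\delta=\id$. It is cocommutative and coassociative, since it sends $\ket{i}$ to a symmetric vector and both $(\delta\otimes\id)\delta$ and $(\id\otimes\delta)\delta$ send $\ket{i}$ to $\ket{iii}$. Its counit is $\ket{i}\mapsto 1$. The Frobenius law $(\delta^\dagger\otimes\id)(\id\otimes\delta)=\delta\,\delta^\dagger$ can be checked on the basis $\ket{i}\otimes\ket{j}$, where both sides equal $\delta_{ij}\ket{ii}$.

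Equation \ref{eqminimal2} is then consistent with this result and can serve as a sanity check. In coordinates, $\nu\nu^\dagger$ is the projector onto $\bigoplus_i \ch^i\otimes\ket{i}$. This matches $(\nu^\dagger\otimes\id)(\id\otimes\tikzcircle{4pt})$ evaluated on $\ket{\psi_j}\otimes\ket{k}$, which gives $\delta_{jk}\ket{\psi_j}\otimes\ket{k}$.

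\textbf{Main obstacle.} The main subtlety is justifying Step 1 when the dot is only assumed to be some isometry satisfying Equation \ref{eqminimal1}. Uniqueness is needed to conclude that it must be the copy map. It follows because $\nu$ is injective on each block and the $\ket{\psi_i}$ factor cancels in the tensor product. If one instead wants a coordinate-free argument, one would derive coassociativity from Equation \ref{eqminimal1} by applying $\nu$ three times and using injectivity of $\nu$. One would derive the Frobenius law from Equation \ref{eqminimal2}, and then invoke the Coecke–Pavlović–Vicary theorem to obtain the basis. I would present the concrete computation above, since it is shorter.
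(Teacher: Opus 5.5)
Your argument is circular. It rests on the normal form $\nu:\ch=\bigoplus_{i=1}^N(\ch^{i}\otimes\mathbb{C})\hookrightarrow\ch\otimes\mathbb{C}^N$, with $\nu\ket{\psi_i}=\ket{\psi_i}\otimes\ket{i}$, which you take from Proposition \ref{shapeofminimal}. In the paper, however, that proposition is proved \emph{from} the present statement. Its proof opens with ``We have seen in Appendix \ref{frobenius} that all $\ch_{\R}^{i}$ are of dimension $1$''. For a general canonical $\nu:\ch\cong\bigoplus_i(\ch_{\LL}^{i}\otimes\ch_{\R}^{i})\hookrightarrow\ch\otimes\ch_{\R}$, the blocks $\ch_{\R}^{i}$ may a priori have any dimension. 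The substance of the proposition is to show two things: that these blocks are one-dimensional, and that $\nu$ acts on each $\ch_{\LL}^{i}$ by tensoring with a fixed unit vector. Once you grant both, your Step 1 is a one-line cancellation and your Step 2 is textbook. So you have proved only the easy half, and your ``main obstacle'' paragraph puts the difficulty in the wrong place: uniqueness of the dot is not the issue.

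The missing step is the one the paper supplies. Take $\ket{\psi}\in\ch_{\LL}^{i}$ and $\ket{j}\in\ch_{\R}^{i}$. Since $\nu\nu^{\dagger}(\ket{\psi}\otimes\ket{j})=\ket{\psi}\otimes\ket{j}$, applying Equation \ref{eqminimal1} to $\nu^{\dagger}(\ket{\psi}\otimes\ket{j})$ gives $\nu\ket{\psi}\otimes\ket{j}=\ket{\psi}\otimes\tikzcircle{4pt}\ket{j}$. Hence $\nu\ket{\psi}=\ket{\psi}\otimes\ket{m_i}$ and $\tikzcircle{4pt}\ket{j}=\ket{m_i}\otimes\ket{j}$, for a unit vector $\ket{m_i}$ that depends only on the block $i$. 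The paper then appeals to Equation \ref{eqminimal2} to conclude that each $\ch_{\R}^{i}$ is spanned by $\ket{m_i}$, so that $\{\ket{m_i}\}$ is the copied basis.

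You can also close this step directly. The $\ch_{\LL}^{i}$ span $\ch$, so the range of $\nu$ is $\bigoplus_i\ch_{\LL}^{i}\otimes\mathbb{C}\ket{m_i}$. It must also equal $\bigoplus_i\ch_{\LL}^{i}\otimes\ch_{\R}^{i}$, which forces $\ch_{\R}^{i}=\mathbb{C}\ket{m_i}$. Note that the copied basis is $\{\ket{m_i}\}$, not an arbitrary fixed $\{\ket{i}\}$: the unitary in the canonical form need not be the identity, and Equation \ref{eqminimal1} only pins it down to a phase on each block. With this patch in place, your Step 2 finishes the proof.
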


\begin{proof}
We remind that because it is a canonical slitting map, $\nu$ is of the form 
\begin{equation}
\nu : \ch \cong \bigoplus_i (\ch_{\LL}^{i} \otimes \ch_{\R}^{i}) \hookrightarrow (\bigoplus_i \ch_{\LL}^{i}) \otimes (\bigoplus_i  \ch_{\R}^{i}) = \ch \otimes \ch_{\R}\\ 
\end{equation}
Let $\{ \ket{j} \}$ be an orthonormal basis of $\ch_{\R} = \bigoplus_i  \ch_{\R}^{i}$ such that each $\ket{j}$ is precisely in one of the $\ch_{\R}^{i}$. Let us fix $\ket{j} \in \ch_{\R}^{i}$ and let $\ket{\psi}$ be a (normalised) state living in the subspace $\ch_{\LL}^{i}$ of $\ch$. Let us remark that because they live in corresponding subspaces $\nu \nu^{\dagger}(\ket{\psi} \otimes \ket{j}) = \ket{\psi} \otimes \ket{j}$ and therefore that 
applying Equation \ref{eqminimal1} on any $\nu^{\dagger} (\ket{\psi} \otimes \ket{j})$ gives us 
\begin{equation}
 \begin{tikzpicture}
	\begin{pgfonlayer}{nodelayer}
		\node [style=white] (1) at (-7, 2) {$\nu$};
		\node [style=none] (2) at (-3, 4) {};
		\node [style=none] (3) at (-7, 0) {};
		\node [style=none] (4) at (-9, 4) {};
		\node [style=none] (5) at (-5, 4) {};
		\node [style=none] (6) at (-0.5, 2) {$=$};
		\node [style=black] (8) at (6, 2) {};
		\node [style=none] (9) at (2, 4) {};
		\node [style=none] (10) at (4, 4) {};
		\node [style=none] (11) at (8, 4) {};
		\node [style=none] (12) at (2, 0) {};
		\node [style=state] (13) at (-3, 0) {$j$};
		\node [style=state] (14) at (6, 0) {$j$};
		\node [style=state] (15) at (2, 0) {$\psi$};
		\node [style=state] (16) at (-7, 0) {$\psi$};
	\end{pgfonlayer}
	\begin{pgfonlayer}{edgelayer}
		\draw [in=-180, out=-90, looseness=1.25] (4.center) to (1);
		\draw [in=-90, out=0, looseness=1.25] (1) to (5.center);
		\draw [in=-90, out=-180, looseness=1.25] (8) to (10.center);
		\draw [in=-90, out=0, looseness=1.25] (8) to (11.center);
		\draw (1) to (3.center);
		\draw (2.center) to (13);
		\draw (9.center) to (12.center);
		\draw (8) to (14);
	\end{pgfonlayer}
\end{tikzpicture} \, .
\end{equation}
As this is true for any pair of matching $\ket{\psi}$ and $\ket{j}$, we deduce that there exists a family of elements $\ket{m_i}$ indexed by the subspace $\ch_{\R}^{i}$ in which $\ket{j}$ is, such that 
\be
\nu \ket{\psi} = \ket{\psi} \otimes \ket{m_i} \textrm{   and   } \tikzcircle{4pt}\ket{j} = \ket{m_i} \otimes \ket{j} \, .
\ee
Equation \ref{eqminimal2} then tells us that two different elements $\ket{j}$ have different $\ket{m_i}$ and thus belong to different subspaces $\ch_{\R}^{i}$. It follows that these subspaces are all of dimension $1$ and that $\ket{m_i} = \ket{j}$ which makes $\tikzcircle{4pt}$ the classical structure (special and commutative dagger Frobenius structure) associated to the orthonormal basis $\{ \ket{j} \}$. 

\end{proof}

\section{Proof of Proposition \ref{shapeofminimal}}

\begin{proof}
Because it is a canonical splitting map, $\chi$ is of the form
\be
\nu : \ch_{\LL} \cong \bigoplus_{i=1}^N (\ch_{\LL}^{i} \otimes \ch_{\R}^{i}) \hookrightarrow (\bigoplus_{i=1}^N  \ch_{\LL}^{i}) \otimes (\bigoplus_{i=1}^N   \ch_{\R}^{i}) = \ch_{\LL} \otimes \ch_{\R} \, .
\ee
We have seen in Appendix \ref{frobenius} that all $\ch_{\R}^{i}$ are of dimension $1$ and it immediately follows that
\begin{equation}
\nu : \ch = \bigoplus_{i=1}^N (\ch^{i} \otimes \mathbb{C}) \hookrightarrow (\bigoplus_{i=1}^N  \ch^{i}) \otimes (\bigoplus_{i=1}^N   \mathbb{C}) = \ch \otimes \mathbb{C}^{N} \, .
\end{equation}
 and that $\stloc_{\LL}(\nu) = \cons_{\LL}(\nu) = \bigoplus_i \cl(\ch_i)$. We could also have found this equality diagrammatically in the following way. Let $A \in \cons_{\LL}(\nu) \subseteq \cl(\ch_{\LL})$, we can compute that 
\begin{equation}
 \scalebox{0.90}{\begin{tikzpicture}
	\begin{pgfonlayer}{nodelayer}
		\node [style=white] (0) at (0.25, -1) {$\nu$};
		\node [style=white] (1) at (0.25, 1) {$\nu$};
		\node [style=none] (2) at (-1.75, 3.25) {};
		\node [style=none] (3) at (2.25, 3.25) {};
		\node [style=none] (4) at (-1.75, -3.25) {};
		\node [style=none] (5) at (2.25, -3.25) {};
		\node [style=none] (6) at (3.5, 0.5) {$=$};
		\node [style=none] (7) at (-3.5, 0.5) {$=$};
		\node [style=black] (15) at (-7, -2) {};
		\node [style=white] (16) at (-10.5, 2) {$\nu$};
		\node [style=none] (17) at (-10.5, 3.5) {};
		\node [style=none] (18) at (-5, 3.25) {};
		\node [style=none] (19) at (-7, -3.25) {};
		\node [style=none] (20) at (-12.5, -3.25) {};
		\node [style=white] (21) at (7, 0.25) {$\nu$};
		\node [style=white] (22) at (7, 2.25) {$\nu$};
		\node [style=none] (23) at (5, 4.5) {};
		\node [style=none] (24) at (9, 4.5) {};
		\node [style=none] (25) at (5, -2) {};
		\node [style=none] (26) at (9, -2) {};
		\node [style=none] (27) at (10, 0.5) {$=$};
		\node [style=black] (28) at (17, -0.75) {};
		\node [style=white] (29) at (13.5, 3.25) {$\nu$};
		\node [style=none] (30) at (13.5, 4.5) {};
		\node [style=none] (31) at (19, 4.5) {};
		\node [style=none] (32) at (17, -2) {};
		\node [style=none] (33) at (11.5, -2) {};
		\node [style=sqr] (34) at (-10.5, 3.5) {$A$};
		\node [style=sqr] (35) at (-1.75, 3.25) {$A$};
		\node [style=sqr] (36) at (5, -2) {$A$};
		\node [style=sqr] (37) at (11.5, -2) {$A$};
		\node [style=none] (38) at (-10.5, 4.5) {};
		\node [style=none] (39) at (-1.75, 4.5) {};
		\node [style=none] (40) at (-5, 4.5) {};
		\node [style=none] (41) at (2.25, 4.5) {};
		\node [style=none] (42) at (5, -3.25) {};
		\node [style=none] (43) at (9, -3.25) {};
		\node [style=none] (44) at (11.5, -3.25) {};
		\node [style=none] (45) at (17, -3.25) {};
	\end{pgfonlayer}
	\begin{pgfonlayer}{edgelayer}
		\draw [in=180, out=-90, looseness=1.25] (2.center) to (1);
		\draw [in=-90, out=0, looseness=1.25] (1) to (3.center);
		\draw (1) to (0);
		\draw [in=90, out=-180] (0) to (4.center);
		\draw [in=90, out=0] (0) to (5.center);
		\draw [in=-180, out=0, looseness=1.25] (16) to (15);
		\draw [in=90, out=-180, looseness=0.75] (16) to (20.center);
		\draw (16) to (17.center);
		\draw (15) to (19.center);
		\draw [in=0, out=-90, looseness=0.75] (18.center) to (15);
		\draw [in=180, out=-90, looseness=1.25] (23.center) to (22);
		\draw [in=-90, out=0, looseness=1.25] (22) to (24.center);
		\draw (22) to (21);
		\draw [in=90, out=-180] (21) to (25.center);
		\draw [in=90, out=0] (21) to (26.center);
		\draw [in=-180, out=0, looseness=1.25] (29) to (28);
		\draw [in=90, out=-180, looseness=0.75] (29) to (33.center);
		\draw (29) to (30.center);
		\draw (28) to (32.center);
		\draw [in=0, out=-90, looseness=0.75] (31.center) to (28);
		\draw (34) to (38.center);
		\draw (18.center) to (40.center);
		\draw (35) to (39.center);
		\draw (3.center) to (41.center);
		\draw (36) to (42.center);
		\draw (26.center) to (43.center);
		\draw (37) to (44.center);
		\draw (32.center) to (45.center);
	\end{pgfonlayer}
\end{tikzpicture}} 
\end{equation}
Because $\tikzcircle{4pt}$ is a classical structure associated to the basis $\ket{j}$, we have that for all $j$,
\be
 \scalebox{0.87}{\begin{tikzpicture}
	\begin{pgfonlayer}{nodelayer}
		\node [style=none] (7) at (-4, 0.75) {$=$};
		\node [style=black] (15) at (-7, -2) {};
		\node [style=white] (16) at (-10.5, 2) {$\nu$};
		\node [style=none] (17) at (-10.5, 3.5) {};
		\node [style=none] (18) at (-5, 3.25) {};
		\node [style=none] (19) at (-7, -3.25) {};
		\node [style=none] (20) at (-12.5, -3.25) {};
		\node [style=none] (27) at (5.5, 0) {$=$};
		\node [style=black] (28) at (2.5, -0.75) {};
		\node [style=white] (29) at (-1, 3.25) {$\nu$};
		\node [style=none] (30) at (-1, 4.5) {};
		\node [style=none] (31) at (4.5, 4.5) {};
		\node [style=none] (32) at (2.5, -2) {};
		\node [style=none] (33) at (-3, -2) {};
		\node [style=sqr] (34) at (-10.5, 3.5) {$A$};
		\node [style=sqr] (37) at (-3, -2) {$A$};
		\node [style=none] (38) at (-10.5, 4.5) {};
		\node [style=none] (40) at (-5, 4.5) {};
		\node [style=none] (44) at (-3, -3.25) {};
		\node [style=none] (45) at (2.5, -3.25) {};
		\node [style=white] (46) at (-17.25, 2) {$\nu$};
		\node [style=none] (47) at (-17.25, 3.5) {};
		\node [style=none] (48) at (-20.25, -3.25) {};
		\node [style=sqr] (49) at (-17.25, 3.5) {$A$};
		\node [style=none] (50) at (-17.25, 4.5) {};
		\node [style=white] (51) at (9.75, 3.25) {$\nu$};
		\node [style=none] (52) at (9.75, 4.5) {};
		\node [style=none] (53) at (6.75, -1.75) {};
		\node [style=sqr] (54) at (6.75, -1.75) {$A$};
		\node [style=none] (55) at (6.75, -3.25) {};
		\node [style=state] (56) at (-14.25, -3.25) {$j$};
		\node [style=state] (57) at (-7, -3.25) {$j$};
		\node [style=state] (58) at (2.5, -3.25) {$j$};
		\node [style=state] (59) at (12.75, -3.25) {$j$};
		\node [style=effect] (60) at (-5, 4.5) {$j$};
		\node [style=effect] (61) at (4.5, 4.5) {$j$};
		\node [style=none] (62) at (-13.5, 0.5) {$=$};
	\end{pgfonlayer}
	\begin{pgfonlayer}{edgelayer}
		\draw [in=-180, out=0, looseness=1.25] (16) to (15);
		\draw [in=90, out=-180, looseness=0.75] (16) to (20.center);
		\draw (16) to (17.center);
		\draw (15) to (19.center);
		\draw [in=0, out=-90, looseness=0.75] (18.center) to (15);
		\draw [in=-180, out=0, looseness=1.25] (29) to (28);
		\draw [in=90, out=-180, looseness=0.75] (29) to (33.center);
		\draw (29) to (30.center);
		\draw (28) to (32.center);
		\draw [in=0, out=-90, looseness=0.75] (31.center) to (28);
		\draw (34) to (38.center);
		\draw (18.center) to (40.center);
		\draw (37) to (44.center);
		\draw (32.center) to (45.center);
		\draw [in=90, out=-180] (46) to (48.center);
		\draw (46) to (47.center);
		\draw (49) to (50.center);
		\draw [in=90, out=180] (51) to (53.center);
		\draw (51) to (52.center);
		\draw (54) to (55.center);
		\draw [in=90, out=0] (46) to (56);
		\draw [in=90, out=0] (51) to (59);
	\end{pgfonlayer}
\end{tikzpicture}} 
\ee
which proves that $A \in \stloc_{\LL}(\nu)$ and thus that $ \cons_{\LL}(\nu)  \subseteq \stloc_{\LL}(\nu)$. 
Conversely, suppose that $A \in \stloc_{\LL}(\nu) \subseteq \cl(\ch_{\LL})$, then there exists $\tilde{A} \in \cons_{\LL}(\nu)$ such that $A = \nu^{\dagger} (\tilde{A} \otimes \id) \nu$ and it follows that
\begin{equation}
\scalebox{1}{ \begin{tikzpicture}
	\begin{pgfonlayer}{nodelayer}
		\node [style=white] (2) at (-3.75, 0) {};
		\node [style=white] (3) at (-1.75, -2) {$\nu$};
		\node [style=none] (4) at (-3.75, 0) {};
		\node [style=none] (5) at (0.25, 0) {};
		\node [style=none] (6) at (-1.75, -3) {};
		\node [style=none] (10) at (0.25, 5) {};
		\node [style=none] (16) at (1.5, -2) {$=$};
		\node [style=sqr] (17) at (2.25, 1) {$\tilde{A}$};
		\node [style=white] (18) at (3.5, 3.75) {$\nu$};
		\node [style=white] (20) at (4.25, -1) {$\nu$};
		\node [style=none] (21) at (2.25, 1) {};
		\node [style=none] (22) at (6.25, 1) {};
		\node [style=none] (23) at (4.25, -2) {};
		\node [style=white] (24) at (4.25, -3) {$\nu$};
		\node [style=none] (29) at (3.5, 5) {};
		\node [style=black] (31) at (6.25, 1) {};
		\node [style=none] (33) at (7.25, 2.5) {};
		\node [style=none] (34) at (7.25, 5) {};
		\node [style=none] (38) at (7.25, -9) {};
		\node [style=none] (39) at (3.5, -9) {};
		\node [style=white] (41) at (3.5, -7.75) {$\nu$};
		\node [style=white] (42) at (-1.75, -4) {$\nu$};
		\node [style=none] (43) at (0.25, -6) {};
		\node [style=none] (44) at (0.25, -9) {};
		\node [style=none] (45) at (-3.75, -9) {};
		\node [style=white] (46) at (-3.75, -6) {$\nu$};
		\node [style=white] (47) at (-3.75, -8) {$\nu$};
		\node [style=sqr] (49) at (-12.25, 1.75) {$\tilde{A}$};
		\node [style=white] (50) at (-10.75, 3.75) {$\nu$};
		\node [style=white] (51) at (-10.75, -0.25) {$\nu$};
		\node [style=white] (52) at (-8.75, -2.25) {$\nu$};
		\node [style=none] (53) at (-10.75, -0.25) {};
		\node [style=none] (54) at (-6.75, -0.25) {};
		\node [style=none] (56) at (-6.75, 5) {};
		\node [style=none] (57) at (-10.75, 5) {};
		\node [style=none] (58) at (-5.5, -2) {$=$};
		\node [style=white] (59) at (-8.75, -4.25) {$\nu$};
		\node [style=none] (60) at (-6.75, -6.25) {};
		\node [style=none] (63) at (-10.75, -6.25) {};
		\node [style=black] (64) at (6.25, -5) {};
		\node [style=none] (95) at (-9.25, 1.75) {};
		\node [style=sqr] (96) at (-5.25, 2) {$\tilde{A}$};
		\node [style=white] (97) at (-3.75, 4) {$\nu$};
		\node [style=white] (98) at (-3.75, 0) {$\nu$};
		\node [style=none] (99) at (-3.75, 0) {};
		\node [style=none] (100) at (-3.75, 5) {};
		\node [style=none] (101) at (-2.25, 2) {};
		\node [style=none] (102) at (2.25, -5) {};
		\node [style=none] (103) at (7.25, -6) {};
		\node [style=none] (133) at (-6.75, -9) {};
		\node [style=none] (134) at (-10.75, -9) {};
		\node [style=white] (138) at (-16.75, -2.25) {$\nu$};
		\node [style=none] (139) at (-18.75, -0.25) {};
		\node [style=none] (140) at (-14.75, -0.25) {};
		\node [style=none] (141) at (-14.75, 5) {};
		\node [style=white] (143) at (-16.75, -4.25) {$\nu$};
		\node [style=none] (144) at (-14.75, -6.25) {};
		\node [style=none] (145) at (-18.75, -6.25) {};
		\node [style=none] (147) at (-14.75, -9) {};
		\node [style=none] (148) at (-18.75, -9) {};
		\node [style=none] (149) at (-13, -2) {$=$};
		\node [style=none] (150) at (-18.75, 5) {};
		\node [style=sqr] (151) at (-18.75, 2.25) {$A$};
		\node [style=white] (152) at (-2.5, -18) {$\nu$};
		\node [style=none] (153) at (-0.5, -20) {};
		\node [style=none] (154) at (-2.5, -17) {};
		\node [style=none] (155) at (-0.5, -25.25) {};
		\node [style=none] (156) at (-4.5, -25.25) {};
		\node [style=white] (157) at (-2.5, -15.75) {$\nu$};
		\node [style=none] (158) at (-0.5, -14) {};
		\node [style=none] (159) at (-4.5, -14) {};
		\node [style=none] (160) at (-7, -18.25) {$=$};
		\node [style=white] (161) at (-12.25, -12.5) {$\nu$};
		\node [style=white] (162) at (-11.5, -17.25) {$\nu$};
		\node [style=none] (163) at (-13.5, -15.25) {};
		\node [style=none] (164) at (-9.5, -15.25) {};
		\node [style=none] (165) at (-11.5, -18.25) {};
		\node [style=white] (166) at (-11.5, -19.25) {$\nu$};
		\node [style=none] (167) at (-12.25, -11.25) {};
		\node [style=black] (168) at (-9.5, -15.25) {};
		\node [style=none] (169) at (-8.5, -13.75) {};
		\node [style=none] (170) at (-8.5, -11.25) {};
		\node [style=none] (171) at (-8.5, -25.25) {};
		\node [style=none] (172) at (-12.25, -25.25) {};
		\node [style=white] (173) at (-12.25, -24) {$\nu$};
		\node [style=black] (174) at (-9.5, -21.25) {};
		\node [style=none] (175) at (-13.5, -21.25) {};
		\node [style=none] (176) at (-8.5, -22.25) {};
		\node [style=none] (177) at (-15.25, -18.25) {$=$};
		\node [style=sqr] (178) at (-13.5, -21.25) {$\tilde{A}$};
		\node [style=white] (179) at (-4.5, -24) {};
		\node [style=none] (180) at (-4.5, -24) {};
		\node [style=sqr] (181) at (-6, -22) {$\tilde{A}$};
		\node [style=white] (182) at (-4.5, -20) {$\nu$};
		\node [style=white] (183) at (-4.5, -24) {$\nu$};
		\node [style=none] (184) at (-4.5, -24) {};
		\node [style=none] (185) at (-3, -22) {};
		\node [style=none] (186) at (-4.5, -11.25) {};
		\node [style=none] (187) at (-0.5, -11.25) {};
		\node [style=white] (188) at (5.25, -17.75) {$\nu$};
		\node [style=none] (189) at (3.25, -20) {};
		\node [style=none] (190) at (7.25, -20) {};
		\node [style=none] (191) at (7.25, -25.25) {};
		\node [style=white] (192) at (5.25, -15.75) {$\nu$};
		\node [style=none] (193) at (7.25, -14) {};
		\node [style=none] (194) at (3.25, -14) {};
		\node [style=none] (195) at (7.25, -11.25) {};
		\node [style=none] (196) at (3.25, -11.25) {};
		\node [style=none] (197) at (3.25, -25.25) {};
		\node [style=sqr] (198) at (3.25, -22.5) {$A$};
		\node [style=none] (199) at (1, -18.25) {$=$};
	\end{pgfonlayer}
	\begin{pgfonlayer}{edgelayer}
		\draw [in=180, out=-90, looseness=1.25] (4.center) to (3);
		\draw [in=0, out=-90, looseness=1.25] (5.center) to (3);
		\draw (3) to (6.center);
		\draw (10.center) to (5.center);
		\draw [in=-180, out=90] (17) to (18);
		\draw [in=180, out=-90, looseness=1.25] (21.center) to (20);
		\draw [in=0, out=-90, looseness=1.25] (22.center) to (20);
		\draw (20) to (23.center);
		\draw (23.center) to (24);
		\draw (29.center) to (18);
		\draw [in=-90, out=0] (31) to (33.center);
		\draw (33.center) to (34.center);
		\draw [in=180, out=0] (18) to (31);
		\draw (41) to (39.center);
		\draw [in=90, out=0, looseness=1.25] (42) to (43.center);
		\draw (43.center) to (44.center);
		\draw [in=90, out=180, looseness=1.25] (42) to (46);
		\draw [in=-180, out=180, looseness=1.25] (46) to (47);
		\draw (47) to (45.center);
		\draw [in=0, out=0, looseness=1.25] (46) to (47);
		\draw (6.center) to (42);
		\draw [in=-90, out=180, looseness=1.25] (51) to (49);
		\draw [in=180, out=90, looseness=1.25] (49) to (50);
		\draw [in=180, out=-90, looseness=1.25] (53.center) to (52);
		\draw [in=0, out=-90, looseness=1.25] (54.center) to (52);
		\draw (56.center) to (54.center);
		\draw (57.center) to (50);
		\draw [in=90, out=0, looseness=1.25] (59) to (60.center);
		\draw [in=90, out=180, looseness=1.25] (59) to (63.center);
		\draw [in=90, out=0] (24) to (64);
		\draw [in=0, out=180] (64) to (41);
		\draw (52) to (59);
		\draw [in=90, out=0, looseness=1.25] (50) to (95.center);
		\draw [in=0, out=-90, looseness=1.25] (95.center) to (53.center);
		\draw [in=-90, out=180, looseness=1.25] (98) to (96);
		\draw [in=180, out=90, looseness=1.25] (96) to (97);
		\draw (100.center) to (97);
		\draw [in=90, out=0, looseness=1.25] (97) to (101.center);
		\draw [in=0, out=-90, looseness=1.25] (101.center) to (99.center);
		\draw [in=90, out=180, looseness=1.25] (24) to (102.center);
		\draw [in=180, out=-90] (102.center) to (41);
		\draw [in=90, out=0, looseness=1.25] (64) to (103.center);
		\draw (103.center) to (38.center);
		\draw (63.center) to (134.center);
		\draw (60.center) to (133.center);
		\draw [in=180, out=-90, looseness=1.25] (139.center) to (138);
		\draw [in=0, out=-90, looseness=1.25] (140.center) to (138);
		\draw (141.center) to (140.center);
		\draw [in=90, out=0, looseness=1.25] (143) to (144.center);
		\draw [in=90, out=180, looseness=1.25] (143) to (145.center);
		\draw (138) to (143);
		\draw (145.center) to (148.center);
		\draw (144.center) to (147.center);
		\draw (150.center) to (151);
		\draw (151) to (139.center);
		\draw [in=0, out=90, looseness=1.25] (153.center) to (152);
		\draw (152) to (154.center);
		\draw (155.center) to (153.center);
		\draw [in=-90, out=0, looseness=1.25] (157) to (158.center);
		\draw (154.center) to (157);
		\draw [in=-90, out=-180, looseness=1.25] (157) to (159.center);
		\draw [in=180, out=-90, looseness=1.25] (163.center) to (162);
		\draw [in=0, out=-90, looseness=1.25] (164.center) to (162);
		\draw (162) to (165.center);
		\draw (165.center) to (166);
		\draw (167.center) to (161);
		\draw [in=-90, out=0] (168) to (169.center);
		\draw (169.center) to (170.center);
		\draw [in=180, out=0] (161) to (168);
		\draw (173) to (172.center);
		\draw [in=90, out=0] (166) to (174);
		\draw [in=0, out=180] (174) to (173);
		\draw [in=90, out=-180, looseness=1.25] (166) to (175.center);
		\draw [in=180, out=-90] (175.center) to (173);
		\draw [in=90, out=0, looseness=1.25] (174) to (176.center);
		\draw (176.center) to (171.center);
		\draw [in=90, out=-180] (161) to (163.center);
		\draw [in=-90, out=180, looseness=1.25] (183) to (181);
		\draw [in=180, out=90, looseness=1.25] (181) to (182);
		\draw [in=90, out=0, looseness=1.25] (182) to (185.center);
		\draw [in=0, out=-90, looseness=1.25] (185.center) to (184.center);
		\draw [in=90, out=-180, looseness=1.25] (152) to (182);
		\draw (184.center) to (156.center);
		\draw (186.center) to (159.center);
		\draw (187.center) to (158.center);
		\draw [in=-180, out=90, looseness=1.25] (189.center) to (188);
		\draw [in=0, out=90, looseness=1.25] (190.center) to (188);
		\draw (191.center) to (190.center);
		\draw [in=-90, out=0, looseness=1.25] (192) to (193.center);
		\draw [in=-90, out=-180, looseness=1.25] (192) to (194.center);
		\draw (188) to (192);
		\draw (194.center) to (196.center);
		\draw (193.center) to (195.center);
		\draw (197.center) to (198);
		\draw (198) to (189.center);
	\end{pgfonlayer}
\end{tikzpicture}}
\end{equation}
which proves that $\stloc_{\LL}(\nu) \subseteq \cons_{\LL}(\nu)$. 
\end{proof}

\section{Proof of Proposition \ref{shapeofextensions}}
\label{proofextensions}

\begin{proof}
Because it is a canonical splitting map, $\chi$ is of the form
\be
\chi : \ch \cong \bigoplus_{i=1}^N (\ch_{\LL}^{i} \otimes \ch_{\R}^{i}) \hookrightarrow (\bigoplus_{i=1}^N  \ch_{\LL}^{i}) \otimes (\bigoplus_{i=1}^N   \ch_{\R}^{i}) = \ch_{\LL} \otimes \ch_{\R} \, .
\ee
We can define, for every $i$, a basis $\ket{e}$ of $\ch_{\LL}^{i}$ and a basis $\ket{f}$ of $\ch_{\R}^{i}$ respectively indexed by the sets $E^{i}$ and $F^{i}$. When $i$ ranges from $1$ to $N$, the $\ket{e} \otimes \ket{f}$ with $e \in E^{i}$ and $f \in F^{i}$  form a basis of $\bigoplus_{i=1}^N (\ch_{\LL}^{i} \otimes \ch_{\R}^{i})$ which, because theyre unitarily isomorphic through the mapping made by $\chi$, gives us a basis $\{ \ket{ef} |  e \in E^{i}, \, f \in F^{i}, \, 1 \leq i \leq N\}$ of $\ch$ such that $\chi\ket{ef} = \ket{e} \otimes \ket{f} \in \chlc \otimes \chrc$. Applying Equation \ref{extension1} to any element of the basis gives us 
\be
\nu\ket{e} \otimes \ket{f} = \ket{e} \otimes \bigcirc \ket{f}
\ee
and therefore that there exists an element $\ket{m_{ef}} \in \ch_{\MM}$ for each $\ket{ef}$ such that 
\be
\nu\ket{e} = \ket{e} \otimes \ket{m_{ef}} \textrm{   and   } \bigcirc\ket{f} =  \ket{m_{ef}} \otimes \ket{f} \, .
\ee
We can immediately remark that this $\ket{m_{ef}}$, because it can be computed separately from $\ket{e}$ and from $\ket{f}$ depends neither on $e$ nor on $f$ but only on the index $i$ such that $e \in E^{i}$ and $f \in F^{i}$. We can therefore write it $\ket{m_i}$. Let us now consider $A \in \cons_{\LL}(\chi) = \bigoplus_{i=1}^{N} \cl(\ch_{\LL}^{i})$. For any $e \in E^{i}$ and $\ket{m} = \ket{m_j}$, we have that $A \ket{e} \in \ch_{\LL}^i$ and thus that 
\be
(A \otimes \id) \nu \nu^{\dagger} (\ket{e} \otimes \ket{m_j}) = \delta_{m_i m_j} A\ket{e} \otimes \ket{m_j} = \nu \nu^{\dagger} (A\ket{e} \otimes \ket{m_j}) \, .
\ee
As the $\ket{e} \otimes \ket{m_j}$ form a basis of $\ch_{\LL}^{\nu} \otimes \ch_{\R}^{\nu}$, we obtain that $(A \otimes \id) \nu \nu^{\dagger} = \nu \nu^{\dagger} (A \otimes \id)$ and thus that $A \in \cons_{\LL}(\nu)$.
Conversely, let $A \in \cons_{\LL}(\nu) $, thanks to Equation \ref{extension2}, we can compute that 
\begin{equation}
 \begin{tikzpicture}
	\begin{pgfonlayer}{nodelayer}
		\node [style=white] (0) at (1, -1) {$\chi$};
		\node [style=white] (1) at (1, 1) {$\chi$};
		\node [style=none] (2) at (-1, 3.25) {};
		\node [style=none] (3) at (3, 3.25) {};
		\node [style=none] (4) at (-1, -3.25) {};
		\node [style=none] (5) at (3, -3.25) {};
		\node [style=none] (6) at (5, 0) {$=$};
		\node [style=white] (8) at (16.5, 1.25) {$\nu$};
		\node [style=none] (11) at (14.5, 6.5) {};
		\node [style=none] (12) at (16.5, 0) {};
		\node [style=none] (13) at (22, 0) {};
		\node [style=none] (14) at (20, 7.5) {};
		\node [style=white] (16) at (16.5, -1.25) {$\nu$};
		\node [style=none] (17) at (16.5, 0) {};
		\node [style=none] (18) at (22, 0) {};
		\node [style=none] (19) at (20, -6.5) {};
		\node [style=none] (20) at (14.5, -6.5) {};
		\node [style=white] (21) at (20, 5.25) {};
		\node [style=white] (22) at (20, -5.25) {};
		\node [style=sqr] (23) at (-1, 3.25) {$A$};
		\node [style=white] (24) at (9.5, -4.25) {$\chi$};
		\node [style=white] (25) at (9.5, -2.25) {$\chi$};
		\node [style=none] (26) at (7.5, 0) {};
		\node [style=none] (27) at (11.5, 0) {};
		\node [style=none] (28) at (7.5, -6.5) {};
		\node [style=none] (29) at (11.5, -6.5) {};
		\node [style=white] (30) at (9.5, 2.25) {$\chi$};
		\node [style=white] (31) at (9.5, 4.25) {$\chi$};
		\node [style=none] (32) at (7.5, 6.5) {};
		\node [style=none] (33) at (11.5, 6.5) {};
		\node [style=none] (34) at (7.5, 0) {};
		\node [style=none] (35) at (11.5, 0) {};
		\node [style=sqr] (36) at (7.5, 6.5) {$A$};
		\node [style=none] (37) at (13.5, 0) {$=$};
		\node [style=sqr] (38) at (14.5, 6.5) {$A$};
		\node [style=white] (39) at (3.5, -13.25) {$\nu$};
		\node [style=none] (40) at (1.5, -8) {};
		\node [style=none] (41) at (3.5, -14.5) {};
		\node [style=none] (42) at (9, -14.5) {};
		\node [style=none] (43) at (7, -8) {};
		\node [style=white] (44) at (3.5, -15.75) {$\nu$};
		\node [style=none] (45) at (3.5, -14.5) {};
		\node [style=none] (46) at (9, -14.5) {};
		\node [style=none] (47) at (7, -22) {};
		\node [style=none] (48) at (1.5, -21) {};
		\node [style=white] (49) at (7, -9.25) {};
		\node [style=white] (50) at (7, -19.75) {};
		\node [style=none] (51) at (0.5, -14.5) {$=$};
		\node [style=sqr] (52) at (1.5, -21) {$A$};
		\node [style=white] (53) at (14.5, -15.5) {$\chi$};
		\node [style=white] (54) at (14.5, -13.5) {$\chi$};
		\node [style=none] (55) at (12.5, -11.25) {};
		\node [style=none] (56) at (16.5, -11.25) {};
		\node [style=none] (57) at (12.5, -17.75) {};
		\node [style=none] (58) at (16.5, -17.75) {};
		\node [style=none] (59) at (10.5, -14.5) {$=$};
		\node [style=sqr] (60) at (12.5, -17.75) {$A$};
		\node [style=none] (61) at (-1, 4.5) {};
		\node [style=none] (62) at (3, 4.5) {};
		\node [style=none] (63) at (7.5, 7.5) {};
		\node [style=none] (64) at (11.5, 7.5) {};
		\node [style=none] (65) at (14.5, 7.5) {};
		\node [style=none] (67) at (1.5, -22) {};
		\node [style=none] (68) at (12.5, -19) {};
		\node [style=none] (69) at (16.5, -19) {};
	\end{pgfonlayer}
	\begin{pgfonlayer}{edgelayer}
		\draw [in=180, out=-90, looseness=1.25] (2.center) to (1);
		\draw [in=-90, out=0, looseness=1.25] (1) to (3.center);
		\draw (1) to (0);
		\draw [in=90, out=-180] (0) to (4.center);
		\draw [in=90, out=0] (0) to (5.center);
		\draw [in=-180, out=-90, looseness=0.75] (11.center) to (8);
		\draw (8) to (12.center);
		\draw [in=90, out=-180, looseness=0.75] (16) to (20.center);
		\draw (16) to (17.center);
		\draw [in=180, out=0] (16) to (22);
		\draw [in=-90, out=0, looseness=0.75] (22) to (18.center);
		\draw (22) to (19.center);
		\draw [in=-180, out=0] (8) to (21);
		\draw (21) to (14.center);
		\draw [in=90, out=0, looseness=0.75] (21) to (13.center);
		\draw [in=180, out=-90, looseness=1.25] (26.center) to (25);
		\draw [in=-90, out=0, looseness=1.25] (25) to (27.center);
		\draw (25) to (24);
		\draw [in=90, out=-180] (24) to (28.center);
		\draw [in=90, out=0] (24) to (29.center);
		\draw [in=180, out=-90, looseness=1.25] (32.center) to (31);
		\draw [in=-90, out=0, looseness=1.25] (31) to (33.center);
		\draw (31) to (30);
		\draw [in=90, out=180, looseness=1.25] (30) to (34.center);
		\draw [in=90, out=0, looseness=1.25] (30) to (35.center);
		\draw [in=-180, out=-90, looseness=0.75] (40.center) to (39);
		\draw (39) to (41.center);
		\draw [in=90, out=-180, looseness=0.75] (44) to (48.center);
		\draw (44) to (45.center);
		\draw [in=180, out=0] (44) to (50);
		\draw [in=-90, out=0, looseness=0.75] (50) to (46.center);
		\draw (50) to (47.center);
		\draw [in=-180, out=0] (39) to (49);
		\draw (49) to (43.center);
		\draw [in=90, out=0, looseness=0.75] (49) to (42.center);
		\draw [in=180, out=-90, looseness=1.25] (55.center) to (54);
		\draw [in=-90, out=0, looseness=1.25] (54) to (56.center);
		\draw (54) to (53);
		\draw [in=90, out=-180] (53) to (57.center);
		\draw [in=90, out=0] (53) to (58.center);
		\draw (52) to (67.center);
		\draw (60) to (68.center);
		\draw (58.center) to (69.center);
		\draw (23) to (61.center);
		\draw (3.center) to (62.center);
		\draw (36) to (63.center);
		\draw (33.center) to (64.center);
		\draw (38) to (65.center);
	\end{pgfonlayer}
\end{tikzpicture} 
\end{equation}
proving that $A \in \cons_{\LL}(\chi)$ and therefore that $\cons_{\LL}(\nu) \subseteq  \cons_{\LL}(\nu)$. This proves that if $\chi$ is an extension of $\nu$, then $\cons_{\LL}(\chi) = \cons_{\LL}(\nu)$. 

Conversely, suppose that $\nu : \ch_{\LL} \rightarrow \ch_{\LL} \otimes \ch_{\MM}$ is a minimal splitting map and $\chi : \ch \rightarrow \ch_{\LL} \otimes \ch_{\R}$ a canonical splitting map such that $\cons_{\LL}(\chi) = \cons_{\LL}(\nu)$. By definition, we know that $\nu$ and $\chi$ are respectively of the form 
\begin{equation}
\nu : \ch_{\LL} = \bigoplus_{i=1}^N (\ch_{\LL}^{i} \otimes \mathbb{C}) \hookrightarrow (\bigoplus_{i=1}^N  \ch_{\LL}^{i}) \otimes (\bigoplus_{i=1}^N   \mathbb{C}) = \ch_{\LL} \otimes \mathbb{C}^{N} \, , 
\end{equation}
\begin{equation}
\chi : \ch \cong \bigoplus_{j=1}^M (\tilde{\ch}_{\LL}^{j} \otimes \tilde{\ch}_{\R}^{j}) \hookrightarrow (\bigoplus_{j=1}^M  \tilde{\ch}_{\LL}^{j}) \otimes (\bigoplus_{j=1}^M   \tilde{\ch}_{\R}^{j}) = \ch_{\LL} \otimes \ch_{\R} \, .
\end{equation}
Their respective algebras of left-consistent operators are $\cons_{\LL}(\nu) = \bigoplus_{i = 1}^{N} \cl(\ch_{\LL}^{i})$ and $\cons_{\LL}(\chi) = \bigoplus_{j = 1}^{M} \cl(\tilde{\ch}_{\LL}^{i})$. The two algebras being equal implies that $N = M$ and that the $\ch_{\LL}^{i}$ and $\tilde{\ch_{\LL}^{j}}$ can be identified, yielding that
\begin{equation}
\chi : \ch \cong \bigoplus_{i=1}^N (\ch_{\LL}^{i} \otimes \ch_{\R}^{i}) \hookrightarrow (\bigoplus_{i=1}^N  \ch_{\LL}^{i}) \otimes (\bigoplus_{i=1}^N   \ch_{\R}^{i}) = \ch_{\LL} \otimes \ch_{\R} \, .
\end{equation}
Choosing
\be
\begin{split}
\bigcirc : & \chrc \rightarrow \mathbb{C}^{N} \otimes \chrc \\
& \ket{\psi} \in \ch_{\R}^{i} \rightarrow \ket{i} \otimes \ket{\psi} \, ,
\end{split}
\ee
one can check that Equations \ref{extension1} and \ref{extension2} are satisfied, which proves that $\chi$ is an extension of $\nu$.
\end{proof}

\section{Proof of Proposition \ref{localapplicabilityonly}}
\label{proofoflocalapplicabilityonly}

The proof of Proposition \ref{localapplicabilityonly} follows the same reasoning as the one of Theorem \ref{linearity} but the absence of Equation \ref{probas} makes it impossible to prove and thus to use Proposition \ref{propositionnorms}.

\begin{proof}
Let $\nu$ be a canonical splitting map and let us prove that a local evolution associated to $\nu$ is a family of linear maps of the form $\cl_{\chi} = L \tc \id$. Let $\chi$ be a splitting map extending $\nu$ and let us prove that $\cl_{\chi}$ is linear. We remind that $\nu$ and $\chi$ are of the form 
\begin{equation}
\nu : \ch_{\LL} = \bigoplus_{i=1}^N (\ch_{\LL}^{i} \otimes \mathbb{C}) \hookrightarrow (\bigoplus_{i=1}^N  \ch_{\LL}^{i}) \otimes (\bigoplus_{i=1}^N   \mathbb{C}) = \ch_{\LL} \otimes \mathbb{C}^{N} \, , 
\end{equation}
\begin{equation}
\chi : \ch = \bigoplus_{i=1}^N (\ch_{\LL}^{i} \otimes \ch_{\R}^{i}) \hookrightarrow (\bigoplus_{i=1}^N  \ch_{\LL}^{i}) \otimes (\bigoplus_{i=1}^N   \ch_{\R}^{i}) = \ch_{\LL} \otimes \ch_{\R} \, .
\end{equation}
Let us now define the family of splitting maps $\chi_{k \cdots l / l+1 \cdots m}$, for $k \leq l \leq m \in \mathbb{Z}^{*}$, by 
\begin{equation}
\chi_{k \cdots l / l+1 \cdots m} : \bigoplus_{i=1}^N \left( \ch_{k}^{i} \otimes \cdots \otimes \ch_{m}^{i} \right) \hookrightarrow \left( \bigoplus_{i=1}^N \ch_{k}^{i} \otimes \cdots \otimes \ch_{l}^{i} \right) \otimes \left( \bigoplus_{i=1}^N \ch_{l+1}^{i} \otimes \cdots \otimes \ch_{m}^{i} \right)
\end{equation}
where $\ch_{j}^{i} = \ch_{\LL}^{i}$ when $j$ is odd and $\ch_{j}^{i} = \ch_{\R}^{i}$ when $j$ is even. We will write $\chi_{/l \cdots m}$ when $k=l$ and $\chi_{k \cdots l/ }$ when $l=m$. The $\chi_{k \cdots l / l+1 \cdots m}$ form a family of  canonical splitting maps such that :
\begin{itemize}
\item $\chi_{1/} = \nu$,
\item $\chi_{1/2} = \chi$,
\item $\chi_{1/2 \cdots l}$ is an extension of $\nu$ for all $l \geq 1$,
\item $\chi_{k \cdots l/l+1 \cdots m} \sqsubseteq \chi_{k \cdots l'/l' +1 \cdots m}$ if and only if $k \leq l \leq l' \leq m$ and this comprehension can be witnessed through the isometries $\mu = \chi_{k \cdots l /l+1 \cdots l'}$ and $\xi = \chi_{l \cdots l' / l'+1 \cdots m}$.
\end{itemize}
Let us call for every $1 \leq i \leq N$, $\ch^{i} = \ch_{\LL}^{i} \otimes \ch_{\R}^{i}$ and $d_i = \dim\left(\ch^{i}\right)$. Let $\{ \ket{j} \}_{1 \leq j \leq d_i}$ be an orthonormal basis of $\ch^{i}$ and define $\ket{\phi_{i}} = \sum_{j=1}^{d_i} \frac{1}{\sqrt{d_i}} \ket{jj}$, $\ket{\phi_{+}} = \sum_{i = 1}^{N} \frac{1}{\sqrt{N}} \ket{\phi_{i}}$ as well as $\ket{\phi} = \sum_{i = 1}^{N} d_i\sqrt{N}\ket{\phi_{i}}$ which is an unormalised state. Let 
\be
\ket{\psi} = \sum_{k=1}^N \sum_{j=1}^{d_k} \alpha_j \ket{j} \in \ch = \bigoplus_i \ch_{\LL}^{i} \otimes \ch_{\R}^{i}\, , 
\ee
be a state of norm one and let us denote 
\be
\beta_{\psi} = \left\Vert \ket{\phi_+} \tc_{1234/56} \ket{\psi} \right\Vert \, ,
\ee
and
\be
\gamma_{\psi} = \left\Vert \left (\id \tc_{12/3456 }\sum_{k = 1}^N \ketbra{\phi_k}{\phi_k} \right) \frac{\ket{\phi_+} \tc_{1234/56} \ket{\psi}}{\left\Vert \ket{\phi_+} \tc_{1234/56} \ket{\psi} \right\Vert}  \right\Vert   \, .
\ee
Note that whichever $\ket{\psi}$ has been chosen, $\beta_{\psi}$ and $\gamma_{\psi}$ are non-zero and that by Equations \ref{extension} and \ref{update} the real positive numbers 
\be
\upsilon_{\psi} = \left\Vert  \ml_{\chi_{1/234}}\left(\ket{\phi_+}\right) \tc_{1234/56} \ket{\psi} \right\Vert \, 
\ee
\be
\eta_{\psi} = \left\Vert \left( \id \tc_{12/3456 }\sum_{k = 1}^N \ketbra{\phi_k}{\phi_k} \right) \ml_{\chi_{1/23456}} \left( \frac{\ket{\phi_+} \tc_{1234/56} \ket{\psi}}{\beta_{\psi}} \right) \right\Vert
\ee
and 
\be
\tau_{\psi} = \left\Vert \ml_{\chi_{1/2}}(\ket{\psi}) \tc_{12/3456} \left(\frac{1}{\lambda} \sum_{k = 1}^N \frac{1}{d_k\sqrt{N}} \ket{\phi_k}\right) \right\Vert \, ,
\ee
where
\be
\lambda = \left\Vert \sum_{k = 1}^N \frac{1}{d_k\sqrt{N}} \ket{\phi_k} \right\Vert = \sqrt{\sum_{k=1}^{N}\frac{1}{Nd_k^{2}}} \, ,
\ee
are also non-zero. We can then compute that 
\be
\begin{split}
& \frac{1}{\eta_{\psi}\upsilon_{\psi}} \left( \id \tc_{12/3456 }\sum_{k = 1}^N \ketbra{\phi_k}{\phi_k} \right) \ml_{\chi_{1/234}}\left(\ket{\phi_+}\right) \tc_{1234/56} \ket{\psi}\\
= & \frac{1}{\eta_{\psi}} \left( \id \tc_{12/3456 }\sum_{k = 1}^N \ketbra{\phi_k}{\phi_k} \right) \ml_{\chi_{1/23456}} \left( \frac{\ket{\phi_+} \tc_{1234/56} \ket{\psi}}{\beta_{\psi}} \right) \\
= & \ml_{\chi_{1/23456}}  \left( \frac{1}{\beta_{\psi} \gamma_{\psi}} \left( \id \tc_{12/3456 }\sum_{k = 1}^N \ketbra{\phi_k}{\phi_k} \right) \ket{\phi_+} \tc_{1234/56} \ket{\psi}\right) \\
= & \ml_{\chi_{1/23456}}\left( \frac{1}{\beta_{\psi} \gamma_{\psi}} \left( \id \tc_{12/3456 }\sum_{k = 1}^N \ketbra{\phi_k}{\phi_k} \right) \sum_{i = 1}^N \sum_{j = 1}^{d_i} \frac{1}{\sqrt{Nd_i}} \left(\ket{j} \tc_{12/34} \ket{j}\right) \tc_{1234/56} \ket{\psi}\right) \\
= & \ml_{\chi_{1/23456}}\left(\frac{1}{\beta_{\psi} \gamma_{\psi}}\left(\id \tc_{12/3456} \sum_{k = 1}^N \ketbra{\phi_k}{\phi_k}\right) \sum_{i = 1}^N \sum_{j = 1}^{d_i} \frac{1}{\sqrt{Nd_i}} \ket{j} \tc_{12/3456} \left(\ket{j} \tc_{34/56} \ket{\psi}\right)\right) \\
= & \ml_{\chi_{1/23456}}\left(\frac{1}{\beta_{\psi} \gamma_{\psi}} \sum_{i = 1}^N \sum_{j = 1}^{d_i} \frac{1}{\sqrt{Nd_i}} \ket{j} \tc_{12/3456} \left(\sum_{k = 1}^N \ketbra{\phi_k}{\phi_k} \left(\ket{j} \tc_{34/56} \ket{\psi}\right)\right)\right)\\
= & \ml_{\chi_{1/23456}}\left(\frac{1}{\beta_{\psi} \gamma_{\psi}} \sum_{i = 1}^N \sum_{j = 1}^{d_i} \frac{1}{\sqrt{Nd_i}} \ket{j} \tc_{12/3456} \left(\sum_{k = 1}^N  \ket{\phi_k} \sum_{l = 1}^{d_k} \frac{1}{d_k}\left(\bra{l} \tc_{34/56} \bra{l}\right) \left(\ket{j} \tc_{34/56} \ket{\psi}\right)\right)\right)\\
= & \ml_{\chi_{1/23456}}\left(\frac{1}{\beta_{\psi} \gamma_{\psi}} \sum_{i = 1}^N \sum_{j = 1}^{d_i} \frac{1}{\sqrt{Nd_i}} \ket{j} \tc_{12/3456} \left(\sum_{k = 1}^N \frac{\alpha_j}{d_k} \ket{\phi_k}\right)\right)\\
= & \ml_{\chi_{1/23456}}\left( \frac{1}{\beta_{\psi} \gamma_{\psi}} \sum_{i = 1}^N \sum_{j = 1}^{d_i} \alpha_j \ket{j} \tc_{12/3456} \left(\sum_{k = 1}^N \frac{1}{d_k\sqrt{N}} \ket{\phi_k}\right)\right)\\
= & \ml_{\chi_{1/23456}}\left(\frac{1}{\beta_{\psi} \gamma_{\psi}} \ket{\psi} \tc_{12/3456} \left(\sum_{k = 1}^N \frac{1}{d_k\sqrt{N}} \ket{\phi_k}\right)\right)\\ 
= &  \ml_{\chi_{1/23456}}\left( \frac{\lambda}{\beta_{\psi} \gamma_{\psi}} \ket{\psi} \tc_{12/3456} \left(\frac{1}{\lambda} \sum_{k = 1}^N \frac{1}{d_k\sqrt{N}} \ket{\phi_k}\right) \right) \\
= & \frac{1}{\tau_{\psi}} \ml_{\chi_{1/2}}(\ket{\psi}) \tc_{12/3456} \left(\frac{1}{\lambda} \sum_{k = 1}^N \frac{1}{d_k\sqrt{N}} \ket{\phi_k}\right)
\end{split}
\ee
The first equality comes from applying Equation \ref{extension} and the second from applying Equation \ref{update} to the right $\chi_{1/23456}$-local measurement 
\be
\id \tc_{12/3456} m = \id \tc_{12/3456} \sum_{k=1}^N \ketbra{\phi_k}{\phi_k} = \id \tc_{1/23456} \sum_{k=1}^N (\id \tc_{2/3456} \ketbra{\phi_k}{\phi_k}) \, .
\ee
The next equalities are simple computations allowing us to rewrite the state on which $\ml_{\chi_{1/23456}}$ is applied as a different $\chi$-product. And the last equality comes from applying Equation \ref{extension} to the normalised state 
\be
\begin{split}
& \frac{1}{\beta_{\psi} \gamma_{\psi}} \ket{\psi} \tc_{12/3456} \left(\sum_{k = 1}^N \frac{1}{d_k\sqrt{N}} \ket{\phi_k}\right)\\
= & \frac{\lambda}{\beta_{\psi} \gamma_{\psi}} \ket{\psi} \tc_{12/3456} \left(\frac{1}{\lambda} \sum_{k = 1}^N \frac{1}{d_k\sqrt{N}} \ket{\phi_k}\right) \, .
\end{split}
\ee
It follows that for every (normalised) state $\ket{\psi} = \sum_{k=1}^{N} \ket{\psi_k} \in \ch$, there exists a non-zero real positive number $\kappa_{\psi} = \frac{\tau_{\psi}}{\eta_{\psi} \upsilon_{\psi}}$ such that 
\be
\begin{split}
& \kappa_{\psi} \left( \id \tc_{12/3456 }\sum_{k = 1}^N \ketbra{\phi_k}{\phi_k} \right) \ml_{\chi_{1/234}}\left(\ket{\phi_+}\right) \tc_{1234/56} \ket{\psi} \\
= & \ml_{\chi_{1/2}}\left(\ket{\psi}\right) \tc_{12/3456} \left(\sum_{k = 1}^N \frac{1}{d_k\sqrt{N}} \ket{\phi_k} \right)
\end{split}
\ee
which allows us to compute that for any state $\ket{\eta} = \sum_{k=1}^{N} \ket{\eta_k} \in \ch$,
\be
\begin{split}
&  \kappa_{\psi} \left(\bra{\eta} \tc_{12/3456} \bra{\phi}\right) \ml_{\chi_{1/234}}\left(\ket{\phi_+}\right) \tc_{1234/56} \ket{\psi} \\
= & \kappa_{\psi} \left(\bra{\eta} \tc_{12/3456} \left( \bra{\phi} \sum_{k=1}^N \ketbra{\phi_k}{\phi_k} \right) \right) \ml_{\chi_{1/234}}\left(\ket{\phi_+}\right) \tc_{1234/56} \ket{\psi} \\
= & \kappa_{\psi} \left(\bra{\eta} \tc_{12/3456} \bra{\phi}\right)\left( \id \tc_{12/3456 }\sum_{k = 1}^N \ketbra{\phi_k}{\phi_k} \right) \ml_{\chi_{1/234}}\left(\ket{\phi_+}\right) \tc_{1234/56} \ket{\psi} \\
= &  \left(\bra{\eta} \tc_{12/3456} \bra{\phi}\right) \ml_{\chi_{1/2}}\left(\ket{\psi}\right) \tc_{12/3456} \left(\sum_{k = 1}^N \frac{1}{d_k\sqrt{N}} \ket{\phi_k} \right) \\
= & \left( \sum_{k=1}^{N} \bra{\eta_k} \otimes d_k\sqrt{N}\bra{\phi_k} \right) \left( \sum_{k=1}^{N}\ml_{\chi_{1/2}}\left(\ket{\psi}\right)_k \otimes \frac{1}{d_k\sqrt{N}} \ket{\phi_k} \right) \\
= & \sum_{k=1}^{N} \bra{\eta_k}\ml_{\chi_{1/2}}\left(\ket{\psi}\right)_k \braket{\phi_k}{\phi_k} \\
= & \bra{\eta} \ml_{\chi_{1/2}}\left(\ket{\psi}\right) \, .
\end{split}
\ee
Denoting 
\be
\tilde{\ml_{\chi}} : \ket{\psi} \rightarrow \frac{1}{\kappa_{\psi}}\ml_{\chi}(\ket{\psi}) 
\ee
 we obtain  for all normalised states $\ket{\psi_1},\ket{\psi_2}, \ket{\eta}$ of $\ch$ and for all $\alpha_1, \alpha_2 \in \mathbb{C}$ such that $\alpha_1 \ket{\psi_1} + \alpha_2 \ket{\psi_2}$ is a normalised state (i.e. such that $\alpha_1^{2} + \alpha_2^{2} = 1$)
\begin{equation}
\begin{split}
\bra{\eta} \tilde{\ml_{\chi}} \left(\alpha_1 \ket{\psi_1} + \alpha_2 \ket{\psi_2}\right)  = & \left(\bra{\eta} \tc_{12/3456} \bra{\phi}\right)\left( \ml_{\chi_{1/234}}\left(\ket{\phi_+}\right) \tc_{1234/56} \left( \alpha_1 \ket{\psi_1} + \alpha_2 \ket{\psi_2} \right) \right) \\
 = & \alpha_1 \left(\bra{\eta} \tc_{12/3456} \bra{\phi}\right)\left( \ml_{\chi_{1/234}}\left(\ket{\phi_+}\right) \chi_{1234/56} \ket{\psi_1}\right) \\
&  + \alpha_2 \left(\bra{\eta} \tc_{12/3456} \bra{\phi}\right)\left( \ml_{\chi_{1/234}}\left(\ket{\phi_+}\right) \tc_{1234/56} \ket{\psi_2}\right) \\
 = & \alpha_1 \bra{\eta} \tilde{\ml_{\chi}} \left(\ket{\psi_1} \right) +  \alpha_2 \bra{\eta} \tilde{\ml_{\chi}} \left(\ket{\psi_2} \right) \, .
\end{split}
\end{equation} 
and thus that $\tilde{\ml_{\chi}}$ is the function obtained by restricting an element $A_{\chi}$ of $\cl(\ch)$ to acting only on norm-one elements. We will consider that $\tilde{\ml_{\chi}}$ is itself the linear map and not only its restriction to states and thus that $\tilde{\ml_{\chi}} = A_{\chi}$. Also identifying $\ml_{\chi}$ with its action on the whole Hilbert space, we obtain that 
\be
\ml_{\chi} : \ket{\psi} \rightarrow \frac{1}{\kappa_{\psi}} A_{\chi} \ket{\psi} \, .
\ee
Moreover, the fact that $\ml_{\chi}$ sends normalised states to normalised states forces $A_{\chi}$ to be invertible and the real and positive number $\kappa_{\psi}$ to be equal to $\kappa_{\psi} = \left\Vert A_{\chi} \ket{\psi} \right\Vert$. 

Note that for any extension $\chi$ of $\nu$, the equality
\begin{equation}
(\cdot \tn (\cdot \tc \cdot)) = ((\cdot \tc \cdot) \tc_{/2}\cdot) 
\end{equation}
is witnessing the comprehension $\chi \sqsubseteq \chi$ and it follows, by Equation \ref{extension}, that for any $\ket{\psi}$ and $\ket{\phi}$ such that $\ket{\psi} \tc \ket{\phi} \neq 0$,
\be
\frac{A_{\chi}(\ket{\psi} \tc \ket{\phi})}{\left\Vert A_{\chi}(\ket{\psi} \tc \ket{\phi}) \right\Vert} = \frac{A_{\nu}\ket{\psi} \tc \ket{\phi}}{\lv A_{\nu}\ket{\psi} \tc \ket{\phi} \rv} \, .
\ee
It follows that for all such pair of states $\ket{\psi}$ and $\ket{\phi}$, there exists a positive real number $\alpha_{\psi \phi}$ such that 
\be
A_{\chi}(\ket{\psi} \tc \ket{\phi}) = \alpha_{\psi \phi}A_{\nu}\ket{\psi} \tc \ket{\phi} \, .
\ee
The linearity of $A_{\chi}$, $A_{\nu}$ and the bilinearity of $\tc$ forces all the $\alpha_{\psi \phi}$ to be equal and we can therefore write that 
\be 
A_{\chi}(\ket{\psi} \tc \ket{\phi}) = \alpha A_{\nu}\ket{\psi} \tc \ket{\phi} 
\ee
with $\alpha$ a positive real constant, which is equivalent to 
\be
A_{\chi} \chi^{\dagger} = \chi^{\dagger}(\alpha A_{\nu} \otimes \id) \, ,
\ee
from which we can deduce that 
\be
A_{\chi} = A_{\chi}\chi^{\dagger}\chi = \chi^{\dagger} (\alpha A_{\nu} \otimes \id) \chi = \alpha A_{\nu}\tc \id \, . 
\ee
In the case where $\chi = \nu$, this tells us that $A_{\nu} = \alpha A_{\nu} \tn \id$ and is thus left $\nu$-local and therefore strictly left $\nu$ local, because $\nu$ is canonical. Finally we can check that for all $\ket{\psi}$,
\be
\ml_{\chi}(\ket{\psi}) = \frac{A_{\chi}\ket{\psi}}{\lv A_{\chi}\ket{\psi} \rv} = \frac{(\alpha A_{\nu}\tc \id) \ket{\psi}}{\lv (\alpha A_{\nu}\tc \id)\ket{\psi} \rv} = \frac{\alpha ( A_{\nu}\tc \id) \ket{\psi}}{\alpha \lv (A_{\nu}\tc \id)\ket{\psi} \rv} = \frac{ ( A_{\nu}\tc \id) \ket{\psi}}{ \lv (A_{\nu}\tc \id)\ket{\psi} \rv} 
\ee
which concludes the proof.

\end{proof}

\end{document}